\documentclass[11pt]{article}

%

%

\usepackage{a4wide}
\usepackage{fncylab,enumerate,wrapfig,placeins}

\usepackage{graphicx} 
\usepackage{amsmath,amsthm,amsfonts,amssymb}  


\usepackage{etoolbox}

\usepackage[usenames,dvipsnames]{color}
\usepackage{color}

 \usepackage[colorlinks,%
linkcolor=BrickRed,%
filecolor =BrickRed,%
citecolor=RoyalPurple,%
]{hyperref}

\usepackage{figlatex}


\def\mindex#1{\index{#1}}




\newcommand{\bbblot}{\raise1pt\hbox{\vrule height .4ex width .4ex depth .05ex}}







\long\def\defbox#1{\framebox[.9\hsize][c]{\parbox{.85\hsize}{%
\parindent=0pt
\baselineskip=12pt plus .1pt      
\parskip=6pt plus 1.5pt minus 1pt 
 #1}}}


\long\def\beginbox#1\endbox{\subsection*{}%
\hbox{\hspace{.05\hsize}\defbox{\medskip#1\bigskip}}%
\subsection*{}}

\def\endbox{}

 
 \def\archival#1{} 
 
 \def\notes#1{\typeout{check notes!!!}}   



\def\FRAC#1#2#3{\genfrac{}{}{}{#1}{#2}{#3}}

\def\ddtp{{\mathchoice{\FRAC{1}{d^{\hbox to 2pt{\rm\tiny +\hss}}}{dt}}%
{\FRAC{1}{d^{\hbox to 2pt{\rm\tiny +\hss}}}{dt}}%
{\FRAC{3}{d^{\hbox to 2pt{\rm\tiny +\hss}}}{dt}}%
{\FRAC{3}{d^{\hbox to 2pt{\rm\tiny +\hss}}}{dt}}}}

\def\ddyp{{\mathchoice{\FRAC{1}{d^{\hbox to 2pt{\rm\tiny +\hss}}}{dy}}%
{\FRAC{1}{d^{\hbox to 2pt{\rm\tiny +\hss}}}{dy}}%
{\FRAC{3}{d^{\hbox to 2pt{\rm\tiny +\hss}}}{dy}}%
{\FRAC{3}{d^{\hbox to 2pt{\rm\tiny +\hss}}}{dy}}}}

\def\half{{\mathchoice{\FRAC{1}{1}{2}}%
{\FRAC{1}{1}{2}}%
{\FRAC{3}{1}{2}}%
{\FRAC{3}{1}{2}}}}



\def\sign{{\rm sign}}

\newsavebox{\junk}
\savebox{\junk}[1.6mm]{\hbox{$|\!|\!|$}}

\def\limsup{\mathop{\rm lim{\,}sup}}

\def\argmin{\mathop{\rm arg{\,}min}}
\def\argmax{\mathop{\rm arg{\,}max}}


\def\dstate{{\sf X_\diamond}}






\def\bfmath#1{{\mathchoice{\mbox{\boldmath$#1$}}%
{\mbox{\boldmath$#1$}}%
{\mbox{\boldmath$\scriptstyle#1$}}%
{\mbox{\boldmath$\scriptscriptstyle#1$}}}}

\def\bfmhaI{\bfmath{\hat I}}

\def\bfDelta{\bfmath{\Delta}}



\def\bfmx{\bfmath{x}}

\def\bfmA{\bfmath{A}}

\def\bfmhaI{\bfmath{\haI}}

\def\bfmQ{\bfmath{Q}}

\def\bfmU{\bfmath{U}}

\def\bfmX{\bfmath{X}}

\def\bfmY{\bfmath{Y}}

\def\bfmhhaY{\bfmath{\hhaY}} 
\def\bfmhhaY{\hbox to 0pt{$\widehat{\bfmY}$\hss}\widehat{\phantom{\raise 1.25pt\hbox{$\bfmY$}}}}

\def\bfmhaW{\bfmath{\haW}}

\def\bfmhaW{\bfmath{\widehat W}}




\def\hah{{\hat h}}



\def\tilQ{{\widetilde Q}}

\def\tilW{\widetilde W}
\def\tilX{\widetilde X}


\def\tilx{\tilde x}


\def\clE{{\cal E}}

\def\clG{{\cal G}}

\def\clN{{\cal N}}



\def\eqdef{\mathbin{:=}}

\def\Prob{{\sf P}}

\def\Expect{{\sf E}}

\def\lgmath#1{{\mathchoice{\mbox{\large #1}}%
{\mbox{\large #1}}%
{\mbox{\tiny #1}}%
{\mbox{\tiny #1}}}}

\def\One{{\mathchoice{\lgmath{\sf 1}}%
{\mbox{\sf 1}}%
{\mbox{\tiny \sf 1}}%
{\mbox{\tiny \sf 1}}}}

 \def\eq#1/{(\ref{#1})}

\def\ind{\bbbone}
  
 \def\epsy{\varepsilon}








\theoremstyle{remark}

\newtheorem{theorem}{Theorem}[section]
\newtheorem{lemma}[theorem]{Lemma}
\newtheorem{proposition}[theorem]{Proposition}


\def\Lemma#1{Lemma~\ref{#1}}
\def\Proposition#1{Proposition~\ref{#1}}
\def\Theorem#1{Theorem~\ref{#1}}
 
\def\Section#1{Section~\ref{#1}}



\def\barc{{\overline {c}}}

\def\barg{{\overline {g}}}

\def\barn{{\overline {n}}}

\def\barq{{\overline {q}}}
\def\barr{{\overline {r}}}

\def\bareta{{\overline{\eta}}}


%
{\end{list}}

\def\ass(#1:#2){(#1\ref{#1:#2})}

\def\ritem#1{
\item[{\sf \ass(\current_model:#1)}]
}

\newenvironment{recall-ass}[1]{%
\begin{description}
\def\current_model{#1}}{
\end{description}
}

%
%
 
\def\sq{\hbox{\rlap{$\sqcap$}$\sqcup$}}
\def\qed{\ifmmode\sq\else{\unskip\nobreak\hfil
\penalty50\hskip1em\null\nobreak\hfil\sq
\parfillskip=0pt\finalhyphendemerits=0\endgraf}\fi}

\newcommand{\blot}{\vrule height 1.1ex width .9ex depth -.1ex }
\def\qedb{\ifmmode\blot\else{\vspace{-.2cm}\unskip\nobreak\hfil
\penalty50\hskip1em\null\nobreak\hfil\blot
\parfillskip=0pt\finalhyphendemerits=0\endgraf}\fi}

\newcounter{l1}
\newcounter{l2}
\newcounter{l3}

\newcommand{\barablist}{\begin{list}{\arabic{l1}}{\usecounter{l1}}}
\newcommand{\balphlist}{\begin{list}{(\alph{l2})}{\usecounter{l2}} 
\setlength{\topsep}{0pt}
\setlength{\partopsep}{0pt}
\setlength{\itemsep}{0pt}
\setlength{\parsep}{5pt}
\setlength{\leftmargin}{-10pt}
\setlength{\rightmargin}{0pt}
\setlength{\itemindent}{-5pt}
}
\newcommand{\bromalist}{\begin{list}{(\roman{l3})}{\usecounter{l3}}}

     
\newcounter{rmnum}
\newenvironment{romannum}{\begin{list}{{\upshape (\roman{rmnum})}}{\usecounter{rmnum}
\setlength{\leftmargin}{14pt}
\setlength{\rightmargin}{12pt}
\setlength{\itemindent}{1pt}
}}{\end{list}}

\newcounter{anum}


\def\eq#1/{(\ref{e:#1})}

\newcommand{\field}[1]{\mathbb{#1}}

\def\Re{\field{R}}

\def\nat{\field{Z}_+}

\def\Prob{{\sf P}}

\def\Expect{{\sf E}}

\def\dU{{\sf U_\diamond}}

\def\dA{{\sf A_\diamond}}
\def\dQ{{\sf Q_\diamond}}

\def\R{{\sf R}}


 \def\transpose{{\hbox{\rm\tiny T}}}

\def\sign{{\rm sign}}

\def\argmin{\mathop{\rm arg\, min}}
\def\ind{\hbox{\large \bf 1}}

\def\epsy{\varepsilon}

%
%
%

%


\def\tilw{\widetilde{w}}


\def\haW{\widehat{W}}
\def\haY{\widehat{Y}}

\def\hhaY{\hbox to 0pt{$\haY$\hss}\widehat{\phantom{\raise 1.25pt\hbox{Y}}}}

\def\haI{{\hat I}}

\def\haeta{\widehat\eta}

%




  \def\blackqed{\ifmmode\text{$\blacksquare$}\else{\unskip\nobreak\hfil
\penalty50\hskip1em\null\nobreak\hfil\text{$\blacksquare$}
\parfillskip=0pt\finalhyphendemerits=0\endgraf}\fi}

\newlength{\noteWidth}
\setlength{\noteWidth}{.5in}
\long\def\notes#1{\ifinner
           {\tiny #1}
           \else
           \marginpar{\parbox[t]{\noteWidth}{\raggedright\tiny #1}}
       \fi\typeout{#1}}

\def\notes#1{\typeout{read notes: #1}}  

\def\wham#1{\smallbreak\noindent\textbf{#1} \ }
\def\yep#1{\typeout{Answered!!}}

\def\spm#1{\notes{spm: #1}}

\graphicspath{{figures/}}

\def\Ebox#1#2{%
\begin{center}    
\includegraphics[width=#1\hsize]{#2}
\end{center}} 

\def\Fig#1{Fig.~\ref{#1}}

\def\Prop#1{Prop.~\ref{#1}}     
\def\Lemma#1{Lemma~\ref{#1}}  
\def\Thm#1{Theorem~\ref{#1}}     

\newdimen\slantmathcorr
\def\oversl#1{
\setbox0=\hbox{$#1$}
\slantmathcorr=\wd0
\hskip 0.2\slantmathcorr \overline{\hbox to 0.8\wd0{%
\vphantom{\hbox{$#1$}}}}
\hskip-\wd0\hbox{$#1$}
}

\def\undersl#1{
\setbox0=\hbox{$#1$}
\slantmathcorr=\wd0
\underline{\hbox to 0.8\wd0{%
\vphantom{\hbox{$#1$}}}}
\hskip-0.8\wd0\hbox{$#1$}
}

\def\maxdelta{\bar{\delta}_\bullet}
\def\bardelta{\bar{\delta}}

\def\udelta{\undersl{\delta}} 

\def\barIdle{{\Large\text{$\iota$}}}
 
%

\def\cIDP{c_{\text{{\sf\tiny idp}}}}
\def\minc{c_{\text{\it gap}}} 

\def\pcm{p_{I}}

\def\cmax{c_{\text{\it max}}}

\def\barcpD{\barc_+^\dmdD}
\def\barcmD{\barc_-^\dmdD}
\def\barcpS{\barc_+^\splyS}
\def\barcmS{\barc_-^\splyS}

\def\thresh{\tau^*}
\def\threshW{\tau^\bullet}

%
%

 \def\Dmd{{\cal D}}
 \def\Sply{{\cal S}} 
 
\def\IDmd{{\cal D}}
 \def\ISply{{\cal S}}
 \def\IEdge{{\cal E}}
 \def\IArrive{{\cal A}}

\def\dmdD{{\hbox{\tiny\raisebox{-.03cm}{D}}}}
\def\splyS{{\hbox{\tiny\raisebox{-.03cm}{S}}}}

\def\dmd{{\hbox{\sf\tiny\raisebox{-.03cm}{$\cal D$}}}}
\def\sply{{\hbox{\sf\tiny\raisebox{-.03cm}{$\cal S$}}}}


\def\resG{\clG_{D \cup S}}
\def\resEdge{\IEdge_{D \cup S}}

\def\NN{$\text{\sf N}\hspace{-.092cm}\text{\sf N}$}

\def\umax{\barn_u}

\begin{document}

\renewcommand*{\thefootnote}{\fnsymbol{footnote}}


\title{Approximate optimality with bounded regret\\
in dynamic matching models}
\author{Ana Bu\v{s}i\'c\thanks{Inria and the Computer Science Dept.\ of \'Ecole Normale Sup\'erieure, Paris, France. E-mail: {\tt ana.busic@inria.fr}.} \and 
Sean Meyn\thanks{Department of Electrical and Computer Engg.\ at the University of Florida, Gainesville. E-mail: {\tt meyn@ece.ufl.edu}.}
}
\date{\empty}
\maketitle


%
%
%
%
%
%
%
%
%

\begin{abstract}
We consider a discrete-time bipartite matching model with random arrivals of units of \emph{supply} and \emph{demand} that can wait in queues located at the nodes in the network. A control policy determines which are matched at each time. The focus is on the infinite-horizon average-cost optimal control problem.
A relaxation of the stochastic control problem is proposed, which is found to be a special case of an inventory model, as treated in the classical theory of Clark and Scarf. The optimal policy for the relaxation admits a closed-form expression. Based on the policy for this relaxation, a new matching policy is proposed. For a parameterized family of models in which the network load approaches capacity, this policy is shown to be approximately optimal, with bounded regret, even though the average cost grows without bound. 
 
\end{abstract}



\noindent
\textbf{Keywords:}
Matching models,
Stochastic optimal control, 
Queueing networks.

\section{Introduction}
\label{s:intro}


%

We consider a discrete-time bipartite matching model with random arrivals of units of `supply' and `demand' that can wait in queues located at the nodes in the network. A control policy determines which are matched at each time. We focus on the infinite-horizon average-cost optimal control problem.
A relaxation of the stochastic control problem is proposed, which is found to be a special case of an inventory model, as treated in the classical theory of Clark and Scarf. The optimal policy for the relaxation admits a closed-form expression. Based on the policy for this relaxation, a new matching policy is proposed. For a parameterized family of models in which the network load approaches capacity, this policy is shown to be approximately optimal, with bounded regret, even though the average cost grows without bound.

\begin{figure}[h]
   \centering
   \includegraphics[width=.6\textwidth]{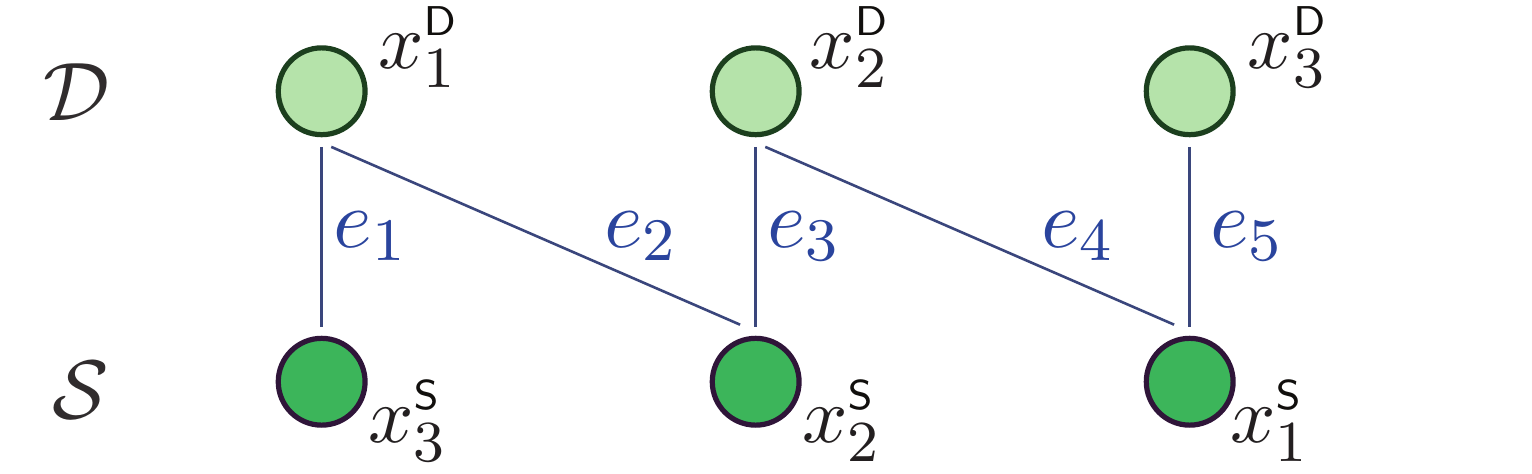}
  \caption{\small
The \NN\ network.}
\label{f:NN} 
 \end{figure}

The theory of matching has a long history in economics, mathematics, and graph theory \cite{dubfree81,lovplu09}, with
applications  found in many other areas such as  chemistry and information theory.  Most of the work is in a static setting.  
The dynamic model has received recent attention in \cite{busgupmai13,gurwar14}.

The most compelling application   is organ donation:
United Network for Organ Sharing (UNOS) offers kidney paired donation (KPD).  This is a transplant option for candidates who have a living donor who is medically able, but cannot donate a kidney to their intended candidate because they are incompatible (i.e., poorly matched)  \cite{KidneyPaired}.
 In this application, or  application
to resource allocation (such as in
scheduling in a power grid) \cite{beronn08,zdedecche09}, 
or pattern recognition \cite{sch99}, data
arrives sequentially and randomly, so that matching decisions must be made in
real-time,  taking into account the uncertainty of future requirements for
supply or demand,  or the uncertainty of the sequence of classification tasks to
be undertaken.  
The choice of matching decisions can be cast as an optimal control problem for a dynamic matching model. 

This paper builds upon the prior work \cite{busgupmai13} that established necessary and sufficient conditions for stability of a dynamic matching model (in the sense that there exists a Markovian matching policy for which the controlled process is positive Harris recurrent),    and gave several examples of policies that have maximal stability region (sometimes known as ``throughput optimal''). 
The goal in the present work is to obtain a better understanding of the structure of optimal policies.   Based on this, we seek policies with good performance,  as quantified by average-cost of the Markovian model:  given a linear cost function $c$ on buffer levels, the average cost is the long-run average, 
\begin{equation}
\eta = \limsup_{N\to\infty} \frac{1}{N}\sum_{t=0}^{N-1} \Expect\bigl[  c(X(t)) \bigr]
\label{e:eta}
\end{equation}
This will in general depend on the initial conditions,  and on the policy that determines matching decisions.

These goals are addressed using a combination of relaxation techniques.  Convex relaxations are used to avoid the combinatorial issues introduced by integer constraints.    A second geometric relaxation technique,  the \textit{workload relaxation} framework of \cite{mey03a,CTCN} is used as an approach to model reduction.  This idea was originally  inspired by the heavy-traffic theory of \cite{harwil87a,law90a,kellaw93a}.

In the research summarized here, the workload relaxation is used for two purposes.  First, it is used to obtain a lower bound $\haeta^*$ on the optimal average cost
for the matching model.  Second,   a value function for the relaxation is used to construct a  real-valued function $h$ on the state space of buffer values.  It is interpreted as an approximate value function for the matching model,  and is used in this paper to define 
a matching policy --- a variant of the
$h$-MaxWeight policy of  \cite{mey09a}.

\Thm{t:hMWao} summarizes the main results of this paper.  A family of arrival processes 
$\{\bfmA^\delta : \delta \in [0,1]\}$ is considered, in which the lower bound   $\haeta^*=\haeta^*(\delta)$ tends to infinity as $\delta\downarrow 0$.   
The performance of the proposed policy
 is shown to be asymptotically optimal, with bounded regret:  
\begin{equation}
\haeta^* \le \eta \le \haeta^*+ O(1),
\label{e:bddregret}
\end{equation}
where the term $O(1)$ is independent of $\delta\in (0,1]$,
and $\haeta^*$ grows as $1/\delta $.


The  workload relaxation is a one-dimensional controlled random walk.
For the matching model with arrival process $\bfmA^\delta$, the relaxation
is defined on the same probability space, and evolves as,
\begin{equation}
\haW(t+1) = \haW(t) - \delta + \haI(t) + \Delta(t+1),\qquad t\ge 0,
\label{e:ExWrelaxation}
\end{equation}
in which $\haW(0) \in\Re$ is given,  the idleness process $\haI(t)$ takes values in the interval $[0,\umax]$, 
$\delta>0$ is given,  and $\bfDelta $ is an i.i.d.\ sequence in $\Re$ with zero mean.  The \textit{effective cost} $\barc\colon\Re\to\Re_+$ that is obtained as the value of a nonlinear program.  If the cost function $c$ is linear, then the effective cost is piecewise linear. 
This is taken as a cost function for the one-dimensional workload model.    The lower bound $\haeta^*$
is precisely the optimal average cost for this relaxation,  and admits a tight approximation. 
Details can be found in \Section{s:work}.

Many of the results in \cite{CTCN} on workload relaxations are based on \textit{stabilizability of the arrival-free model}.  That is, it is assumed that the network without arrivals can be stabilized using some policy.  This assumption \textit{fails} for matching models.   Consider the case of organ donation (e.g.  \cite{KidneyPaired}):   if there is a patient waiting for a kidney, and no donors arrive, then the patient will wait  for eternity.  Nevertheless,  there is a natural formulation of workload for these models.  Each component of the multi-dimensional workload process can take on positive and negative values, much like what is found in inventory models.
It is found that optimal policies will have structure similar to what is found in inventory theory, such as the classical work of Clark and Scarf \cite{clasca60}.     In particular, based on a one-dimensional relaxation, an approximating 
model is obtained that can be identified as an inventory model of a special form, so that an optimal policy for the relaxation is obtained via a one-dimensional threshold policy.    

These conclusions imply that  optimal policies do not follow the conventions of \cite{busgupmai13}.   Optimal policies may idle, in the sense that no matches are made at certain time instances, even though matches are possible.  

The prior work  \cite{mey09a}  establishes asymptotic optimality of the
$h$-MaxWeight policy
for a class of scheduling models.   In this case the relaxation is a workload model that is non-idling since it evolves on the non-negative integers.    The approximation was logarithmic:  $\eta \le \haeta^*+ O(\log(1/\delta))$.  


In the present paper, this policy is refined to take into account structure found in the optimal policy for a workload relaxation, leading to the $h$-MaxWeight with threshold policy, or $h$-MWT.  In the present paper it is shown that the function $h$ can be designed so that the regret is bounded,
in the sense of \eqref{e:bddregret}.
This is the first paper to obtain bounded regret for a stochastic network model, except those special cases for which an optimal policy is known in closed-form.  It is also the first to obtain any form of heavy-traffic approximate optimality when the workload model is not ``minimal'' as in  \cite{mey09a} or  \cite{harwil87a,kellaw93a}.   
 
The prior work \cite{gurwar14} considers a general class of matching models,   with performance analysis based on an asymptotic heavy-traffic setting.  The conclusions are very different because of the different assumptions imposed on the network model:  when specialized to connected bipartite graphs,  their Assumption~1 implies that bipartite graph reduces to a star network.    

The  matching graphs considered in \cite{gurwar14} allow more general topologies, including certain hypergraphs.   Assumption~1 is useful in their analysis because it is then possible to establish a form of path-wise optimality for the workload model under a particular policy.    This is not possible for the models considered in the present paper because the workload process takes on positive and negative values
 --- consequently,  an average-cost optimal policy is a threshold form which is inconsistent with path-wise optimality 
(see Section~5.5 of \cite{CTCN} for further discussion).

%
%
%

The remainder of the paper is organized as follows:
\Section{s:match} describes the Markovian matching model, the fluid model,
along with a characterization of workload,  and consequences for control. 
This section concludes with the main results for the model in heavy-traffic.  
\Section{s:policy} contains the detailed policy description and the main ideas of the proof. 
Detailed proofs may be found in the Appendix.
Conclusions and directions for future research are described in
\Section{s:conc}.

\section{Bipartite matching model}
\label{s:match}

The  bipartite matching model introduced in this section is 
a
queueing network model with two classes of buffers,  
distinguished by their role as providing supply or demand of  resources.

The description of the model requires the following primitives, where the notation is adapted
from Definition~2.1 of  \cite{busgupmai13}.  We let 
 $\ell_\sply$ denote the number of  supply classes, 
$\ell_\dmd$ denote  the number of demand classes, and define the following index sets:
\smallskip

\begin{tabular}{ll}
$\IDmd$:  Indices of demand classes. \quad  
		& $\ISply$:  Indices of supply classes. 
		\\[.15cm]
 $\IEdge$:     Matching pairs,  $\IEdge\subset \IDmd\times\ISply$. \qquad
 		&
$\IArrive$:    Arrival pairs, $\IArrive\subset \IDmd\times\ISply$
\end{tabular}  

\smallskip

The bipartite graph $(\IDmd\cup\ISply,\IEdge)$  is called the \textit{matching
graph}.  It is assumed throughout that this graph is connected.

The  \NN-network shown in \Fig{f:NN} is an example in which $\ell_\dmd=\ell_\sply=3$,  and the set $\IEdge$ denotes the edges $(e_i)$ shown in the figure.  Each of the three integers $\{x^\dmd_i:i=1,2,3\}$ correspond to units of demand of a particular type, and $\{x^\sply_i: i=1,2,3\}$ correspond to units of the three different types   of supply.

To capture volatility in arrivals and temporal dynamics we introduce next a discrete-time 
Markov Decision Process (MDP) model that resembles a model for a multi-class queueing network.
The main departure from traditional queueing networks is that there are no constraints on service rates.  Instead of ``service'', activities in this model correspond to matching a particular unit of supply with a unit of demand.

\subsection{MDP model}

The vector of buffer levels for the dynamic matching model is denoted $Q(t) $.
It takes values in $\nat^{\ell}$,
where $\ell= \ell_\dmd+\ell_\sply$.
When it is necessary to emphasize the different roles for supply or demand buffers, we use the notation
\begin{equation}
Q(t) =(Q_1^\dmd(t)\,,\dots\,, Q_{\ell_\dmd}^\dmd(t)\,, Q_1^\sply(t)\,,\dots\,,   Q_{\ell_\sply}^\sply(t))^\transpose
\label{e:Qnotation}
\end{equation}
It is often convenient to drop the super-scripts.  In this case, for $i\in \Dmd\eqdef\{1,\dots, \ell_\dmd\}$,  the integer $Q_i(t)$ denotes the number of units of demand of
class $i$,  and for $j\in \Sply\eqdef\{ \ell_\dmd+1,\dots,  \ell_\dmd+ \ell_\sply\}$, the integer $Q_j(t)$  denotes the units of supply of class $j$.

Let  $\xi^0=(1,\dots,1,-1,\dots,-1)^\transpose$,  the vector with $\ell_\dmd$ entries of $+1$, followed by $\ell_\sply$ entries of $-1$.   The queue length vector is subject to the following balance constraint:  
\begin{equation}
\xi^0\cdot Q(t) = 0
\label{e:Qstate}
\end{equation}
For simplicity, in this paper we do not impose upper bounds on buffers.

An i.i.d.\ arrival process is denoted $\bfmA$.  
We adopt the assumptions used in the prior work \cite{busgupmai13},  that a single pair arrive at each time slot -- one of demand and one of supply.   That is, for each $t$,  $A(t)$ takes values in the set  
\begin{equation}
\text{$\dA = \{ \One^{i}+\One^{j} :    ( i,j)\in\IArrive  \}$,}
\label{e:suppA}
\end{equation}
where $\One^{i}$ denotes a column vector with $i$th component equal to $1$ and zero elsewhere. 

An input process $\bfmU$ represents the sequence of matching activities.  The queue dynamics are defined by the recursion,
\begin{equation}
Q(t+1) = Q(t) - U(t) + A(t)\,,\qquad t\ge 0
\label{e:QMM}
\end{equation}   
At each time $t$,  the input is subject to integer constraints, and constraints consistent with the matching
graph.  These constraints are captured by the finite input space, 
\begin{equation}
\dU= \Bigl\{ u= \sum_{e\in \clE} n_e u^e :  \  n_e\in\nat ,\ \  |n|\le \umax\Bigr\}
\label{e:suppU}
\end{equation} 
where $|n| = \sum n_e$, and $\umax\ge 1$ is a fixed integer. 
The vectors $\{ u^e\}$ are an enumeration of all single matches across edges of the matching graph:
that is,  $u^e =   \One^{i}+\One^{j}$ for  $e=( i, j) \in\IEdge$.  
There are also implicit  constraints on $U(t) $, since  the components of $Q(t)$ are constrained to non-negative integer values.   

The integer $\umax$ must be chosen sufficiently large to ensure stabilizability of the network.  
This constraint on the input is  imposed only to simplify Taylor series approximations used to obtain performance bounds.   It is assumed henceforth that  $\umax\ge  4$, so that the randomized policies used in our analysis are feasible.

Based on \eqref{e:suppA} and \eqref{e:suppU} we have 
\begin{equation} 
\xi^0\cdot U(t) = 0
\quad\text{\it and} \quad
\xi^0\cdot A(t) = 0\,, \qquad a.s.
\label{e:suppA0}
\end{equation}
Consequently, the constraint \eqref{e:Qstate} holds automatically under \eqref{e:suppU} and \eqref{e:suppA}, provided  it holds at time $t=0$.  
That is, for each $t$,  $Q(t)$ takes values in the set  
$$
\dQ=\{ q\in\nat^\ell : \xi^0\cdot q = 0\}.
$$

The sequence $\bfmU$ is viewed as the \textit{input process} for the MDP model.   Since it is useful to allow $U(t)$ to depend on both $Q(t)$ and $A(t)$, in parts of the analysis the pair process $Y(t) = (Q(t), A(t))$ is chosen as the state process. It is assumed that the input process $\bfmU$ is non-anticipative (a function of present and past values of $\bfmY$).  A stationary (state feedback) policy is of the form $U(t)=\psi(Y(t))$, for some function $\psi\colon\dQ \times \dA \to \dU$. We allow for randomized stationary policies in our analysis, such as \Lemma{t:ZeroDrift} below.

We assume a linear cost function on buffer levels and aim to minimize the average cost given by \eqref{e:eta}, in which $X(t) = Q(t)+A(t)$.  This process evolves very much like \eqref{e:QMM}:
\begin{equation} 
 X(t+1)=X(t) - U(t) + A(t+1), 
\label{e:XMM}
\end{equation}  
which takes values in the discrete space $\dstate=\dQ$.  A stationary policy is denoted $U(t)=\phi(X(t))$ in this case.  The policy in our main result will be of this form.

\Prop{t:YXopt} below establishes that the optimal average cost using the state process $\bfmY$ is identical to what is achievable using the smaller state process $\bfmX$.
The existance of an optimal policy requires stabilizability of the network.

\subsection{Stabilizability}
\label{s:stab}

Let  $\Sply(i) $ denote the set of supply classes that can be matched with a class $i$ demand,  and let $\Dmd(j)$ denote  the set of demand classes that can be matched with a class $j$ supply.  This definition and the extension to subsets $D\subset \IDmd$ and $S\subset \ISply$  is formalized as follows: 
\[
\begin{aligned}
  \Sply(i)&=\{j\in\ISply  : (i,j)\in\IEdge\}
\,,\quad
 \Dmd(j)=\{i\in\IDmd:(i,j)\in\IEdge\}
\\[.2cm]
 \Sply(D)&=\bigcup_{i\in D} \Sply(i)\,,
 		\qquad\qquad  \Dmd(S)=\bigcup_{j\in S} \Dmd(j)
\end{aligned} 
\]
For any vector $x\in\Re^\ell_+$ denote, 
$|x_D| = \sum_{i\in D} x_i 
$,
$
|x_S| =\sum_{j\in S} x_j
$.


The MDP model is said stabilizable if there exists a policy for which the controlled MDP model is positive Harris recurrent. 

The necessary and sufficient condition for stabilizability of the MDP model is given as follows,
based on the mean arrival rate vector $\alpha=\Expect[A(t)]$: 
\medskip
  {\tt NCond}:   For all non-empty   subsets 
$D\subsetneq \IDmd$ and $S\subsetneq \ISply$,
\begin{equation} 
|\alpha_D | < | \alpha_{\Sply(D)} |  
\quad  \text{\it and}
\quad
| \alpha_S| <  |\alpha_{\Dmd(S)} |   
\label{e:NCond}
\end{equation}
The proof can be found in \cite{busgupmai13}. In that paper, it is shown that the following policy, called Match the Longest (ML) is stabilizing under {\tt NCond}:
\begin{equation}
\phi(x) = \argmax\{ u\cdot \nabla h\, (x)  : u\in\dU(x)\},\qquad x\in\dstate,
\label{e:hMW}
\end{equation}
with $h(x) =\|x\|^2$, the usual $\ell_2$-norm.
Adan and Weiss \cite{adawei12} have shown that the FCFS (First Come First Served) policy also has a maximal stability region. The stationary distribution under FCFS policy has a product form \cite{abbmw13,abmw15}, but there is no efficient algorithm for the normalizing constant. 

The proof of the stability result for ML policy is based on the fact that this policy minimizes the drift of the quadratic Lyapunov function, 
$$
V(q,a) = \sum_{k=1}^{\ell} x_k^2,  
$$
for $(q,a) \in \dQ \times \dA$ and $x = q+a$. A bounded negative drift condition was shown for a randomized policy constructed using network flow arguments. Randomized policies based on network flows are also used in \Section{s:flow}. 

Under  {\tt NCond} we are also assured of the existence of an optimal policy for either MDP model.
Recall that $\eta^*$ denotes the optimal average for the MDP with state process $\bfmX$.   Let $\eta^*_Y$ denote the optimal average cost for the MDP with the larger state process $\bfmY$.   Given the greater information, it is immediate that $\eta^*_Y\le \eta^*$.  In fact, the two are identical:
\begin{proposition}
\label{t:YXopt}
If   {\tt NCond} holds, then  the optimal average cost exists as a deterministic constant, independent of initial conditions, for either MDP.  
Moreover, the average costs are equal:  $\eta^*_Y = \eta^*$.
\end{proposition}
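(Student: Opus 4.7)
The plan is to establish both claims via the Average Cost Optimality Equation (ACOE), and then exploit the fact that the one-step cost, the feasible action set, and the one-step transition kernel all depend on the $Y$-state $(q,a)$ only through the sum $x = q+a$.

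For the existence of a deterministic optimal average cost in each MDP, I would invoke standard theory for countable-state MDPs under stabilizability. Under {\tt NCond}, the ML policy of \cite{busgupmai13} yields a positive Harris recurrent chain with finite average cost via the quadratic Lyapunov function $V(q,a) = \|q+a\|^2$; combined with the minorization that is automatic because $\bfmA$ is supported on the finite set $\dA$, the vanishing-discount argument then produces a solution $(h_X, \eta^*)$ of the ACOE for the $X$-MDP and a solution $(h_Y, \eta^*_Y)$ for the $Y$-MDP, both with optimal average cost independent of initial condition. The inequality $\eta^*_Y \le \eta^*$ is immediate, since any $X$-policy $\phi$ lifts to the $Y$-policy $\psi(q,a) = \phi(q+a)$ producing an identical law on $(\bfmX, \bfmU)$ and hence the same average cost.

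For the reverse inequality, apply the $Y$-ACOE
\begin{equation*}
h_Y(q,a) + \eta^*_Y \;=\; c(q+a) \;+\; \min_{u \in \dU(q,a)} \Expect\bigl[\, h_Y\bigl(q+a-u,\, A\bigr) \,\bigr],
\end{equation*}
where the expectation is taken with respect to a generic draw $A$ from the arrival distribution. Observe that (i) the immediate cost $c(q+a)$ depends on $(q,a)$ only through $x = q+a$, (ii) the feasible set $\dU(q,a) = \{u \in \dU : q+a-u \ge 0\}$ likewise depends only on $x$, and (iii) since $A$ is independent of $(q,a)$, the expectation inside the minimum depends on its arguments only through $q+a-u$. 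Hence the right-hand side is a function of $x$ alone, which forces $h_Y(q,a) = \tilde h(q+a)$ for some $\tilde h$ defined on the reachable subset of $\dstate$. Substituting back and setting $\barh(q') \eqdef \Expect[\tilde h(q'+A)]$ yields
\begin{equation*}
\tilde h(x) + \eta^*_Y \;=\; c(x) \;+\; \min_{u \in \dU(x)} \barh(x-u),
\end{equation*}
which is precisely the ACOE of the $X$-MDP. By uniqueness of the average optimal cost as a functional of the ACOE, $\eta^*_Y = \eta^*$.

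The main obstacle lies in the first step: rigorously justifying the ACOE on the countably infinite state space. What is needed is $V$-geometric ergodicity of some stabilizing reference policy together with $V$-uniform integrability of the discounted relative value functions, so that the vanishing-discount limit exists and satisfies the ACOE with controlled growth. The randomized network-flow policies constructed in \cite{busgupmai13} provide the required Foster--Lyapunov drift with $V(q,a) = \|q+a\|^2$ on a small set, making this a routine verification in the Meyn--Tweedie framework rather than a deep difficulty; still, it is the one place where the standard machinery must be checked carefully in the matching setting.
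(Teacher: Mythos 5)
Your proposal is correct and takes essentially the same route as the paper: both rest on the ACOE for the $Y$-MDP (existence supplied by standard Foster--Lyapunov/vanishing-discount theory, which the paper cites as Theorem~9.0.2 of [CTCN]) together with the observation that the one-step cost, the feasible action set, and the law of $Y(t+1)=(q+a-u,A(t+1))$ depend on $(q,a,u)$ only through $q+a-u$, plus the trivial lifting argument for $\eta^*_Y\le\eta^*$. The only cosmetic difference is that the paper draws the conclusion at the level of the optimal policy (the minimizer $\argmin_u \barg^*(q+a-u)$ is a function of $x=q+a$, hence an implementable $X$-policy achieving $\eta^*_Y$), whereas you draw it at the level of the relative value function ($h_Y$ is a function of $x$, so $(\tilde h,\eta^*_Y)$ solves the $X$-ACOE); both steps carry the same content and require the same verification-theorem backing that you correctly identify as the only point needing care.
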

\begin{proof} 
Theorem~9.0.2 of \cite{CTCN} implies that the existence of optimal policies.   For the MDP with state process $\bfmY$,  it also follows from  \cite[Theorem~9.0.2]{CTCN} that  there is a  function $g^*\colon \dQ \times \dA \to \Re$ that solves the average cost optimality equation:
\begin{equation}
\min_{u\in\dU} \Expect[c(q+a) + g^*(Y(t+1)) \mid Y(t) = (q,a) \, ,\ U(t)=u] = g^*(q,a)+ \eta^*_Y
\label{e:ACOEY}
\end{equation} 
An optimal policy achieving the optimal average cost $\eta^*_Y$ is any minimizer:
\[
\psi^*(q,a) = \argmin_{u\in\dU} \Expect[  g^*(Y(t+1)) \mid Y(t) = (q,a) \, , \  U(t)=u]
\]
Substituting  $ (q-u+a, A(t+1)) =Y(t+1) $ within the conditional expectation gives
$\psi^*(q,a) = \argmin_{u\in\dU} \barg^*(q+a-u) $ where 
\[
\barg^*(q+a-u) = \sum_{a'\in \IArrive}g^*(q+a -u ,a') \Prob\{A(t+1) =a'\}  
\] 

That is,  the optimal input-output pair satisfies $U^*(t) = \psi^*(Y^*(t)) = \argmin_u \barg^*(X^*(t)-u)$,  with $X^*(t)= Q^*(t)+A(t)$ for $t\ge 0$.  This implies that $\eta^*\le \eta^*_Y$, which completes the proof since we already have the reverse inequality.
\end{proof}

\subsection{Workload}
\label{s:work}

For any set $D\subsetneq \IDmd$ we let $\xi^D$ denote the vector whose components are $1$ for $i\in D$,  $-1$ for $i\in\ISply(D)$, and zero elsewhere. The vectors $\{\xi^D\}$ play a role similar to workload vectors in standard queueing models.  Condition~{\tt NCond} can be equivalently expressed,
\[
\xi^D\cdot \alpha < 0,\qquad \text{for all $D\subsetneq \IDmd$ }
\]
We could introduce symmetric notation for $S\subset \ISply$, but this is unnecessary: for each $S\subsetneq \ISply$, 
denote by $\hat{S}=\cup \{S' \; : \; \Dmd(S')=\Dmd(S)\}$.
By definition of $\hat{S}$ and the connectivity of the graph, $\Sply( \Dmd(S)^c ) = \hat{S}^c$. 
Thus, 
\[
\xi^S \leq \xi^{\hat{S}} = \xi^{\Dmd(S)^c} - \xi^0.
\]
Our assumptions imply that $\alpha\cdot \xi^0=0$, so it is sufficient to consider only demand in a characterization of {\tt NCond}.

We can now define a workload process that evolves as \eqref{e:ExWrelaxation}.
For a particular set $D\subsetneq \IDmd$ we take $W(t) = \xi^D\cdot X(t)$,
and $\delta = - \xi^D\cdot \alpha$.  
\begin{proposition}
\label{t:workload}
The workload process evolves according to the recursion,
\begin{equation}
 W(t+1) =  W(t) - \delta + I(t) + \Delta(t+1),\qquad t\ge 0,
\label{e:workload} 
\end{equation}
in which  $\delta>0$,  $\Delta(t+1) = \delta+ \xi^D\cdot A(t+1)$, and $I(t) = - \xi^D\cdot U(t)$.
The zero-mean i.i.d.\ sequence
$\bfDelta $  takes values in $\Re$.   Moreover, $I(t)  $ 
takes values in the non-negative integers $\nat$, and is zero if and only if there is no cross-matching between $\Sply(D)$ and $D^c$. 
\end{proposition}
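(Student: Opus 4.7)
The plan is to obtain all five assertions directly from the recursion \eqref{e:XMM} by applying the linear functional $\xi^D\cdot$ to both sides, and then decompose the input carefully along the edge basis $\{u^e\}$.

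First I would apply $\xi^D\cdot$ to \eqref{e:XMM}, yielding
\begin{equation*}
\xi^D\cdot X(t+1) = \xi^D\cdot X(t) - \xi^D\cdot U(t) + \xi^D\cdot A(t+1).
\end{equation*}
Adding and subtracting $\delta$ on the right, and substituting the definitions of $W$, $\Delta$ and $I$, produces \eqref{e:workload} mechanically. The positivity $\delta>0$ is then a one-line consequence of {\tt NCond}: by definition of $\xi^D$,
$\xi^D\cdot\alpha = |\alpha_D|-|\alpha_{\Sply(D)}|$, and the first inequality in \eqref{e:NCond} gives $\xi^D\cdot\alpha<0$, i.e.\ $\delta>0$. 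The i.i.d.\ structure of $\bfDelta$ is inherited from $\bfmA$, and zero-mean follows from $\Expect[\Delta(t+1)] = \delta + \xi^D\cdot\alpha = 0$.

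The substantive step is the analysis of $I(t)=-\xi^D\cdot U(t)$. Using \eqref{e:suppU}, write $U(t)=\sum_{e\in\IEdge} n_e u^e$ with $n_e\in\nat$. The key observation is that for an edge $e=(i,j)\in\IEdge$ (so $i\in\IDmd$ and $j\in\ISply$), the quantity $\xi^D\cdot u^e = \xi^D_i+\xi^D_j$ takes only the values $0$ or $-1$. Indeed, I would check the three exhaustive cases:
\begin{itemize}
\item If $i\in D$, then necessarily $j\in\Sply(i)\subset\Sply(D)$, so $\xi^D_i=1$ and $\xi^D_j=-1$, giving $\xi^D\cdot u^e=0$.
\item If $i\in D^c$ and $j\in\Sply(D)$, then $\xi^D_i=0$ and $\xi^D_j=-1$, so $\xi^D\cdot u^e=-1$.
\item If $i\in D^c$ and $j\notin\Sply(D)$, then both components of $\xi^D$ vanish on $u^e$, so $\xi^D\cdot u^e=0$.
\end{itemize}
Hence $-\xi^D\cdot u^e\in\{0,1\}$ for every edge, and consequently
\begin{equation*}
I(t) \;=\; -\xi^D\cdot U(t) \;=\; \sum_{e\in\IEdge} n_e\,(-\xi^D\cdot u^e) \;\in\; \nat.
\end{equation*}

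Finally, the equality $I(t)=0$ holds exactly when $n_e=0$ for every edge $e$ of the second type above, i.e.\ every edge joining a node of $D^c$ (on the demand side) with a node of $\Sply(D)$ (on the supply side). This is precisely the statement that no cross-match between $\Sply(D)$ and $D^c$ occurs at time $t$. The only possible obstacle is making the case analysis airtight against the edge conventions implicit in \eqref{e:suppU} and in the definition of $\xi^D$; once the sign bookkeeping of $\xi^D$ on $u^e$ is done, each of the five claims follows immediately.
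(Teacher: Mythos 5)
Your proof is correct and follows the same route the paper takes (the paper's own proof is just a two-line remark that the claims ``follow from the definition of $\xi^D$'' and that $\delta>0$ under {\tt NCond}); you have simply filled in the case analysis of $\xi^D\cdot u^e$ over edge types that the paper leaves implicit. The three-case bookkeeping is exhaustive and correct, and the remaining claims ($\delta>0$ via {\tt NCond}, zero-mean i.i.d.\ $\bfDelta$, and the cross-matching characterization of $I(t)=0$) all follow as you say.
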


\begin{proof}
Under  {\tt NCond} it follows that $\delta>0$.  
The properties of $I(t) = -\xi^D\cdot U(t)$ follow from the definition of $\xi^D$. 
\end{proof}

Given a convex cost function $c\colon\Re_+^\ell\to\Re_+$, the \textit{effective cost} 
 is defined as the solution to the convex program,
\begin{equation}
\barc(w)\eqdef\min \{c(x)  : x\in\Re_+^\ell,\  \xi^D\cdot x = w\},\qquad w\in\Re
\label{e:effcost}
\end{equation}
It is assumed throughout this paper that $c\colon\Re^\ell_+\to\Re_+$ is a linear function of the state, $c(x) = \sum c_i x_i$, with $c_i>0$ for each $i$.   It follows that $\barc$ is piecewise linear. This and other conclusions are summarized in \Lemma{t:effCost}, whose simple proof is omitted.

\begin{lemma}
\label{t:effCost}
For a linear cost function $c$, the solution to the linear program \eqref{e:effcost}
 results in a piecewise linear function of $w$:
\begin{equation}
\barc(w)=\max(\barc_+ w,  -\barc_- w)
\label{e:barcpl}
\end{equation}
where $\barc_+=\barcpD + \barcpS$ and $\barc_-=\barcmD + \barcmS$,  with 
\begin{equation} 
\begin{aligned}
\barcpD&  = \min\{ c_i : i\in D\}\,,\quad  \barcpS&  = \min\{ c_j : j\in S^c\}  
\\[.15cm]
\barcmS&  = \min\{ c_j : j\in S\}\,,\quad  \barcmD&  = \min\{ c_i : i\in D^c\}   
\end{aligned}
\label{e:barcpl-barc}
\end{equation}
An optimizer $x^*$ for \eqref{e:effcost} exists in which exactly two entries are non-zero.  The form depends on the sign of $w$:  
\begin{equation}
\begin{aligned}
w\ge 0 : \quad 
   x^*_i &=w \quad \text{for some $i\in D$ satisfying $c_i=\barcpD$,} 
    \\
   x^*_j& =w \quad \text{for some $j\in S^c$ satisfying $c_j=\barcpS$.}
 \\[.2cm]
w<0 : 
\quad  
x^*_i &=|w|  \quad \text{for some $i\in D^c$ satisfying $c_i=\barcmD$,} 
\\  
x^*_j&=|w|  \quad \text{for some $j\in S$ satisfying $c_j=\barcmS$.}
\end{aligned}
\label{e:effState}
\end{equation}
\qed
\end{lemma}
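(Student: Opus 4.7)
The plan is to solve the linear program \eqref{e:effcost} directly. A preliminary observation: for the stated formulas to hold, the feasible set of \eqref{e:effcost} must carry the implicit balance constraint $\xi^0 \cdot x = 0$ inherent to the state space $\dstate$ on which $X(t) = Q(t) + A(t)$ lives. Working with both constraints $\xi^D \cdot x = w$ and $\xi^0 \cdot x = 0$, I would reduce the LP by block aggregation. Partition the coordinates of $x$ into four blocks: $D$, $\IDmd \setminus D$, $S = \Sply(D)$, and $\ISply \setminus S$. On these blocks $\xi^D$ takes constant values $(+1, 0, -1, 0)$ and $\xi^0$ takes $(+1, +1, -1, -1)$.

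Since every $c_k > 0$ and the objective is linear, any optimum concentrates each block's total mass on its cheapest coordinate. Denoting the four aggregated totals by $D_1, D_0, S_1, S_0$, the two equalities become $D_1 - S_1 = w$ and $S_0 - D_0 = w$, and the objective becomes $\barcpD D_1 + \barcmD D_0 + \barcmS S_1 + \barcpS S_0$. Critically, these constraints act on the disjoint pairs $(D_1, S_1)$ and $(D_0, S_0)$, so the LP decouples into two independent two-variable problems.

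Each pair is solved by inspection. For $w \geq 0$, nonnegativity forces $D_1 = w, S_1 = 0$ and $S_0 = w, D_0 = 0$, giving cost $(\barcpD + \barcpS) w = \barc_+ w$; for $w < 0$ the roles swap, yielding $(\barcmD + \barcmS)|w| = -\barc_- w$. Combining the two branches produces \eqref{e:barcpl}, and unwinding the aggregation gives the two-support optimizer \eqref{e:effState}. There is no substantive obstacle here: the only subtlety is recognizing the implicit $\xi^0 \cdot x = 0$ constraint so that $\barcpS$ and $\barcmD$ can appear in $\barc_\pm$; the subsequent aggregation plus a trivial two-variable LP does the rest.
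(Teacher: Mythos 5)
The paper omits this proof entirely (``whose simple proof is omitted''), so there is no authorial argument to compare against; your derivation is correct and supplies the missing details. You have also caught a genuine subtlety worth flagging: taken literally, the display \eqref{e:effcost} imposes only the single constraint $\xi^D\cdot x = w$, under which the minimizer would concentrate all mass on a single coordinate --- the cheapest in $D$ when $w\ge 0$, or the cheapest in $S$ when $w<0$ --- yielding $\barc(w)=\max(\barcpD\,w,\,-\barcmS\,w)$ and leaving the terms $\barcpS$ and $\barcmD$ with no role. The stated formula \eqref{e:barcpl} therefore requires the balance constraint $\xi^0\cdot x = 0$, which every state in $\dstate$ satisfies but which is not written into \eqref{e:effcost}. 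Once that constraint is appended, your block aggregation over $D$, $\IDmd\setminus D$, $S$, $\ISply\setminus S$ is exactly the right reduction: the two linear constraints decouple as $D_1 - S_1 = w$ and $S_0 - D_0 = w$, each a trivial two-variable LP (since all $c_i>0$, set the negatively signed aggregate to zero), and unwinding the per-block concentration onto the cheapest index recovers both the piecewise-linear value and the two-support optimizer \eqref{e:effState}. The argument is complete.
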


The controlled random walk~\eqref{e:ExWrelaxation} with cost function $\barc$ is thus a relaxation of the original MDP model, with controlled input $\bfmhaI$ taking values in $\Re_+$.  
 In \cite[Theorem~9.7.2]{CTCN}  it is shown that an optimal policy  is determined by a threshold policy of the following form:  There is a scalar $\threshW>0$ so that 
\begin{equation}
\haI(t) = \max\{\delta - \haW(t)  - \threshW ,0 \}
 \label{e:OptIdle}
\end{equation}
Under this policy, the stochastic process $\{\Phi(t) = \haW(t)-\Delta(t)\}$  is a reflected random walk on   $[-\threshW,\infty)$.   Equation~(7.37) of \cite{CTCN} defines the diffusion heuristic, intended to approximate this threshold based on a reflected-Brownian motion (RBM) model,   
 \begin{equation}
\thresh  =  \half \frac{\sigma^2_\Delta }{\delta}  \log\Bigl(1+\frac{\barc_+}{\barc_-}\Bigr)
\label{e:barrdiffusion}
\end{equation}
where $\delta $ is the drift appearing in  \eqref{e:OptIdle} and  $\sigma^2_\Delta$  is the variance of $ \Delta(t)$.


\subsection{Randomized policies and network flows}
\label{s:flow}

In \cite[Theorem 7.1]{busgupmai13}, a randomized policy based on a solution of a network flow problem was used in the stability proof for the 
ML policy. The randomized policy used in the proof of our main result, Theorem \ref{t:hMWao}, will be based on a slighlty different network flow problem, corresponding to a specific workload vector. 

We use 
 standard terminology 
 from
  network flow theory. 
Fix some demand and supply sets $D\subset \IDmd$ and  $S \subset \ISply$,
and let
$\resG$ denote the restricted matching graph:   
$\resG = \{D\cup S , \resEdge\}$, where $\resEdge = \{(i,j) \in\IEdge \; : \; i \in D \textrm{ and } j \in S\}.$
Consider the directed graph
\begin{equation}\label{eq-dg}
{\clN}_{D \cup S} = \bigl(D\cup  S  \cup \{a,f\}, 
{\clE}_{\clN_{D \cup S}}
\bigr)\:  ,
\end{equation}
where ${\clE}_{\clN_{D \cup S}}= \resEdge\cup\{(a,i), i\in
D\} \cup \{(j,f), j\in S \}$. 
Nodes $a$ and $f$ are the source and the sink of this network, as illustrated in Figure \ref{fig:maxflow}.
Endow the arcs of $\resEdge$ with infinite capacity, an arc of type $(a,i)$
with capacity $\alpha_i$, and an arc of type $(j,f)$ with capacity
$\alpha_j$. 
 
\begin{figure}[htb]
   \centering
   \includegraphics[width=.8\textwidth]{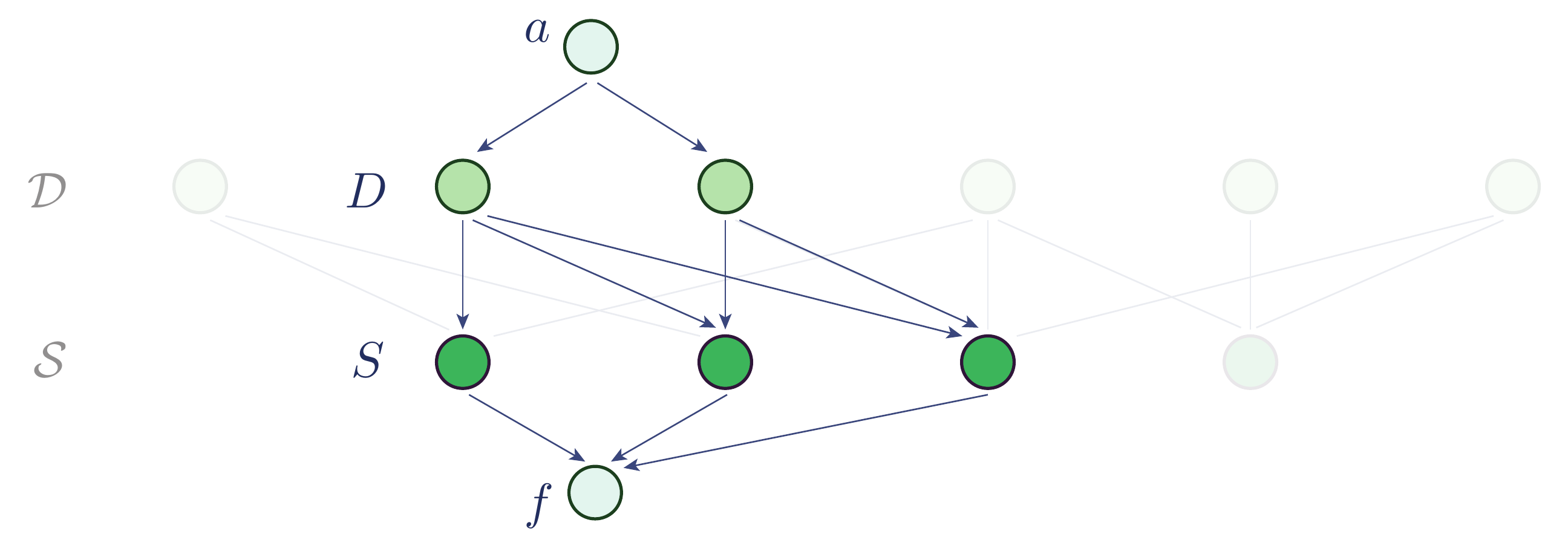}
   \caption{\label{fig:maxflow}Network flow problem for the directed graph
   ${\clE}_{\clN_{D \cup S}} $: The arcs in $ \resEdge$
have infinite capacity, an arc of type $(a,i)$ has capacity $\alpha_i$, and an arc of type $(j,f)$ has capacity $\alpha_j$}
 \end{figure}

Recall that a {\em  cut} is a subset of the arcs
whose removal disconnects $a$ and $f$. The {\em capacity} of a cut is
the sum of the capacities of the arcs.

Recall that $T: \mathcal{E_N} \rightarrow \Re_+$ is a
{\em flow} if: (i) $\forall
i\in D $,  $T(a,i)=\sum_{j\in S (i)} T(i,j)$,  and $ \forall j \in S $, $\sum_{i\in
  \Dmd(j)}T(i,j)=T(j,f)$; 
(ii) $\forall (x,y) \in \mathcal{E_N}$, $T(x,y)$ is less
  than
   or
equal to the capacity of $(x,y)$. 
The {\em value} of $T$ is $\sum_{i}
T(a,i)=\sum_{j}T(j,f)$. 

\begin{lemma}{\cite{busgupmai13}}
\label{t:MCMF}
If $\Sply(D) \subset S$, then under conditions (\ref{e:NCond}), the maximal $a$-$f$ flow is equal to $\alpha_D$.  Moreover, there exists a maximal $a$-$f$  flow such that the flow value on each edge in $\resEdge$ is strictly positive.
Symetrically, if $\Dmd(S) \subset D$, then there is a strictly positive $f$-$a$ flow of value $\alpha_S$. 
\end{lemma}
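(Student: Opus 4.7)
My approach is via the max-flow min-cut theorem, combined with an averaging construction for the positivity claim.

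The upper bound $\alpha_D$ on the maximal $a$-$f$ flow is immediate from the cut isolating the source, which has capacity $\sum_{i\in D}\alpha_i = \alpha_D$. For the matching lower bound, I would pick an arbitrary finite-capacity $a$-$f$ cut, given by a vertex partition $V = A \sqcup F$ with $a\in A$, $f\in F$, and set $D' = D\cap A$ and $S' = S\cap A$. Because the arcs of $\resEdge$ carry infinite capacity, finiteness forces $\Sply(D')\subset S'$ (well-defined since $\Sply(D)\subset S$ by hypothesis). The cut capacity is then $\alpha_{D\setminus D'} + \alpha_{S'}$, so the bound reduces to $\alpha_{S'}\ge \alpha_{D'}$. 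I would split into three cases: $D' = \emptyset$ is trivial; for $\emptyset\ne D'\subsetneq \IDmd$, condition \texttt{NCond} gives $\alpha_{D'} < \alpha_{\Sply(D')} \le \alpha_{S'}$; and for $D' = \IDmd$, connectivity of the matching graph forces $\Sply(\IDmd) = \ISply$ and hence $S' = \ISply$, after which the identity $\alpha_{\IDmd} = \alpha_{\ISply}$ (a consequence of $\xi^0\cdot\alpha = 0$) closes the argument.

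For the strict positivity claim I would use averaging. The set of max flows of value $\alpha_D$ is a convex polytope, so if for every edge $e\in \resEdge$ there exists a max flow $T^e$ with $T^e(e)>0$, then the uniform mixture $\bar T = |\resEdge|^{-1}\sum_{e\in\resEdge}T^e$ is itself a max flow, and it is strictly positive on every edge of $\resEdge$. To construct such a $T^e$ for $e = (i,j)$, I would start from any max flow $T$ and observe that $T(a,i) = \alpha_i$ for each $i\in D$ (otherwise the total value drops below $\alpha_D$), so node $i$ already ships $\alpha_i$ units to its neighbors in $\Sply(i)$. If $T(i,j) = 0$, pick $j'\in\Sply(i)$ with $T(i,j')>0$ and reroute a small amount of flow from $(i,j')$ onto $(i,j)$, rebalancing via a cycle in the residual graph so that sink-side capacity constraints are preserved.

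The main obstacle is this rebalancing step when the arc $(j,f)$ is already saturated: a local two-arc swap becomes infeasible, and a longer cycle has to be identified. The strict slack provided by \texttt{NCond}, specifically the bound $\alpha_D < \alpha_{\Sply(D)} \le \alpha_S$ (which also follows from the cut analysis above applied to $D' = D$), ensures that not every sink arc can be saturated simultaneously in a max flow; the perturbation can therefore be propagated along alternately forward and backward edges until the excess is absorbed at an unsaturated sink. The symmetric assertion about the $f$-$a$ flow of value $\alpha_S$ follows by applying the identical argument to the reversed network, using the second half of \texttt{NCond}.
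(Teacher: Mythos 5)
Your min-cut computation of the value is correct, and it in fact reproves from scratch what the paper merely cites (Lemma~3.2 of \cite{busgupmai13} for $D=\IDmd$, $S=\ISply$). The genuine gap is in the strict-positivity step. Your justification for the rerouting — ``{\tt NCond} ensures that not every sink arc can be saturated simultaneously in a max flow'' — is false precisely in the case $D=\IDmd$, $S=\ISply$, which the lemma covers and which is the foundational case here: there $\alpha_D=\alpha_{\IDmd}=\alpha_{\ISply}=\alpha_S$, so \emph{every} max flow saturates \emph{every} sink arc, and there is no unsaturated sink to absorb the $\epsy$ you push onto $(i,j)$; the perturbation must instead close a cycle in the residual graph returning to $i$, and you have not shown such a cycle exists. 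Even when $D\subsetneq\IDmd$ (so some sink arc is unsaturated), the mere existence of slack somewhere does not give a residual path from $j$ to an unsaturated sink: you need a reachability argument. For instance, let $R_S\ni j$ and $R_D$ be the supply/demand nodes reachable from $j$ by alternating a backward step along an arc $(i',j')$ carrying positive flow with a forward step along any arc of $\resEdge$; if the search reaches neither $i$ nor an unsaturated sink, then every sink in $R_S$ is saturated and $\Sply(R_D)\subset R_S$, which forces $\alpha_{\Sply(R_D)}\le\alpha_{R_S}\le\alpha_{R_D}$ (source arcs are saturated), contradicting {\tt NCond} when $\emptyset\neq R_D\subsetneq\IDmd$, while $R_D=\IDmd$ would already contain $i$. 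Without an argument of this kind (applied to subsets $R_D$, not just to $D$ itself), the key perturbation step is unjustified.

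Note also that the paper sidesteps all of this: it takes the strictly positive maximal flow on the \emph{full} network from \cite{busgupmai13} and simply restricts it to the arcs of ${\clN}_{D\cup S}$; since $\Sply(D)\subset S$, every out-arc of each $i\in D$ survives, so the restriction is a flow of value $\alpha_D$ that is automatically strictly positive on $\resEdge$. If you want to keep your self-contained construction, replace the ``unsaturated sink'' claim either by the residual-reachability argument sketched above or by the small-uniform-flow perturbation used in the appendix lemma establishing (\ref{eq:gamma}), where a flow $F_\nu$ positive on all edges is subtracted from the capacities and {\tt NCond} (with its strict slack) is reverified for the reduced network.
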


\begin{proof}
The result for $D = \IDmd$ and $S = \ISply$ is stated as Lemma 3.2 in \cite{busgupmai13}. 
Denote by $F$ a strictly positive flow of value $\alpha_{\IDmd} = \alpha_{\ISply} = 1$ for network ${\clN}_{\IDmd \cup \ISply}$ and by $F_{D \cup S}$ its restriction to the arcs in ${\clN}_{D \cup S}$. 
Flow $F_{D \cup S}$ is strictly positive and has value $\alpha_D$ (all the outgoing arcs from set $D$ in the original network are still present in ${\clN}_{D \cup S}$, as $\ISply(D)\subset S$).   
\end{proof}


The following construction will be one of the key ingredients in the proof of our main result, Theorem \ref{t:hMWao}. It is assumed that conditions \eqref{e:NCond} hold. 
\paragraph{Translating a network flow into a randomized policy}

Let $D\subset \IDmd$ and $\delta = - \xi^D\cdot \alpha$. 
Let $F$ be a strictly positive flow of value $\alpha_D$ for the network flow problem for the matching graph $\resG$, with $S = \Sply(D)$.

A randomized matching policy can be  defined as follows. Assume that the state at time $t$ is $Y(t)=(Q(t), A(t)) = (q, a)$ for some $(q,a) \in \dQ \times \dA$.   
Denote by $D^+(q) = \{ i \in D \mid q_i > 0\}$ and $S^+(q) = \{ j \in S \mid q_j > 0\}$ respectively the set of demand and the supply queues with non-empty buffers.  
For each $j \in S$, we define a probability vector $p^{(j)}$, 
\begin{equation}
\label{e:pi}
p_{i}^{(j)}= \frac{1}{\alpha_j} \Bigl[ \ F(i,j) + \frac{\alpha_j-F(j,f)}{|D^+(q)|} \ \Bigr],\quad i \in D^+(q)\:.
\end{equation}
This defines indeed a probability vector:
\begin{eqnarray*}
\sum_{i\in D^+(q)} p_{i}^{(j)} & = &  \frac{1}{\alpha_j} \Bigl[ \ 
\alpha_j-F(j,f) + \sum_{i\in D^+(q)} F(i,j) \ \Bigr] \\
& = &  \frac{1}{\alpha_j} \Bigl[ \ \alpha_j-F(j,f)  + F(j,f) \ \Bigr]
\ \  =
  \ \  1 \:.
\end{eqnarray*}
Symmetricaly, for each $i \in D$, we define a probability vector $p^{(i)}$, 
\begin{equation}
\label{e:pj}
p_{j}^{(i)}= \frac{1}{\alpha_i} \Bigl[ \ F(i,j) + \frac{\alpha_i-F(a,i)}{|S^+(q)|} \ \Bigr],\quad j \in S^+(q)\:.
\end{equation}
As $F$ is a network flow, these probability vectors satisfy, 
\begin{alignat}{2}
p_{i}^{(j)}& \geq \frac{F(i,j)}{\alpha_j}, \qquad &j\in S,  \   i\in D^+(q), 
\label{eq:pbounds}
\\
p_{j}^{(i)} &\geq \frac{F(i,j)}{\alpha_i}, \; &i\in D, \  j\in S^+(q).
\end{alignat}

Let $a = \One^{i}+\One^{j}$ for some $(i,j)\in\IArrive$.  Then the new arrival of a supply item $j$ is matched to a demand item in $D^+(q)$ independently of
the past, according to probability vector $p^{(j)}$. 
The new arrival of a demand item $i$ is matched to a supply item in $S^+(q)$ independently of
the past, according to probability vector $p^{(i)}$. 

\begin{lemma}
\label{t:ZeroDrift}
Under this randomized policy, the following drift condition holds whenever $i\in D  $ such that $q_i \ge 1$:   
\begin{equation} 
\Expect\bigl[ Q_i (t+1) \mid Q(t)=q\bigr] 
\le  q_i.
\label{eq:qi}
\end{equation}
For supply buffers $j\in S  $ satisfying $q_j\ge 1$,
\begin{equation}
\Expect\bigl[ Q_j (t+1) \mid Q(t)=q\bigr] 
\le  q_j + \alpha_j - \sum_{i \in D \cap \Dmd(j)} F(i,j) \le q_j + \delta
\label{eq:qj}
\end{equation}
and 
\begin{equation}
\label{eq:sqj}
\Expect\bigl[ \sum_{q_j \geq 1} Q_j (t+1) \mid Q(t)=q\bigr] 
\le  \sum_{q_j \geq 1}  q_j + \delta.
\end{equation}
\end{lemma}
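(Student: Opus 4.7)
The plan is to expand $\Expect[Q_k(t+1) \mid Q(t)=q]$ using the dynamics $Q_k(t+1) = q_k + A_k(t) - U_k(t)$; since $\Expect[A_k(t)] = \alpha_k$, the task reduces to producing lower bounds on $\Expect[U_k(t) \mid Q(t)=q]$ for each class $k$ of interest. Under the randomized policy, each time step produces up to two matching edges: one from pairing the arriving supply with an item in $D^+(q)$, and one from pairing the arriving demand with an item in $S^+(q)$. I would account only for the first edge when bounding $U_i$ with $i \in D$, and only for the second when bounding $U_j$ with $j \in S$; dropping the other contribution only weakens the lower bound, so the arithmetic stays clean.

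For \eqref{eq:qi}, fix $i \in D$ with $q_i \ge 1$, which guarantees $i \in D^+(q)$. The probability that the incoming supply is some $j^* \in S$ and is then matched to demand $i$ equals $\alpha_{j^*}\,p_i^{(j^*)}$, so summing over $j^*$ and invoking \eqref{eq:pbounds} gives
\[
\Expect[U_i(t)\mid Q(t)=q] \;\ge\; \sum_{j^* \in S} \alpha_{j^*}\,p_i^{(j^*)}
\;\ge\; \sum_{j^* \in S} F(i,j^*) \;=\; F(a,i) \;=\; \alpha_i.
\]
The last two equalities use flow conservation at node $i$ together with the fact that in a maximal flow of value $\alpha_D$ the source-side capacities are saturated: $F(a,i) = \alpha_i$ for each $i \in D$, as may be read off from \Lemma{t:MCMF}. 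Substituting back yields $\Expect[Q_i(t+1)\mid q] \le q_i$, proving \eqref{eq:qi}.

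The first inequality of \eqref{eq:qj} is symmetric: for $j \in S$ with $q_j \ge 1$, the analogous calculation using the second matching edge and the second bound in \eqref{eq:pbounds} gives
\[
\Expect[U_j(t) \mid Q(t) = q] \;\ge\; \sum_{i^* \in D} \alpha_{i^*}\, p_j^{(i^*)} \;\ge\; \sum_{i^* \in D \cap \Dmd(j)} F(i^*,j) \;=\; F(j,f).
\]
For the second inequality of \eqref{eq:qj}, capacity at the sink gives $\alpha_j - F(j,f) \ge 0$ for each $j \in S$, while summing over $S$ yields
\[
\sum_{j \in S}\bigl(\alpha_j - F(j,f)\bigr) \;=\; \alpha_S - \alpha_D \;=\; -\xi^D\cdot\alpha \;=\; \delta,
\]
so each nonnegative slack is at most $\delta$. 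The summed bound \eqref{eq:sqj} is then immediate: sum the first inequality of \eqref{eq:qj} over $\{j \in S : q_j \ge 1\}$ and apply the identity above to bound the aggregated slack by $\delta$.

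The only subtle point is the bookkeeping around the two simultaneous matching actions per step and checking that each bound in \eqref{eq:pbounds} is invoked at a state ($q_i \ge 1$ or $q_j \ge 1$) at which the relevant probability vector is well-defined. Once those conventions are fixed, the proof is a direct expectation calculation that hinges entirely on the two properties of $F$ recorded in \Lemma{t:MCMF}: strict positivity on $\resEdge$ (which underwrites \eqref{eq:pbounds}) and maximality of value $\alpha_D$ (which saturates the source capacities and forces the total sink slack to equal $\delta$).
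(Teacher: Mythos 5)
Your proof takes the same route as the paper's: expand the dynamics, reduce to a lower bound on $\Expect[U_k(t)\mid Q(t)=q]$ via the supply/demand arrival conditioned on the flow probabilities, and invoke \eqref{eq:pbounds} together with flow conservation and maximality of $F$. The identification $F(a,i)=\alpha_i$ (saturation of source arcs in a max flow of value $\alpha_D$) makes the step that the paper compresses into ``since $F$ is a maximum value flow'' fully explicit, which is a nice touch.

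One small imprecision: your first inequality in each chain sums over all of $S$ (resp.\ all of $D$), e.g.
\[
\Expect[U_i(t)\mid Q(t)=q]\ \ge\ \sum_{j^*\in S}\alpha_{j^*}\,p_i^{(j^*)}.
\]
But under the policy a supply arrival $j^*$ that draws an $i\in D^+(q)$ with $(i,j^*)\notin\IEdge$ produces no match, so the actual supply-side contribution to $U_i$ is $\sum_{j^*\in S(i)}\alpha_{j^*}\,p_i^{(j^*)}$; the terms with $j^*\notin S(i)$ are spurious, and the displayed inequality is not justified as stated. The fix is to restrict the first sum to $S(i)$ (and symmetrically to $D\cap\Dmd(j)$ in the supply-buffer case); the downstream bounds are unchanged because $F(i,j^*)=0$ off the matching edges, so you still land on $F(a,i)=\alpha_i$ (resp.\ $F(j,f)$). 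With that repair the argument matches the paper's proof line by line and the conclusions for \eqref{eq:qi}, \eqref{eq:qj}, and \eqref{eq:sqj} all go through.
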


\begin{proof}
Consider $i\in D^+(q)$.  Under this policy we obtain a lower bound on the conditional mean $ \Expect [ U_i(t) \mid Q(t)=q, A(t) = a] $: It is minimized when $q_{\ell} \ge 1$ for every demand buffer $\ell \in D $,  and $q_k = 0$ for every supply buffer $k$ such that $(i,k)\in \resEdge$.   
There is at most one supply arrival at each time step, $\sum_{j\in S } a_j \leq 1$ (the supply arrival can be in class $S^c$). 


Consequently,   
\[
 \Expect [ U_i(t) \mid Q(t)=q, A(t) = a] =
\sum_{j \in S (i)} \ind\{ a_j =1 \} p_{i}^{(j)}\geq\sum_{j \in S (i)} \ind\{ a_j =1 \} \frac{F(i,j)}{\alpha_j}.
\] 
Recalling that $\Expect\bigl[ A_i(t) \mid Q(t)=q\bigr] = \Expect\bigl[ A_i(t) \bigr] = \alpha_i$, 
we conclude that 
\begin{eqnarray*}
\Expect\bigl[  U_i(t) \mid Q(t)=q\bigr] & = & \Expect\bigl[ \Expect [ U_i(t) \mid Q(t), A(t) ] \mid Q(t)=q\bigr] \\
& \geq & \sum_{j \in S (i)} \alpha_j \frac{F(i,j)}{\alpha_j}  =  \sum_{j \in S (i)}F(i,j) = \alpha_i,
\end{eqnarray*}
where the final equality follows because  $F$ is a maximum value flow. 
The claim \eqref{eq:qi} follows since,
\[
\begin{aligned}
\Expect\bigl[ Q_i (t+1) \mid Q(t)=q\bigr]
&= 
\Expect\bigl[ Q_i (t) + A_i(t) - U_i(t) \mid Q(t)=q\bigr] 
\\
&= q_i+ \alpha_i -
\Expect\bigl[   U_i(t) \mid Q(t)=q\bigr] 
\end{aligned}
\]

The proof of (\ref{eq:qj})  is similar.  The last inequality in (\ref{eq:qj}) and (\ref{eq:sqj}) follow from  the fact that for all $j$, 
$\sum_{i \in D \cap \Dmd(j)} F(i,j) \leq \alpha_j$, and 
$$
\sum_{j \in S} \Bigl (\alpha_j - \sum_{i \in D \cap \Dmd(j)} F(i,j) \Bigr) = \alpha_S - \alpha_D = \delta.
$$
\end{proof}

\subsection{Asymptotic optimality}
\label{s:aOpt}

The structure of the policy for the relaxation is the inspiration for the following refinement of the  $h$-MaxWeight policy in \eqref{e:hMW}.

For a   differentiable function $h\colon\Re^\ell\to\Re_+$, and a threshold $\tau\ge 0$,
the $h$-MWT ($h$-MaxWeight with threshold) policy is obtained as the solution to the constrained non-linear program,
 \begin{equation}
\begin{aligned}
\phi(x)  = \argmax\quad & u\cdot  \nabla h\, (x)  
\\[.15cm]
\text{subject to}\quad & u\in\dU(x)  \quad \text{\it and} \quad  \xi^D\cdot  (x-u) \le -\tau
\end{aligned}
\label{e:wchMW}
\end{equation}
Based on the definition of workload and  idleness appearing \eqref{e:workload},  the constraint $\xi^D\cdot  (x-u) \le -\tau$ is equivalently expressed
\[
 I(t) \le -W(t) - \tau,\qquad \text{\it when  $X(t)=x$ and $U(t)=u$.}
\]
This constraint recalls the definition of a threshold policy  \eqref{e:OptIdle}
 for the workload relaxation.   
 
We take $\tau=\thresh$ in our main results,  and the function $h$ is also designed using inspiration from the workload relaxation.

\spm{March 2016.  For future reference:  Remember we want to prove this claim
\\
The integer program \eqref{e:wchMW} appears to be difficult to solve for large networks.  Fortunately it admits a convex relaxation that is tight:}

To evaluate performance we consider an asymptotic setting:
Assume that we have a family of arrival processes $\{ A^\delta(t) \}$  parameterized by 
$\delta\in [0,\maxdelta]$, where $\maxdelta\in (0,1)$.  Each is assumed to satisfy \eqref{e:suppA}.
The following additional assumptions are imposed throughout:
\begin{romannum}
\item[(A1)]
For one set $D\subsetneq \IDmd$ we have $\xi^D\cdot \alpha^\delta =-\delta$,   where $\alpha^\delta$ denotes the mean of $A^\delta(t) $.

Moreover, there is a fixed constant $\udelta>0$ such that  $\xi^{D'}\cdot \alpha^\delta \le -\udelta$ for any $D'\subsetneq \IDmd$, $D'\neq D$, and $\delta\in [0,\maxdelta]$.

\item[(A2)]  
The distributions are continuous at $\delta=0$, with linear rate:  For some constant $b$,
\begin{equation}
\Expect[ \| A^\delta(t) - A^0(t)\|] \le b\delta.
\label{e:CtsVar}
\end{equation}

\item[(A3)]  
The sets $\IEdge$  and $\IArrive$ do not depend upon $\delta$,  and the graph associated with $\IEdge$ is connected. 
Moreover, there exists $i_0\in \Sply(D)$, $j_0\in D^c$, and $\pcm >0$ such that 
\begin{equation}
 \Prob\{A^\delta_{i_0}(t) \ge 1 \ \text{\it and} \ A^\delta_{j_0}(t) \ge 1\} \ge \pcm ,\qquad  0\le \delta\le \maxdelta.
\label{e:IdleA}
\end{equation} 

\end{romannum}
We   suppress the dependency of $\bfmA,\bfmQ,\bfmU$ on $\delta$ when there is no risk of confusion.
We also let $\xi=\xi^D$, so that $\delta=-\xi \cdot\alpha$.   

We are now prepared to state the main result of the paper, establishing asymptotic optimality of a family of $h$-MWT policies.   
The construction of the function $h$ is performed in \Section{s:policy}.  We let $\eta^*$ denote the optimal average cost for the MDP model,  
$\haeta^*$ the optimal average cost for \eqref{e:ExWrelaxation},  and the following is shown to approximate each of these values:
\begin{equation}
\haeta^{**} =  \thresh \barc_-   = \half \frac{ \sigma^2_\Delta }{ \delta} \, \barc_-  \log\Bigl(1+ \frac{ \barc_+}{\barc_-} \Bigr) 
\label{e:etastar}
\end{equation}

\begin{theorem}[Asymptotic Optimality With Bounded Regret]
\label{t:hMWao}
Suppose that Assumptions (A1)--(A3) hold.   For each $\delta\in (0,\maxdelta]$, there is a function $h$ such that the $h$-MWT policy using the threshold $\thresh$ has finite average cost  $\eta$,  satisfying the following bounds, 
\[
\haeta^*\le \eta^*
\le 
\eta\le \haeta^* +O(1)
\]
where the constant $O(1)$ does not depend upon $\delta$.  Moreover,  the average cost for the relaxation is approximated by the value in \eqref{e:etastar}:
\[
 \haeta^* = \haeta^{**} +O(1)
\]
\end{theorem}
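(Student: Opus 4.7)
The proof splits into three parts: the lower bound $\haeta^*\le\eta^*$, the upper bound $\eta\le\haeta^*+O(1)$, and the approximation $\haeta^*=\haeta^{**}+O(1)$. The first is built into the construction of the relaxation: for any admissible policy of the MDP, the scalar process $W(t)=\xi\cdot X(t)$ satisfies \eqref{e:workload} with idleness $I(t)=-\xi\cdot U(t)$ taking values in $\nat$ by \Prop{t:workload}, and $\barc(W(t))\le c(X(t))$ pointwise by \eqref{e:effcost}. Since \eqref{e:ExWrelaxation} permits a richer real-valued idleness process, the infimum average cost over matching policies dominates the infimum over workload policies, giving $\haeta^*\le\eta^*$. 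The approximation $\haeta^*=\haeta^{**}+O(1)$ follows by substituting the explicit threshold policy \eqref{e:OptIdle} into \eqref{e:ExWrelaxation}: the reflected process $\Phi(t)=\haW(t)-\Delta(t)+\thresh$ has negative drift $-\delta$ on $\Re_+$, increment variance $\sigma_\Delta^2$, and bounded jumps, so a Cram\'er--Lundberg estimate (or the RBM heavy-traffic limit underpinning \eqref{e:barrdiffusion}) shows that its stationary distribution is approximately exponential with rate $2\delta/\sigma_\Delta^2$; integrating $\barc$ from \eqref{e:barcpl} against this distribution yields $\haeta^{**}$ up to an error bounded uniformly in $\delta\in(0,\maxdelta]$.

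The heart of the proof is the upper bound. I would construct $h$ so that the $h$-MWT policy satisfies the Poisson-type inequality
\begin{equation}
\Expect[h(X(t+1))\mid X(t)=x]-h(x)\ \le\ -c(x)+\haeta^*+K,\qquad x\in\dstate,
\label{e:prop-poisson}
\end{equation}
for a constant $K$ independent of $\delta$; then Theorem~9.4.1 of \cite{CTCN} delivers $\eta\le\haeta^*+K$. Let $\hat{h}\colon\Re\to\Re_+$ solve Poisson's equation for \eqref{e:ExWrelaxation} under the optimal threshold policy \eqref{e:OptIdle}, and define
\[
h(x)\ =\ \hat{h}(\xi\cdot x)+\theta(x),
\]
where $\theta\colon\dstate\to\Re_+$ is a bounded perturbation designed to steer $\nabla h(x)$ toward the effective-cost minimizer $x^*(\xi\cdot x)$ given by \eqref{e:effState}. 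The constraint $\xi\cdot(x-u)\le-\thresh$ in \eqref{e:wchMW} is exactly the feedback realization of \eqref{e:OptIdle}, so along the workload coordinate the drift of $\hat{h}(\xi\cdot X(t))$ reproduces $-\barc(\xi\cdot X(t))+\haeta^*$ up to a Taylor remainder governed by the bounded second moment of $\Delta(t)$.

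Controlling the gap $c(x)-\barc(\xi\cdot x)\ge 0$ is the role of $\theta$. Because $\phi(x)$ maximizes $u\cdot\nabla h(x)$ over $\dU(x)$, the drift of $h$ under $\phi$ is bounded above by the drift under any competing admissible action; I would take as benchmark the randomized network-flow policy of \Lemma{t:ZeroDrift}. By \eqref{eq:qi}--\eqref{eq:sqj} that competitor keeps every non-empty demand buffer stable in mean and inflates the total supply at rate at most $\delta$. Choosing $\theta$ as a weighted sum of buffer populations, with weights that penalize classes off the support of $x^*$, converts this into a bound of the form $\Expect[\theta(X(t+1))-\theta(X(t))\mid X(t)=x]\le-\epsilon(c(x)-\barc(\xi\cdot x))+O(1)$, with $\epsilon>0$ depending only on the flow $F$ and on Assumption~(A3) through $\pcm$. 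Summing with the workload-direction estimate delivers \eqref{e:prop-poisson}.

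The main obstacle is producing \eqref{e:prop-poisson} with $K$ independent of $\delta$. The relative value function $\hat{h}$ has slope of order $1/\delta$ on the bulk of its domain, so any $O(\delta)$ slack in the workload-direction drift already contributes $O(1)$ to $K$---exactly on the edge of what bounded regret allows. Matching this scale is the whole reason for using the network-flow randomized policy of \Lemma{t:ZeroDrift} as comparator: its demand-side drift is \emph{exactly zero} and its total supply-side drift is \emph{exactly} $\delta$, so the leading $1/\delta$ behaviour of $\hat{h}$ cancels against the $\delta$-sized slack in \eqref{eq:sqj}, leaving a constant that depends only on $\umax$, the bounds in Assumptions~(A1)--(A3), and the second moment of $\bfDelta$.
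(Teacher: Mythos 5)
Your high-level decomposition matches the paper's: write $h$ as a workload term plus a term that penalizes the gap $c(x)-\barc(\xi\cdot x)$, get a Poisson-type inequality, and verify it by comparing the $h$-MWT maximizer to a randomized network-flow policy. The lower bound and the estimate $\haeta^*=\haeta^{**}+O(1)$ are also in the right spirit (the paper's version of the latter is Lemma~\ref{t:heats} together with Proposition~\ref{t:CBMCRWneg}, built from a Lyapunov function rather than a Cram\'er--Lundberg tail estimate, but the mechanism is comparable).

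However, there is a genuine gap in your treatment of the perturbation term. You propose $\theta$ as a \emph{weighted sum of buffer populations}, i.e.\ a linear function of $x$, and claim it can be tuned so that
\[
\Expect\bigl[\theta(X(t+1))-\theta(X(t))\mid X(t)=x\bigr]\ \le\ -\epsy\bigl(c(x)-\barc(\xi\cdot x)\bigr)+O(1).
\]
This cannot hold: the one-step increments of $X$ lie in a fixed bounded set (each arrival contributes $+1$ to two coordinates and each match removes $+1$ from two coordinates, with $|n|\le\umax$), so for a linear $\theta$ the conditional drift is uniformly bounded in absolute value by a constant, independent of $x$. Since the gap $c(x)-\barc(\xi\cdot x)$ is unbounded on $\dstate$, the displayed inequality fails for large gap. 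Yet you need precisely a drift that is \emph{linear in the gap} to cancel the $-c(x)+\barc(\xi\cdot x)$ deficit and arrive at \eqref{e:prop-poisson}. The paper obtains this with a \emph{quadratic-in-gap} penalty $h_c(x)=\kappa\bigl[c(\tilx)-\barc(\tilw)\bigr]^2$: its drift is approximately $2\kappa\bigl[c(\tilx)-\barc(\tilw)\bigr]\cdot\Expect[c(\tilX(t+1))-c(\tilX(t))\mid\cdot]$, and once the \emph{inner} drift is made uniformly negative (Step~3, via Lemma~\ref{t:RandomizedHeavyDrift} rather than the zero-drift Lemma~\ref{t:ZeroDrift} you cite --- zero drift is not enough), the outer drift becomes $\lesssim-2\kappa\epsy_c\cdot[\text{gap}]$, which is what is needed. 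The smoothing $\tilx_k=x_k+\beta(e^{-x_k/\beta}-1)$ is also essential to avoid positive drift contributions on the boundary of $\dstate$, which you do not address. A smaller but still substantive difference is the workload component: the paper does not use the exact discrete-time Poisson solution for \eqref{e:ExWrelaxation}, but rather the $C^2$ solution $\hah$ of the diffusion (RBM) ACOE \eqref{e:RBMACOE}; it then pays for this with the uniform Lipschitz estimate on $\hah''$ in Lemma~\ref{t:RBMACOE}, which is what makes the Taylor remainders $O(1)$ uniformly in $\delta$. Your proposal to use the exact Poisson solution avoids the Taylor remainder but would then require establishing the regularity and growth of that solution from scratch, and the convenient closed form \eqref{e:hah}--\eqref{e:hahneg} would be lost.
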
 

The proof is constructive, with $h$ defined in eq.~\eqref{e:hConstructed} in the following section.
 
\section{Construction of the h-MWT policy}
\label{s:policy}
 
In what follows we describe a particular construction of $h$ designed to approximate the solution to an average cost optimality equation (ACOE) for the MDP model:  
For $x\in\dstate$,
\begin{equation}
\min_{u\in\dU} \Expect[c(X(t)) + h^*(X(t+1)) \mid X(t)=x\, ,\ U(t)=u] = h^*(x) + \eta^*, 
\label{e:ACOE}
\end{equation}
in which $\eta^*$ is the optimal average cost, and $h^*$ is the relative value function.  
The construction is based on a heavy-traffic setting, following the work of Harrison \cite{harwil87a}, Kelly \cite{kellaw93a} and subsequent research (see \cite{CTCN} for a bibliography).  It is taken as the sum of two terms,  
\[
h(x) = \hah(\xi\cdot x) + h_c(x)
\]
Similar to \cite{mey09a}, the function $h_c$ is introduced to  penalize deviations between $c(x)$ and $\barc(\xi\cdot x)$. 

The first term $\hah$ is a function of workload.  
For $w\ge -\thresh$, it   solves the second-order differential equation,
\begin{equation}
-\delta   \hah'\, (w) +\half\sigma^2_\Delta   \hah''\, (w)  = - \barc(w) +\haeta^{**} , 
\label{e:RBMACOE}
\end{equation} 
where the threshold $\thresh$ is defined in  \eqref{e:barrdiffusion},
and
the optimal average cost $\haeta^{**} $ is given in \eqref{e:etastar}. 
There  is a solution that is convex and increasing on $[-\thresh,\infty)$,  with $\hah'(-\thresh) = \hah''(-\thresh) = 0$:
For constants $\{A_\pm,B_\pm,C_\pm,D_\pm\}$,
\begin{equation}
\hah(w) =
\begin{cases}  
A_+ w^2 + B_+ w   &  w\ge0
\\[.25cm]
A_- w^2 + B_- w + C_- + D_- e^{\Theta w}  &  -\thresh\le w\le 0
\end{cases}
\label{e:hah}
\end{equation}  
where $   {\Theta}^{-1} =\half   \sigma^2_\Delta/ \delta $.  Formulae for all of the other parameters can be found in the proof of \Lemma{t:RBMACOE}. 

The domain is extended to obtain a convex, $C^2$ function on all of $\Re$.  
We fix a parameter $\delta_+ \in (0,\pcm )$, where $\pcm >0$ is used in
Assumption~(A3).  The sum $\delta+\delta_+$ is interpreted as the desired idleness rate when $ w< -\thresh $. 
Fix another constant $\theta>0$, and for $ w < -\thresh $ define
\begin{equation}
\hah(w) = 
\hah(-\thresh)   
+
\frac{\barc_-}{\delta_+} \Bigl[ \half (w+\thresh)^2 + \frac{1}{\theta} (w+\thresh)  + \frac{1}{\theta^{2}} \Bigl( 1-\exp(\theta (w+\thresh) ) \Bigr)\Bigr]
\label{e:hahneg}
\end{equation}
where   $\hah(-\thresh)    $ is given in \eqref{e:hah}.  

We now turn to the construction of $h_c$.     For this we might take a constant times $ [c(x) - \barc(\xi\cdot x)]^2 $.   This fails because of positive drift on the boundary of $\dstate$.

Let $\tilx$ denote the function of $x$ with entries,
\spm{March 2016.  would be nice if $i$ meant demand, $j$ supply,  and $k$ anything!}
\[
\tilx_k = x_k + \beta( e^{-x_k/\beta} - 1) \, ,
\]
where $\beta>0$ is a constant.   The right hand side vanishes at the origin, as does its first derivative.  The constant $\beta$ is chosen so that its derivative with respect to $x_k$ is  small for $x_k$ in some interval $[0,\barq]$ (see Step 2 of the proof of \Theorem{t:hMWao} below).

A similar transformation for workload is used,
\begin{equation}
\tilw = \sign(w) \bigl[ |w| + \beta( e^{-|w|/\beta} - 1)  \bigr]
\label{e:tilw}
\end{equation}
If $ w=\xi\cdot  x$ then the definition does not change, but $\tilw$ is of course a function of $x$; the perturbation ensures that $\barc(\tilw)$ is $C^1$ as a function of $x$.
 
The resulting perturbation of $ [c(x) - \barc(w)]^2 $ is used to define $h_c$.  Consequently,  $h$ and its gradient are given by, 
\begin{alignat}{2}
h(x)  = \hah(w) + h_c(x) &=  \hah(w) + \kappa [c( \tilx) - \barc(\tilw)]^2 &
\label{e:hConstructed}
\\
\nabla h\,(x) &= \hah'(w) \xi + \nabla h_c(x)\,, \qquad  &x\in\Re_+^\ell, \  w=\xi\cdot x.
\label{e:hparts}
\end{alignat}

\begin{proof}[Proof of  \Theorem{t:hMWao}]

In what follows we give the  proof of \Theorem{t:hMWao} using the function $h$ in \eqref{e:hConstructed},  with sufficiently large values of $\kappa$ and $\beta$,
and a sufficiently small value of $\delta_+>0$.   
The notation $O(1)$ implies a term that is uniformly bounded in both $\delta>0$ and $q\in\dstate$.  This bound may depend on $\beta$, $\kappa$, and $\delta_+$,
but these constants will be fixed, independent of $\delta$ and $q$.

\wham{Step 1}
The first step is to obtain a bound that suggests the ACOE \eqref{e:ACOE}.
However, for performance bounds it is more convenient to work with the Markov chain $\bfmQ$ rather than $\bfmX$.  The function $h$ is translated to a function of $Q(t)$ via,
\[
V(q) =  \Expect[   h(X(t)) \mid Q(t)=q] =
\Expect[h(q+A(t))],
\] 
where the value of $t$ is arbitrary. 

Recall from \Prop{t:workload} that $I(t)=0$ if and only if there is no cross-matching between $\Sply(D)$ and $D^c$,  where   $I(t) = - \xi^\transpose   U(t)$.   From the constraints in \eqref{e:wchMW} it follows that $I(t) = 0$ when $W(t)\ge -\thresh$ under the $h$-MWT policy.

 \Lemma{t:suggestsACOE} follows from a second-order Taylor series expansion for $h$.  It also requires the uniform Lipschitz bound for $\hah''$ that is given in \Lemma{t:RBMACOE}.

\begin{lemma}
\label{t:suggestsACOE}
The following bound holds under any policy satisfying $I(t) = 0$ when $W(t)\ge -\thresh$:  For each  $  q\in\dstate$,  
\begin{equation}
\begin{aligned}
 \Expect[  V(Q(t+1)) -& V(Q(t)) \mid Q(t)=q]   \le  
 -  \cIDP(q)      + \eta^* +  O(1)\, ,
\end{aligned}
\label{e:MWbdd}
\end{equation}
where the function $\cIDP$ is   dependent on the policy:
\begin{equation}
 \cIDP(q)  \eqdef   \Expect[  \nabla h\, (X(t))  \cdot (U(t) -\alpha)   \mid Q(t)=q \bigr]  
 +\half \sigma^2_\Delta \hah''(\xi\cdot q) - \eta^*
 \label{e:cmw} 
\end{equation}
\qed
\end{lemma}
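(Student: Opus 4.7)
The plan is to apply a third-order Taylor expansion of $h$ about $X(t)$, take conditional expectations, and regroup using the ODE \eqref{e:RBMACOE} satisfied by $\hah$. First, by the tower property and the definition $V(q) = \Expect[h(q+A)]$ (with $A$ an independent copy of a generic arrival), one has
\[
\Expect[V(Q(t+1)) - V(Q(t)) \mid Q(t)=q] = \Expect[h(X(t+1)) - h(X(t)) \mid Q(t)=q],
\]
because $Q(t+1)$ is $\sigma(Q(t),A(t),U(t))$--measurable and therefore independent of $A(t+1)$, and $Q(t)$ is independent of $A(t)$. This reduces the problem to a drift calculation for $h$ along $X(t)$.

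Setting $\Delta X = A(t+1) - U(t)$, a third-order expansion yields
\[
h(X(t+1)) - h(X(t)) = \nabla h(X(t)) \cdot \Delta X + \half\, \Delta X^\transpose \nabla^2 h(X(t))\, \Delta X + R,
\]
with $|R| \le \tfrac{1}{6}\|\Delta X\|^3 \sup_x \|\nabla^3 h(x)\|$. Since $\|\Delta X\| \le 2+\umax$ and \Lemma{t:RBMACOE} yields a bound on $\|\nabla^3 h\|$ that is \emph{uniform in $\delta$}, one has $R = O(1)$. Independence of $A(t+1)$ from $\sigma(X(t), U(t))$ converts the first-order term into $-\Expect[\nabla h(X(t)) \cdot (U(t)-\alpha) \mid Q(t)=q]$, which is the first summand in $\cIDP(q)$.

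For the Hessian, split $\nabla^2 h = \hah''(w)\xi\xi^\transpose + \nabla^2 h_c$. The $h_c$ part contributes $O(1)$ since $\nabla^2 h_c$ is uniformly bounded in the fixed constants $\kappa,\beta$. The $\hah''$ part equals $\half\, \hah''(W(t))\, \Expect[(\xi \cdot \Delta X)^2 \mid X(t)]$; using $I(t)=-\xi \cdot U(t)$, a direct calculation gives $\Expect[(\xi \cdot \Delta X)^2 \mid X(t)] = (I(t)-\delta)^2 + \sigma^2_\Delta$. On $\{W(t) \ge -\thresh\}$ the constraint in \eqref{e:wchMW} forces $I(t)=0$, so the bracket reduces to $\sigma^2_\Delta + \delta^2$, and the $\delta^2\hah''(W(t))$ correction is $O(1)$ because $\delta\,\hah''(w) \le \barc_+$ on $\{w \ge 0\}$. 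On $\{W(t) < -\thresh\}$ both $I(t)$ and $\hah''$ are uniformly bounded (the latter by $\barc_-/\delta_+$), so the full Hessian contribution is $\half\sigma^2_\Delta \hah''(W(t)) + O(1)$ in either case. Replacing $\hah''(W(t))$ by $\hah''(\xi \cdot q)$ via the uniform Lipschitz bound on $\hah''$ (again from \Lemma{t:RBMACOE}) costs one more $O(1)$, using $|W(t) - \xi \cdot q| = |\xi \cdot A(t)| \le 2$.

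Rearranging the resulting identity according to the definition of $\cIDP$ gives \eqref{e:MWbdd}. The main obstacle is the uniform-in-$\delta$ control of $\|\nabla^3 h\|$ and of the Lipschitz constant of $\hah''$: although $\hah''$ itself is as large as $\barc_+/\delta$ on $\{w \ge 0\}$, its \emph{increments} over intervals of length $O(1)$ must stay $O(1)$. This depends on the delicate cancellations between $\barc_+$ and $\barc_- \log(1+\barc_+/\barc_-)$ encoded in the coefficients $A_\pm, D_-, \Theta, \theta$ of \eqref{e:hah}--\eqref{e:hahneg}, and is exactly what \Lemma{t:RBMACOE} is designed to establish.
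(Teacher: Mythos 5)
Your reduction of the $V$-drift to the $h$-drift, and your treatment of the workload component $\hah(\xi\cdot x)$ --- a Taylor expansion in the scalar workload variable combined with the $\delta$-uniform Lipschitz bound on $\hah''$ from \Lemma{t:RBMACOE}, plus the observation that $I(t)=0$ on $\{W(t)\ge-\thresh\}$ and $\hah''\le \barc_-/\delta_+$ below the threshold --- is essentially the paper's argument for that part. The genuine gap is in how you dispose of the $h_c$ part. You assert that $\nabla^2 h_c$ is uniformly bounded and that \Lemma{t:RBMACOE} gives a uniform-in-$\delta$ bound on $\sup_x\|\nabla^3 h(x)\|$; both assertions are false. \Lemma{t:RBMACOE} concerns only the one-dimensional function $\hah$ and says nothing about $h_c$. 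Writing $g(x)\eqdef c(\tilx)-\barc(\tilw)$, so that $h_c=\kappa g^2$, one has $\nabla^2 h_c = 2\kappa\bigl(\nabla g\,\nabla g^\transpose + g\,\nabla^2 g\bigr)$, and the diagonal entry $\partial^2_{kk}h_c$ contains the term $2\kappa\,g(x)\,c_k\beta^{-1}e^{-x_k/\beta}$: taking $x_k=0$ while other buffers are large makes this of order $\|x\|/\beta$, so $\nabla^2 h_c$ grows linearly in $\|x\|$; the same mechanism (with $f'''(0)=-\beta^{-2}$ and the third derivative of $\barc(\tilw)$ in $w$) makes $\nabla^3 h$ unbounded. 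Consequently your remainder bound $|R|\le\tfrac16\|\Delta X\|^3\sup_x\|\nabla^3 h\|$ is vacuous, and the sentence ``the $h_c$ part contributes $O(1)$'' is precisely the step that needs a proof and does not follow from global derivative bounds.

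The paper avoids this by never expanding $h_c$ beyond first order. In the decomposition \eqref{e:hahp}--\eqref{e:nabhc} only $\hah$ is expanded to second order (where the uniform Lipschitz bound on $\hah''$ applies), while the $h_c$ contribution stays inside $\cIDP$ through $\nabla h=\hah'(w)\xi+\nabla h_c$; the increment of $h_c$ is analyzed only later, in \Lemma{t:cmwDecomposition}, via the exact quadratic structure $h_c(X(t+1))-h_c(X(t)) = \kappa\,[g(X(t+1))+g(X(t))]\,[g(X(t+1))-g(X(t))]$, with $g(X(t+1))-g(X(t))=O(1)$ by Lipschitz continuity of $g$ and $g(X(t))=\zeta(q)+O(1)$. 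This is exactly why the possibly unbounded factor $\zeta(q)=c(q)-\barc(\xi\cdot q)$ is retained multiplicatively in $B_c(q)$ of \eqref{e:cmwB} rather than being absorbed into $O(1)$, as your expansion would incorrectly do. (Relatedly, your final ``rearranging'' should be checked against the sign with which $\half\sigma^2_\Delta\hah''(\xi\cdot q)$ enters \eqref{e:cmw}: in the paper's own bookkeeping, cf.\ \eqref{e:cmwAapp}, this term is part of the approximation of the drift itself.) To repair your argument you must either exploit the exponential factors $e^{-x_k/\beta}$ along the states actually reachable in one step, or follow the paper and handle $h_c$ through its increments rather than through global Hessian and third-derivative bounds.
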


The subscript ``idp'' refers to ``inverse dynamic programming'';  a concept from approximate dynamic programming  in which an approximate value function is given, and from this a cost function is obtained \cite{put14}.   If this cost function approximates the cost function of interest, then via this approach we obtain bounds on the average cost.

Suppose that the following bound can be established for \textit{some policy} satisfying  
the non-idling constraint in \eqref{e:wchMW}:
\begin{equation} 
  \cIDP(q)      \ge c(q)  + O(1) \, .
\label{e:GammaPolicyBdd}
\end{equation}
The $h$-MWT policy maximizes the dot-product within the expectation in \eqref{e:cmw},
subject to  \eqref{e:wchMW}.    Consequently, if \eqref{e:GammaPolicyBdd} holds for one such policy, it must also hold for the $h$-MWT policy.

The bounds \eqref{e:MWbdd}  and \eqref{e:GammaPolicyBdd}
 imply   the \textit{Foster-Lyapunov drift condition}
\begin{equation}
\Expect[  V(Q(t+1)) - V(Q(t)) \mid Q(t)=q] 
 \le  
 - c(q)
 + \eta^* + O(1)
 ,\qquad q\in\dstate.
\label{e:V3}
\end{equation}
The sequence $\bfmQ$ is a Markov chain under any stationary policy.
It is well-known that \eqref{e:V3}  implies that the average-cost defined in \eqref{e:eta}
 is bounded by $\eta^* + O(1)$ \cite{MT,CTCN}.

To see how the bound \eqref{e:GammaPolicyBdd} is obtained, recall the definition of $h$ in \eqref{e:hparts} as the sum of two terms.  Considering each term separately leads to the proof of \Lemma{t:cmwDecomposition}, which bounds $\cIDP$ by the sum of two terms. The proof is given in \Section{s:cmwDecomposition}.

This bound requires two additional definitions: 
mean idleness, and the gap between the current cost and the effective cost,
\begin{equation}
\begin{aligned}
\barIdle(q) &= \Expect[   I(t)    \mid Q(t)=q  ]  
  \\
  \zeta(q) & = c(q) - \barc(\xi\cdot q) 
\end{aligned}
\label{e:barIdle-zeta}
\end{equation}

\begin{lemma}
\label{t:cmwDecomposition}
The following decomposition holds under the assumptions of \Lemma{t:suggestsACOE}, 
\begin{equation}
-\cIDP(q)      \le B(q)+B_c(q) +O(1)
\label{e:t:cmwDecomposition}
\end{equation}
in which  
%
\begin{align}
B(q) &=   (\barIdle(q) -\delta) \hah'(\xi\cdot q) +\half \sigma^2_\Delta \hah''(\xi\cdot q) -\haeta^{**} 
\label{e:cmwA}
\\[.25cm]
B_c(q) &= 2\kappa \zeta(q) \Big\{
\Expect\bigl[    c(\tilX(t+1)) 
\label{e:cmwB}
\\
&\hspace{3cm}- c(\tilX(t))     \mid Q(t)=q \bigr]      + \barc_-   \barIdle(q) +\barc_+\delta \Bigr\} 
\notag
\end{align}
\qed
\end{lemma}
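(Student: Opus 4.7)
My plan is to decompose $-\cIDP(q)$ along $h = \hah + h_c$, matching each piece against $B(q)$ and $B_c(q)$. Starting from a second-order Taylor expansion of $h(X(t+1))$ about $X(t)$ (the same expansion underlying \Lemma{t:suggestsACOE}) and taking conditional expectations, one obtains
\[
-\cIDP(q) = \Expect[\hah(W(t+1)) - \hah(W(t)) \mid Q(t) = q] + \Expect[h_c(X(t+1)) - h_c(X(t)) \mid Q(t) = q] - \sigma^2_\Delta \hah''(w) + \eta^* + O(1),
\]
where $w = \xi \cdot q$ and the $O(1)$ absorbs the bounded second-order contribution from $\nabla^2 h_c$ as well as higher-order Taylor remainders.

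For the workload drift, I would expand $\hah(W(t+1))$ to second order about $W(t)$; since $W(t+1) - W(t) = I(t) - \delta + \Delta(t+1)$ has conditional mean $\barIdle(q) - \delta$ and variance $\sigma^2_\Delta$, the leading terms are $\hah'(w)(\barIdle(q) - \delta) + \half \sigma^2_\Delta \hah''(w) = B(q) + \haeta^{**}$. The delicate point is that $\hah''$ scales as $1/\delta$, but the non-idling constraint in \eqref{e:wchMW}, which forces $I(t) = 0$ whenever $W(t) \ge -\thresh$, reduces $(I(t) - \delta)^2$ to $\delta^2$ in the bulk regime, while the explicit formulas \eqref{e:hah}--\eqref{e:hahneg} give bounded $\hah''$ near and below $-\thresh$; together these keep all Taylor remainders uniformly $O(1)$ in $\delta$. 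Consequently $\Expect[\hah(W(t+1)) - \hah(W(t)) \mid Q = q] = B(q) + \haeta^{**} + O(1)$.

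For the cost-gap drift, apply the chain rule $\nabla h_c = 2\kappa\, \tilde{\zeta}\, \nabla \tilde{\zeta}$ with $\tilde{\zeta}(x) = c(\tilx) - \barc(\tilde{w})$. The tilting is constructed so that $|\tilx - x|, |\tilde{w} - w| = O(\beta)$, hence $\tilde{\zeta}(X(t)) = \zeta(q) + O(1)$, and a first-order expansion gives
\[
\Expect[h_c(X(t+1)) - h_c(X(t)) \mid Q = q] = 2\kappa\zeta(q)\, \Expect\bigl[\tilde{\zeta}(X(t+1)) - \tilde{\zeta}(X(t)) \bigm| Q = q\bigr] + O(1).
\]
Splitting $\tilde{\zeta}(X(t+1)) - \tilde{\zeta}(X(t)) = [c(\tilX(t+1)) - c(\tilX(t))] - [\barc(\tilW(t+1)) - \barc(\tilW(t))]$ reproduces the first bracketed term of $B_c(q)$ verbatim. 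For the $\barc$-piece, convexity of $\barc$ yields the tangent-line bound
\[
-\bigl[\barc(\tilW(t+1)) - \barc(\tilW(t))\bigr] \le -\barc'(\tilW(t))\bigl(\tilW(t+1) - \tilW(t)\bigr),
\]
and evaluating the conditional expectation in each of the two sign-regimes of $\tilW(t)$ (slopes $\barc_+$ and $-\barc_-$) produces the upper bound $\barc_- \barIdle(q) + \barc_+ \delta + O(1)$ uniformly. Since $\zeta(q) \ge 0$ (because $q$ is feasible in the effective-cost program defining $\barc(w)$), multiplying through by $2\kappa \zeta(q)$ yields $B_c(q) + O(1)$.

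Assembling, one gets $-\cIDP(q) \le B(q) + B_c(q) + [\eta^* + \haeta^{**} - \sigma^2_\Delta \hah''(w)] + O(1)$, and the main obstacle is verifying that the bracketed residual is bounded above by $O(1)$ uniformly in $\delta$ and $q$. For $w \ge 0$ this follows from the elementary inequality $r \ge \log(1+r)$ applied with $r = \barc_+/\barc_-$, which gives $\sigma^2_\Delta \hah''(w) \ge 2\haeta^{**}$ and hence a nonpositive residual. For $-\thresh \le w \le 0$ and $w < -\thresh$ the bound is more delicate: it uses the boundary conditions $\hah'(-\thresh) = \hah''(-\thresh) = 0$, the extension \eqref{e:hahneg} with penalty $\delta_+$, and a sharper accounting of the $h_c$-drift that captures additional cancellation against the workload residual in regions where $\hah''$ is small. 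This case-analysis is where the specific choice of $\thresh$ in \eqref{e:barrdiffusion} and the detailed construction of $\hah$ from Section~\ref{s:policy} are essential.
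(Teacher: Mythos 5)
Your two core estimates are exactly the ones in the paper's proof: the identification $\Expect[\hah(W(t+1))-\hah(W(t))\mid Q(t)=q]=B(q)+\haeta^{**}+O(1)$ via a second-order expansion using the Lipschitz bound on $\hah''$, the non-idling constraint (so $\hah''(W(t))I(t)=O(1)$) and $\hah''=O(\delta^{-1})$, and the bound $\Expect[h_c(X(t+1))-h_c(X(t))\mid Q(t)=q]\le B_c(q)+O(1)$ obtained by factoring out $2\kappa\zeta(q)\ge 0$ and bounding the increments of $\barc(\tilW)$ by a sub-gradient (convexity) argument in the two slope regimes. These coincide with \eqref{e:cmwAapp} and \eqref{e:preBbdd} in the appendix.

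The genuine gap is in your final assembly. Having carried the literal formula \eqref{e:cmw} through a second-order expansion, you are left with the residual $\eta^*+\haeta^{**}-\sigma^2_\Delta\hah''(\xi\cdot q)$, and your treatment of it fails. For $w\ge 0$ the inequality $r\ge\log(1+r)$ gives only $\sigma^2_\Delta\hah''(w)\ge 2\haeta^{**}$; to conclude the residual is nonpositive you would additionally need $\eta^*\le\haeta^{**}$, which is not available at this stage --- an upper bound of this type on $\eta^*$ is precisely the conclusion of \Thm{t:hMWao}, so invoking it here is circular. Worse, near the threshold the boundary condition $\hah''(-\thresh)=0$ makes $\sigma^2_\Delta\hah''(w)$ negligible while $\eta^*+\haeta^{**}$ is of order $\delta^{-1}$, so the residual cannot be $O(1)$ uniformly in $\delta$; and the ``additional cancellation'' you hope to extract from the $h_c$-drift does not exist, because that drift has already been spent, in full, on $B_c(q)$. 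The paper's proof does not run into this: it establishes exactly your two estimates and stops, in effect using $\cIDP$ only through the drift relation of \Lemma{t:suggestsACOE}, so that $-\cIDP(q)=\bigl[B(q)+\haeta^{**}\bigr]+B_c(q)-\eta^*+O(1)\le B(q)+B_c(q)+O(1)$, the last step needing only $\haeta^{**}\le\haeta^*+O(1)\le\eta^*+O(1)$ (\Lemma{t:heats}). Your uncontrolled $-\sigma^2_\Delta\hah''$ term arises from expanding the gradient term of \eqref{e:cmw} to second order on top of the $\half\sigma^2_\Delta\hah''$ already present in $\cIDP$, i.e.\ from double-counting that correction; if you instead use the drift characterization of $\cIDP$ and the inequality $\haeta^{**}\le\eta^*+O(1)$, the case analysis you defer becomes unnecessary. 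As written, the deferred regimes $-\thresh\le w\le 0$ and $w<-\thresh$ are not proved, and the $w\ge 0$ case as argued is incorrect.
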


In Step 2   we construct a randomized policy, designed so that 
\begin{romannum}
\item The non-idling constraint in \eqref{e:wchMW} holds.
\item  Idling is enforced for $W(t)<-\thresh$:
\begin{equation}
\barIdle(q) =(\delta+\delta_+ ) \Expect[      \ind\{W(t)<-\thresh\} \mid Q(t)=q\bigr] \,.  
\label{e:EnforceIdle}
\end{equation} 
It is shown in \Lemma{t:e:EnforceIdle} (in \Section{s:cmwDecomposition} of the appendix) that\eqref{e:EnforceIdle} implies the following bound:
 \begin{equation}
B(q)  \le  -\barc(\xi\cdot q) +  O(1)
\label{e:Abdd}
\end{equation}
The proof is based on the dynamic programming equation \eqref{e:RBMACOE}.

\item For some $\epsy_c>0$, independent of $\delta>0$,
\begin{equation}
B_c(q)  \le - 2\kappa \bigl(\epsy_c - \barc_- \delta_+ - \barc_+\delta \bigr) \bigl(c(q) - \barc(\xi\cdot q)\bigr) 
\label{e:Bbdd}
\end{equation}
\end{romannum}
The constants $\delta_+$ and $\maxdelta$ are chosen so that $\epsy_c - \barc_- \delta_+ -\barc_+\maxdelta>0$.  We then take $\kappa = \half (\epsy_c - \barc_- \delta_+-\barc_+\maxdelta)^{-1}$, so that
\[
B_c(q)  \le -   c(q)   + \barc(\xi\cdot q)   + O(1)
\]
 
Substituting these bounds on $B(q)$ and $B_c(q)$ into \eqref{e:t:cmwDecomposition}, we conclude that the drift condition \eqref{e:GammaPolicyBdd} holds under this policy.  The $h$-MWT~policy must also satisfy the   lower bound \eqref{e:GammaPolicyBdd}, which establishes  \eqref{e:V3}.

 \medskip

\wham{Step 2} 
A randomized policy is designed to satisfy properties (i)--(iii) from Step 1. In particular,   idling is permitted only when the workload is below the threshold $-\thresh$.     

A constant $\barq\ge1$ is required in the construction of this policy, and based on this we define
\begin{equation}
\minc = 1+4\ell\barq  \cmax 
\label{e:minc}
\end{equation}
where $\cmax=\max c_i$, and $\ell$ is the number of buffers in the network.  
The lower bound required for $\beta $ is given in
\eqref{e:beta}, and a lower bound required for $\barq$ is given in 
\eqref{e:barq}.

\begin{lemma}
\label{t:RandomizedHeavyDrift}
Under the assumptions of \Thm{t:hMWao},   
 there exist constants  $\bardelta_0\in (0,\maxdelta)$ and
$\epsy_0, \epsy'_0>0$, 
such for each $\delta \in [0,\bardelta_0]$, there is
a randomized policy that allows no cross-matching, and satisfies the following uniform bounds:  For any $q\in\dstate$,
\begin{romannum}
\item
If  $c(q) < \barc(\xi\cdot q)+ \minc $, then for each $j$
\begin{equation}
\Expect\bigl[   Q_j (t+1) \mid Q(t)=q\bigr] \le  q_j  + \delta \,, \qquad \text{provided $q_j\ge 1$}
\label{e:RandomizedHeavyDriftA}
\end{equation}

\item
If  $c(q)\ge \barc(\xi\cdot q) + \minc$, then
there are $k,m$ for which $q_k\ge \barq$,  $q_m\ge 1$,  
\begin{equation}
\begin{aligned}
\Expect\bigl[     Q_k (t+1)     \mid Q(t)=q\bigr] &\le  q_k  
-\epsy_0
\\[.25cm]
\Expect\bigl[   c_k Q_k (t+1)  +c_m Q_m (t+1)   \mid Q(t)=q\bigr] &\le  c_k q_k+c_mq_m 
-\epsy'_0 
\end{aligned}
\label{e:RandomizedHeavyDriftB}
\end{equation} 
and moreover \eqref{e:RandomizedHeavyDriftA} continues to hold for all $j\neq m$.  \end{romannum}
\qed
\end{lemma}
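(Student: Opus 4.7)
The plan is to construct the randomized policy as two network-flow building blocks and to perturb it by a greedy override when a heavy off-reference buffer is present. Following the construction preceding \Lemma{t:ZeroDrift}, I build strictly positive maximum flows $F_1$ on the restricted matching subgraph $\resG_1$ associated with $(D,\Sply(D))$ and $F_2$ on the subgraph $\resG_2$ associated with $(\Dmd(\Sply(D)^c),\Sply(D)^c)$, using the direct and symmetric statements of \Lemma{t:MCMF}. Any demand class $i\in D^c$ with $\Sply(i)\subset\Sply(D)$ lies in neither subgraph; applying Assumption~(A1) with $D''=D\cup\{i\}$ gives $\alpha_i\le\delta-\udelta$, so choosing $\bardelta_0<\udelta$ makes $\alpha_i=0$ for every such class and we may discard these. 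The \emph{base policy} routes each arrival via the vectors \eqref{e:pi}--\eqref{e:pj} built from whichever of $F_1,F_2$ matches its subgraph; by construction no match ever crosses between $\Sply(D)$ and $D^c$, and applying \Lemma{t:ZeroDrift} to each subgraph yields \eqref{e:RandomizedHeavyDriftA} directly, establishing part~(i).

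For part~(ii) I first locate a heavy off-reference buffer. Using \Lemma{t:effCost} together with the balance identity $\xi^0\cdot q=0$, the excess admits the non-negative decomposition $c(q)-\barc(\xi\cdot q)=\sum_k\gamma_k q_k$, where each $\gamma_k$ lies in $[0,2\cmax]$ and vanishes precisely on the classes used by an optimiser of~\eqref{e:effcost}. The hypothesis $c(q)-\barc(\xi\cdot q)\ge\minc=1+4\ell\barq\cmax$ therefore forces at least one index $k$ with $q_k\ge\barq$ and $\gamma_k$ bounded below by a $\delta$-independent constant $\gamma_\circ>0$ (the minimum positive value among the $\gamma_k$); were this to fail, the decomposition would give $\sum_k\gamma_k q_k<4\ell\barq\cmax<\minc$, a contradiction. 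I then modify the base policy by a \emph{greedy override on $k$}: on a $\delta$-independent fraction of the time-steps in which a partner of $k$ arrives, the match is forced to $k$ rather than being allocated via \eqref{e:pi}--\eqref{e:pj}, the redirected probability being taken from a single competitor $m$ in the same subgraph chosen with $q_m\ge 1$ and $c_m$ minimal (such an $m$ exists because the balance constraint and $q_k\ge\barq\ge 1$ force another non-empty buffer). Combined with the strict inequalities in~\eqref{e:NCond} implied by (A1) and the $\delta$-uniform positivity of the arrival rates from (A2), the override delivers a $\delta$-independent excess match rate at $k$ of size $\epsy_0>0$, yielding the first bound in~\eqref{e:RandomizedHeavyDriftB}. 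Since the stolen probability grows $Q_m$ at the same rate, the cost drift satisfies $\Expect[c_kQ_k(t+1)+c_mQ_m(t+1)\mid Q(t)=q]\le c_kq_k+c_mq_m-(c_k-c_m)\epsy_0\le c_kq_k+c_mq_m-\gamma_\circ\epsy_0=:c_kq_k+c_mq_m-\epsy'_0$. For every $j\notin\{k,m\}$ the override changes nothing, so \Lemma{t:ZeroDrift} continues to supply~\eqref{e:RandomizedHeavyDriftA}.

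The main obstacle is arranging the three requirements of part~(ii)---strict pointwise drift $-\epsy_0$ on $Q_k$, strict cost drift $-\epsy'_0$ on $c_kQ_k+c_mQ_m$, and preservation of~\eqref{e:RandomizedHeavyDriftA} for every $j\ne m$---\emph{simultaneously} and with constants independent of $\delta$. The hazard is that many edge-flow values in $F_r$ are themselves of order $\delta$, so a na\"ive reweighting of the base policy could make the override's margin shrink as $\delta\downarrow 0$. The remedy is to extract the override's probability mass from the portion of the maximum flow present already at $\delta=0$: \Lemma{t:MCMF} guarantees this portion to be strictly positive on every edge of the restricted matching graph, and by the Lipschitz continuity in Assumption~(A2) it remains bounded below uniformly in $\delta\in[0,\bardelta_0]$. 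This $\delta$-uniformity simultaneously validates both $\epsy_0$ and $\epsy'_0$, while leaving the base-policy match rates at every buffer other than $k$ and $m$ untouched so that~\eqref{e:RandomizedHeavyDriftA} is preserved via \Lemma{t:ZeroDrift}.
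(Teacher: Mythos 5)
Your Step I and part (i) match the paper: split the graph into the two no-cross-matching components, build strictly positive flows with a $\delta$-uniform edge lower bound, and invoke \Lemma{t:ZeroDrift}. Your identification of a heavy ``off-optimizer'' buffer via the nonnegative decomposition $c(q)-\barc(\xi\cdot q)=\sum_k\gamma_k q_k$ (valid once you shift by a multiple of $\xi^0$, with the coefficients depending on the sign of $\xi\cdot q$) is a clean reformulation of the paper's Lemma~\ref{t:GapDriftCase1}. The gap is in the mechanism you use to turn this into the drift bounds of part (ii). The ``greedy override that steals probability from a single cheapest non-empty competitor $m$'' does not work in general, for three concrete reasons. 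First, your existence argument for $m$ is wrong: the balance constraint $\xi^0\cdot q=0$ forces a non-empty buffer on the \emph{opposite} side (supply vs.\ demand), not a same-side competitor adjacent to an arriving partner class of $k$; without such adjacency there is no probability mass to redirect in a single step. Second, even when such an $m$ exists, nothing guarantees $c_m\le c_k-\gamma_\circ$ — the cheapest class on $k$'s side may have an empty queue (and empty queues receive no routing mass under \eqref{e:pi}--\eqref{e:pj}), so the non-empty minimizer can be more expensive than $k$, and your claimed bound $\epsy'_0=\gamma_\circ\epsy_0$ collapses. Third, and most structurally, when the heavy buffer is e.g.\ a supply class $k\in\Sply(D)$ (with $\xi\cdot q\ge 0$), \emph{every} same-side class has $\gamma>0$, so no redistribution of arrivals among competitors can decrease the weighted cost: redirection keeps the total match rate fixed, whereas what is needed there is to \emph{increase} it.

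This is exactly what the paper's Step II supplies and your proposal lacks: a flow augmentation along a path in the connected subgraph. In the paper's case (a) the path joins the heavy supply buffer to a non-empty demand buffer and the modified policy forces matches between \emph{already queued} items at the path endpoints, so both endpoint drifts become negative without penalizing any third queue (intermediate $\pm\epsy_1$ increments cancel); in case (b) the path shifts rate $\epsy_2$ from the cost-minimal reference class ($c=\barcpD$, chosen by cost, not by adjacency or non-emptiness) to the expensive heavy class, and the cost gap $c_k-\barcpD>0$ yields the weighted bound. Both constructions need connectivity of the subgraph and the $\delta$-uniform lower bound $\gamma$ on edge flows (the strengthened version of \Lemma{t:MCMF} proved in the appendix, which you only gesture at via (A2)). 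Without the path-augmentation idea — and the option of matching two existing queued items rather than only rerouting arrivals — the simultaneous bounds in \eqref{e:RandomizedHeavyDriftB} together with preservation of \eqref{e:RandomizedHeavyDriftA} for $j\neq m$ cannot be obtained with $\delta$-independent constants.
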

 
A parallel result is established in the Appendix (\Lemma{t:RandomizedHeavyDriftIdle}), in which idling is enforced  at average rate $\delta+\delta_+$ when $W(t)<-\thresh$, so that \eqref{e:EnforceIdle} holds.

The bound \eqref{e:Bbdd} requires an additional step.   

\wham{Step 3} 
Whenever
$c(q)\ge \barc(\xi\cdot q) + \minc$,  the randomized policy satisfies,
\begin{equation}
\Expect\bigl[ c(\tilX(t+1)) -c(\tilX(t))   \mid Q(t)\bigr]
\le 
-\epsy_c     
\label{e:tilWdrift}
\end{equation}
where $\epsy_c>0$.
The proof is contained in \Section{s:e:tilWdrift};  it is based on the proof of
Prop.~2.7 of \cite{mey09a}.        

These three steps define the randomized policy that satisfies \eqref{e:GammaPolicyBdd},
which shows that \eqref{e:V3}  holds under the $h$-MWT policy.   
This completes the proof of the theorem. 
\end{proof}

\begin{figure}
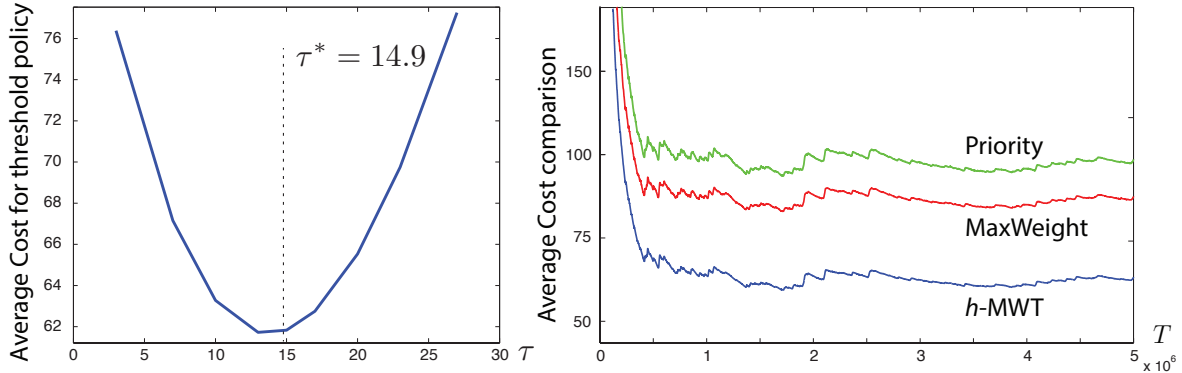

\Ebox{1}{costThresholdsAndCompare123321_2016.pdf}
\vspace{-.35cm}
\caption{\small
Shown on the right is the $h$-MWT~policy in which the threshold $\tau$ was taken as a parameter.     The plots on the right hand side compare the average cost obtained using $h$-MWT,  MaxWeight (given by \eqref{e:hMW} with $h(x) = \sum c_i x_i^2$),  
and a priority policy.}  
\label{f:123321}
\end{figure}


Numerical experiments were performed for the \NN-network for various cost functions and arrival statistics.  Results from one set of experiments are shown in \Fig{f:123321}.  The set $D$ in this experiment was taken to be $D =\{3^{\dmd}\}$,
and the arrival rate was chosen so that $\delta = -\xi^\transpose\alpha$ was much smaller than $ -\xi^{D'}\cdot\alpha$ for any other set $D'\subsetneq\Dmd$.
The cost   was taken to be $c(x)  =  x_1^\dmd + 2x_2^\dmd+3 x_3^\dmd +3
x_1^\sply + 2x_2^\sply +x_3^\sply$.  

The policy \eqref{e:wchMW} was considered for various values of the threshold $\tau$.   
For each value of $\tau$, the average cost was estimated by the sample-path averages,
\[
\frac{1}{T} \sum_{t=1}^T c(Q(t))\, .
\] 
The value $ T= 5\times 10^6$ was required for reliable estimation.

 The results are shown on the left hand side of 
\Fig{f:123321}. 
The average cost is large because the drift vector was taken to be small,
  $\delta = 0.007$.    Recall that the workload relaxation \eqref{e:ExWrelaxation}
 will have an average cost of order $O(\delta^{-1})$.  The best value of $\tau$ obtained through simulation is very close to the value $\thresh$ predicted by the RBM model.

In the comparison plots shown on the right, the
 static priority policy gives priority to vertical matches (edges $e_1$, $e_3$ and $e_5$ in \Fig{f:NN}).    The MaxWeight policy considered was cost-weighted:
 Given the state $x$, a new demand of type $i$ is matched to a supply class $j^*$ satisfying,
\[ 
j^*\in \argmax_{j\in \ISply(i)} c_jx_j^\sply.
\] 
Matching of  supply is determined symmetrically.

The average cost under the $h$-MWT~policy (using the   $\tau=\thresh$) performed the best -- about 30\%\ lower than the cost-weighted MaxWeight policy.

\section{Conclusions}
\label{s:conc}

The dynamic bipartite matching model is an unusual queueing system with the particularity that the workload process can be negative. 
We have shown how relaxation techniques can lead to insight for the construction of good policies with low complexity.  

The numerical results show that the average-cost performance can be outstanding when compared with priority policies, or standard versions of the MaxWeight policy. 
It is remarkable how well the ``diffusion heuristic'' predicts the best threshold for the discrete-time model.

The key argument is a correspondence with models in inventory theory. Although the theoretical results are based on a heavy-traffic setting, this structure  will play some role even when the assumptions of the paper are violated.


 
\bigskip

\noindent
{\bf Acknowledgments:} This research is supported by the NSF grants CPS-0931416 and CPS-1259040,
the French National Research Agency grant ANR-12-MONU-0019.

\bibliographystyle{abbrv} 
\bibliography{bibs/strings,bibs/markov,bibs/q,bibs/sim,bibs/matching}

\def\cprime{$'$}\def\cprime{$'$}
\begin{thebibliography}{10}

\bibitem{KidneyPaired}
{United Network for Organ Sharing}.
\newblock Online,
  \url{https://www.unos.org/docs/Living_Donation_KidneyPaired.pdf}.

\bibitem{abbmw13}
I.~Adan, M.~Boon, A.~Bu\v{s}i\'{c}, J.~Mairesse, and G.~Weiss.
\newblock Queues with skill based parallel servers and a fcfs infinite matching
  model.
\newblock {\em SIGMETRICS Perform. Eval. Rev.}, 41(3):22--24, Jan. 2014.

\bibitem{abmw15}
I.~{Adan}, A.~{Busic}, J.~{Mairesse}, and G.~{Weiss}.
\newblock {Reversibility and further properties of FCFS infinite bipartite
  matching}.
\newblock {\em ArXiv e-prints}, July 2015.

\bibitem{adawei12}
I.~Adan and G.~Weiss.
\newblock Exact {FCFS} matching rates for two infinite multitype sequences.
\newblock {\em Operations Research}, 60:475--489, 2012.

\bibitem{beronn08}
Y.~Berstein and S.~Onn.
\newblock Nonlinear bipartite matching.
\newblock {\em Discrete Optimization}, 5(1):53--65, 2008.

\bibitem{busgupmai13}
A.~Bu{\v{s}}i{\'c}, V.~Gupta, and J.~Mairesse.
\newblock Stability of the bipartite matching model.
\newblock {\em Adv. in Appl. Probab.}, 45(2):351--378, 2013.

\bibitem{clasca60}
A.~J. Clark and H.~E. Scarf.
\newblock Optimal policies for a multi-echelon inventory problem.
\newblock {\em Management Sci.}, 6:465--490, 1960.

\bibitem{dubfree81}
L.~E. Dubins and D.~A. Freedman.
\newblock {Machiavelli and the Gale-Shapley algorithm}.
\newblock {\em American mathematical monthly}, pages 485--494, 1981.

\bibitem{gurwar14}
I.~Gurvich and A.~Ward.
\newblock On the dynamic control of matching queues.
\newblock {\em Stochastic Systems}, 4:1--45, 2014.

\bibitem{harwil87a}
J.~M. Harrison and R.~J. Williams.
\newblock Brownian models of open queueing networks with homogeneous customer
  populations.
\newblock {\em Stochastics}, 22(2):77--115, 1987.

\bibitem{kellaw93a}
F.~Kelly and C.~Laws.
\newblock Dynamic routing in open queueing networks: {Brownian} models, cut
  constraints and resource pooling.
\newblock {\em Queueing Syst. Theory Appl.}, 13:47--86, 1993.

\bibitem{law90a}
N.~Laws.
\newblock {\em Dynamic routing in queueing networks}.
\newblock PhD thesis, Cambridge University, Cambridge, UK, 1990.

\bibitem{lovplu09}
L.~Lov{\'a}sz and M.~Plummer.
\newblock {\em Matching theory}, volume 367.
\newblock American Mathematical Society, 2009.

\bibitem{mey09a}
S.~Meyn.
\newblock Stability and asymptotic optimality of generalized {MaxWeight}
  policies.
\newblock {\em SIAM J. Control Optim.}, 47(6):3259--3294, 2009.

\bibitem{mey03a}
S.~P. Meyn.
\newblock Sequencing and routing in multiclass queueing networks. {Part II}:
  Workload relaxations.
\newblock {\em SIAM J. Control Optim.}, 42(1):178--217, 2003.

\bibitem{CTCN}
S.~P. Meyn.
\newblock {\em Control Techniques for Complex Networks}.
\newblock Cambridge University Press, 2007.
\newblock Pre-publication edition available online.

\bibitem{MT}
S.~P. Meyn and R.~L. Tweedie.
\newblock {\em Markov chains and stochastic stability}.
\newblock Cambridge University Press, Cambridge, second edition, 2009.
\newblock Published in the Cambridge Mathematical Library. 1993 edition online.

\bibitem{put14}
M.~L. Puterman.
\newblock {\em Markov decision processes: discrete stochastic dynamic
  programming}.
\newblock John Wiley \&\ Sons, 2014.

\bibitem{sch99}
R.~Schalkoff.
\newblock {\em Pattern recognition}.
\newblock Wiley Online Library, 1999.

\bibitem{zdedecche09}
L.~Zdeborov\'a, A.~Decelle, and M.~Chertkov.
\newblock Message passing for optimization and control of a power grid: Model
  of a distribution system with redundancy.
\newblock {\em Phys. Rev. E}, 80:046112, Oct 2009.

\end{thebibliography}


\vspace{1cm}

\appendix

\centerline{\Large\bf Appendix}
\bigskip

This appendix contains full details for the first two steps of the proof of \Thm{t:hMWao}.  The third step is immediate from Prop.~2.7 of \cite{mey09a}.

\section{Step 1:  Drift for $h$}

This section concerns mainly the function $\hah$,  and concludes with a proof of \Lemma{t:cmwDecomposition}. 

Recall that $\hah$ depends on the parameter $\Theta = \delta/(2 \sigma^2_\Delta )$, and a fixed constant $\theta>0$.  
Based on the solution we obtain these conclusions:
\begin{proposition}
\label{t:RBMACOEa}
There exist unique constants $\{A_\pm,B_\pm, C_-,D_-\}$ such that the function $\hah$ defined by (\ref{e:hah},\ref{e:hahneg}) 
satisfies the following:
\begin{romannum}
\item
The ODE \eqref{e:RBMACOE} holds for $w>-\thresh$,  and for $w\le -\thresh$ we have the approximation,
\begin{equation}
\delta_+
\hah'\, (w) +\half\sigma^2_\Delta   \hah''\, (w) 
= -\barc(w) + \haeta^{**}  
+  O(1)
\label{e:RBMACOE-}
\end{equation}
where the term ``$O(1)$'' depends on $\theta$ but is independent of $\delta$.

\item $\hah$ is strictly convex and $C^2$ on all of $\Re$,  with a unique minimum at the threshold.  Moreover,  
\[
\hah'(-\thresh) = \hah''(-\thresh) = 0.
\]  

\item
It has quadratic growth globally,  and cubic growth locally:  For a constant  $K<\infty$, and all $\delta>0$,
\[ 
|\hah'(w) | \le  K \min \Bigl\{   \delta^{-1} (w+\thresh),    (w+\thresh)^2 \Bigr\}
\]

\end{romannum}
\end{proposition}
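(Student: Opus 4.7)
The plan is to determine the six constants $A_\pm, B_\pm, C_-, D_-$ by substituting the piecewise ansatz into the ODE \eqref{e:RBMACOE} and imposing smoothness and boundary conditions; the key point is that the definitions of $\thresh$ in \eqref{e:barrdiffusion} and $\haeta^{**}$ in \eqref{e:etastar} are precisely what make the resulting over-determined linear system consistent. On $w\ge 0$, matching the polynomial ansatz against $-\barc_+ w+\haeta^{**}$ forces $A_+=\barc_+/(2\delta)$ and $B_+=(\sigma^2_\Delta A_+ - \haeta^{**})/\delta$; on $-\thresh\le w\le 0$, matching against $\barc_- w+\haeta^{**}$ gives $A_-=-\barc_-/(2\delta)$ and $B_-=(\sigma^2_\Delta A_- - \haeta^{**})/\delta$, while the exponential $D_- e^{\Theta w}$ with $\Theta=2\delta/\sigma^2_\Delta$ is the nontrivial kernel element of $-\delta(\cdot)' + \half\sigma^2_\Delta(\cdot)''$. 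The remaining constants are determined by four matching conditions: $\hah,\hah',\hah''$ continuous at $w=0$ together with $\hah'(-\thresh)=\hah''(-\thresh)=0$. Continuity of $\hah''$ at $0$ forces $D_-\Theta^2=(\barc_++\barc_-)/\delta$; combined with $\hah''(-\thresh)=0$ (i.e., $D_-\Theta^2 e^{-\Theta\thresh}=\barc_-/\delta$), this yields $e^{\Theta\thresh}=1+\barc_+/\barc_-$, which is precisely \eqref{e:barrdiffusion}. Continuity of $\hah'$ at $0$ reduces to an identity once $B_\pm$ and $D_-$ are substituted, and $\hah'(-\thresh)=0$ pins down $\haeta^{**}=\barc_-\thresh$, which is \eqref{e:etastar}. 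Finally, $\hah(0^-)=\hah(0^+)=0$ fixes $C_-=-D_-$. Uniqueness follows since each constant is pinned down in succession, and (i) on $w>-\thresh$ is established.

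For (i) on $w<-\thresh$, write $v=w+\thresh\le 0$. Direct differentiation of \eqref{e:hahneg} gives $\hah''(w)=(\barc_-/\delta_+)(1-e^{\theta v})$ and $\hah'(w)=(\barc_-/\delta_+)\bigl[v+\theta^{-1}(1-e^{\theta v})\bigr]$. Substituting into the left side of \eqref{e:RBMACOE-}, the $\delta_+$ factor in $\hah'$ cancels to produce
\[
\barc_- v + \Bigl[\frac{\barc_-}{\theta} + \half\,\frac{\sigma^2_\Delta \barc_-}{\delta_+}\Bigr](1-e^{\theta v}).
\]
Since $\barc_- v = -\barc(w)+\haeta^{**}$ (using $\barc(w)=-\barc_- w$ for $w<0$ and $\haeta^{**}=\barc_-\thresh$), and since $e^{\theta v}\in(0,1]$ so the bracketed remainder is bounded by a constant depending only on $\theta$, $\delta_+$, and $\sigma^2_\Delta$ (which is bounded in $\delta$ by (A2)), the approximate ODE holds with the claimed $O(1)$ error depending on $\theta$ but not $\delta$. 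For (ii), $\hah''$ is strictly positive except at $-\thresh$ where it vanishes by construction: it equals $\barc_+/\delta$ on $w\ge 0$; on $[-\thresh,0]$ the identity $e^{-\Theta\thresh}=\barc_-/(\barc_++\barc_-)$ shows $\hah''$ vanishes only at $-\thresh$; and on $w<-\thresh$ it is $(\barc_-/\delta_+)(1-e^{\theta v})>0$. Together with $\hah'(-\thresh)=0$, this gives strict convexity and the unique minimizer, and $C^2$ matching at both joints was imposed in the construction.

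For (iii), each branch of $\hah''$ is $O(1/\delta)$ (using $D_-\Theta^2=(\barc_++\barc_-)/\delta$ on the middle interval), so integrating from $\hah'(-\thresh)=0$ gives the linear bound $|\hah'(w)|\le K\delta^{-1}|w+\thresh|$ globally. The quadratic bound rests on the observation that $\hah'''$ is \emph{uniformly} bounded in $\delta$: on $[-\thresh,0]$, $\hah'''(w)=(\barc_++\barc_-)\Theta/\delta \cdot e^{\Theta w} = 2(\barc_++\barc_-)/\sigma^2_\Delta \cdot e^{\Theta w} = O(1)$ since the $\delta$-factors cancel, and the region $w<-\thresh$ is trivially $O(1)$. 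Taylor expansion about $-\thresh$ using $\hah'(-\thresh)=\hah''(-\thresh)=0$ then yields $|\hah'(w)|\le C(w+\thresh)^2$ in a fixed neighborhood of the threshold. The main obstacle is extending this quadratic bound globally to $w\ge 0$, where $|\hah'(w)|$ grows like $\barc_+ w/\delta$: this is rescued by the fact that $\thresh$ itself scales as $1/\delta$, so the cross term $2Kw\thresh$ appearing in $K(w+\thresh)^2$ dominates $\barc_+ w/\delta$ as soon as $K\ge \barc_+/(2\thresh\delta)$, which is uniform in $\delta$ since $\thresh\delta = \half\sigma^2_\Delta\log(1+\barc_+/\barc_-)$ is a constant. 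Combining both estimates yields the $\min$-bound.
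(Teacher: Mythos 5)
Your proposal is correct and follows essentially the same route as the paper: explicit determination of the coefficients from the ODE together with smoothness matching (the formulas for $\thresh$ and $\haeta^{**}$ supplying consistency of the over-determined conditions), a direct computation giving the approximate ODE on $w\le-\thresh$, convexity from the sign of $\hah''$, and the growth bounds from $\hah''=O(1/\delta)$ plus the $\delta$-uniform bound on $\hah'''$ with $\hah'(-\thresh)=\hah''(-\thresh)=0$. The only differences are cosmetic: you pin down $D_-$ via continuity of $\hah''$ at $0$ rather than of $\hah'$ (the two conditions give the same value, the other then holding automatically), and you spell out the global quadratic bound for $w\ge 0$ via the observation that $\thresh\delta$ is independent of $\delta$ -- a step the paper leaves implicit when it asserts part (iii) follows from its lemma.
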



\begin{proof}
\Lemma{t:RBMACOE} establishes that $\hah$ is $C^2$ and solves the ODE \eqref{e:RBMACOE}  on the domain
$(-\thresh,\infty)$.  For $w\le -\thresh$ we use the definition \eqref{e:hahneg} to compute,
\begin{equation}
\begin{aligned} 
\delta_+
  \hah'\, (w) +&\half\sigma^2_\Delta \hah''\, (w) 
\\
 &= \delta_+ \frac{\barc_-}{\delta_+} \Bigl[ (w+\thresh) + \frac{1}{\theta}   - \frac{1}{\theta}  \exp(\theta (w+\thresh)  )\Bigr]  
 \\
 &\qquad + \half \sigma^2_\Delta  \frac{\barc_-}{\delta_+} \Bigl[ 1   -    \exp(\theta (w+\thresh)  \Bigr]   
\end{aligned}
\label{e:RBMACOEneg}
\end{equation}
On using the identity $\barc_- \thresh = \haeta^{**}$, and $\barc_-w = -\barc(w)$ for $w<0$, the right hand side becomes,
\[
\begin{aligned}
 -\barc(w) + \haeta^{**}  +  \bigl( \frac{\barc_-}{\theta}    + \half \sigma^2_\Delta  \frac{\barc_-}{\delta_+} \bigr) \bigl(  1   -    \exp(\theta (w+\thresh)   \bigr) 
\end{aligned}
\]
This   proves  \eqref{e:RBMACOE-} when $  w+\thresh \le 0$. 

We now prove (ii). It   follows from the definition \eqref{e:hahneg} that $\hah$ is convex on $(-\infty,-\thresh)$ --- its second derivative is strictly positive on this interval.  It also follows from this definition that 
\[
\lim_{w\uparrow -\thresh}\hah'(w) = \lim_{w\uparrow -\thresh}\hah''(w) = 0.
\]
 It is established in 
\Lemma{t:RBMACOE}  that $\hah$ is $C^2$,
 convex and increasing on $(-\thresh,\infty)$, with
\[
\lim_{w\downarrow -\thresh}\hah'(w) = \lim_{w\downarrow -\thresh}\hah''(w) = 0.
\]
Part (iii) also follows from \Lemma{t:RBMACOE}.
\end{proof}

\subsection{Computation of $\hah$ for $w\ge -\thresh$}

The following result gives properties of $\hah$ on this domain.   The proof contains an explicit construction for $\hah$.

\begin{lemma}
\label{t:RBMACOE}
There exist unique constants $\{A_\pm,B_\pm, C_-,D_-\}$ for which the function \eqref{e:hah}  is $C^2$ on $[-\thresh,\infty)$,
with $ \hah'\, (-\thresh) =\hah''\, (-\thresh) =0$.   With these parameters, 
the function $\hah$ has the following additional properties:
\begin{romannum}
\item It is  strictly convex on $[-\thresh,\infty)$.

\item    The second derivative satisfies, for some $K<\infty$, and all $\delta>0$,  $w,w'\in[-\thresh,\infty)$,
\[
\begin{aligned}
     \Bigl| \frac{d^2}{dw^2}\hah(w)     \Bigr| & \le K \frac{1}{ \delta}
\\[.25cm]
   \Bigl| \frac{d^2}{dw^2}\hah(w)  -\frac{d^2}{dw^2}\hah(w')  \Bigr| &\le K |w-w'|
\end{aligned}
\] 
\item The third derivative exists for $w\neq 0$, it is uniformly bounded by $(\barc_++\barc_-)/\sigma^2_\Delta$ on $(-\thresh,0)$,  and vanishes for $w>0$.
\end{romannum}
\end{lemma}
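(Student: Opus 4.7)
The plan is to determine the seven free constants $A_\pm$, $B_\pm$, $C_-$, $D_-$ by (a) coefficient matching in the ODE \eqref{e:RBMACOE} on each of the two subdomains, (b) the two boundary conditions $\hah'(-\thresh)=\hah''(-\thresh)=0$, and (c) $C^2$ pasting at $w=0$. Properties (i)--(iii) then follow by elementary inspection of $\hah''$ and $\hah'''$.

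First I would work on $[0,\infty)$, where $\barc(w)=\barc_+ w$. Substituting the quadratic ansatz $A_+ w^2 + B_+ w$ into \eqref{e:RBMACOE} and matching coefficients of $w$ and $1$ determines $A_+=\barc_+/(2\delta)$ and $B_+$ as an affine function of $\haeta^{**}$. On $[-\thresh,0]$ the effective cost is $-\barc_- w$, and the crucial observation is that $\Theta=2\delta/\sigma^2_\Delta$ is precisely the nonzero root of the characteristic polynomial $-\delta\lambda+\half\sigma^2_\Delta\lambda^2$, so $D_- e^{\Theta w}$ is annihilated by the left-hand operator of the ODE. Consequently the polynomial pieces alone must absorb the right-hand side, fixing $A_-=-\barc_-/(2\delta)$ and giving $B_-$ as an affine function of $D_-$ and $\haeta^{**}$.

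Next I would impose the two boundary conditions at $w=-\thresh$. The condition $\hah''(-\thresh)=0$ yields $D_-\Theta^2=(\barc_++\barc_-)/\delta$ once one substitutes the identity $e^{-\Theta\thresh}=\barc_-/(\barc_++\barc_-)$, which is exactly the definition \eqref{e:barrdiffusion} of $\thresh$ rewritten. The condition $\hah'(-\thresh)=0$ then pins down $B_-$, and continuity of $\hah$ at $w=0$ normalizes $C_-=-D_-$. The main obstacle is then to verify that the two remaining matching conditions $\hah'(0^-)=\hah'(0^+)$ and $\hah''(0^-)=\hah''(0^+)$ hold automatically, given that all seven constants are already fixed. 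The second-derivative match collapses to the algebraic identity $\delta D_-\Theta^2=\barc_++\barc_-$, which has just been established. The first-derivative match, after assembly, simplifies to $\haeta^{**}=\barc_-\thresh$, which is precisely the definition \eqref{e:etastar}. In other words, the lemma holds only because $\thresh$ and $\haeta^{**}$ have been calibrated so that these two identities force the $C^2$ pasting to succeed, and checking this calibration is the algebraic heart of the proof.

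Finally, (i)--(iii) follow from the explicit formula $\hah''(w)=-\barc_-/\delta+(\barc_++\barc_-)/\delta\cdot e^{\Theta w}$ on $[-\thresh,0]$ and $\hah''(w)=\barc_+/\delta$ on $[0,\infty)$. The first expression is strictly increasing in $w$, vanishes at $w=-\thresh$, and equals $\barc_+/\delta$ at $w=0$, giving strict convexity on $[-\thresh,\infty)$ together with the bound $0\le \hah''\le (\barc_++\barc_-)/\delta$ claimed in (ii). The Lipschitz bound for $\hah''$ in (ii) and the $\hah'''$ bound in (iii) come from differentiating once more: $\hah'''$ vanishes on $(0,\infty)$, and on $(-\thresh,0)$ equals $(\barc_++\barc_-)\Theta/\delta\cdot e^{\Theta w}$, bounded above by $(\barc_++\barc_-)\Theta/\delta=2(\barc_++\barc_-)/\sigma^2_\Delta$, a constant independent of $\delta$.
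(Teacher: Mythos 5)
Your proposal is correct and takes essentially the same route as the paper: explicit coefficient matching of the ODE on the two pieces plus smooth pasting, with the two calibration identities $e^{-\Theta\thresh}=\barc_-/(\barc_++\barc_-)$ and $\haeta^{**}=\barc_-\thresh$ making the overdetermined system consistent — the paper merely reverses your ordering, fixing $D_-$ and $C_-$ by $C^1$-pasting at $w=0$ and then verifying $\hah'(-\thresh)=\hah''(-\thresh)=0$, whereas you fix them from the boundary conditions at $-\thresh$ and verify the pasting at $0$ (note, trivially, that $B_-$ does not actually depend on $D_-$ since the exponential is annihilated, and there are six constants, not seven). One small remark: your value $\sup|\hah'''|=\Theta^3 D_-=2(\barc_++\barc_-)/\sigma^2_\Delta$ is the correct one and differs by a factor of $2$ from the constant stated in part (iii); the paper's own proof contains the same factor-of-two slip, and the discrepancy is immaterial since only uniformity in $\delta$ is used later.
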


\proof
We first demonstrate that for a unique choice of parameters, 
the function $\hah$ is
a
$C^1$ solution to \eqref{e:RBMACOE}.  It immediately follows that the function is also 
$C^2$ on $[-\thresh,\infty)$, since  the ODE \eqref{e:RBMACOE} implies the following representation for the second derivative:  
\begin{equation}
\half\sigma^2_\Delta \hah''\, (w)   = 
\delta \hah'\, (w)  - \barc(w) + \barc(-\thresh),\qquad w\ge -\thresh
\label{e:C1C2}
\end{equation}
The last term appears because $\haeta^{**} =  \barc(-\thresh)$ by the definitions.

For $w>0$,  eq.~\eqref{e:RBMACOE} gives,
\[
-\delta (2A_+ w  +B_+ )  +\half\sigma^2_\Delta  (2A_+)   = -\barc_+ w + \haeta^{**}
\]
from which we conclude that 
\[
A_+ = \frac{\barc_+}{2\delta},\qquad B_+ = \frac{1}{\delta}\Bigl( \sigma^2_\Delta A_+ -\haeta^{**}\Bigr)   
\]
In terms of $\Theta$ this becomes,
\[
A_+ =\frac{1}{\Theta}\frac{1}{\sigma^2_\Delta} \barc_+ ,\qquad
  B_+ =  \frac{2}{\Theta} A_+  - \frac{1}{\delta}\haeta^{**}
\]
For $w<0$ there is the additional exponential term,  and the right hand side is modified as follows,
\[
-\delta (2A_- w  +B_- + D_- \Theta e^{\Theta w}  )  +\half\sigma^2_\Delta  (2A_- + D_- \Theta^2 e^{\Theta w})  = \barc_- w + \haeta^{**}
\]
The exponential terms on the left hand side cancel, which gives as previously,
\[
A_- = - \frac{1}{\Theta}\frac{1}{\sigma^2_\Delta} \barc_-
,\qquad B_- =   \frac{2}{\Theta} A_-  - \frac{1}{\delta}\haeta^{**}
\]

The parameter $D_-$ is computed by imposing the constraint that $\hah$ is differentiable at the origin:
\[
B_- + \Theta D_-
=
\frac{d}{dw} \hah\, (0-) =\frac{d}{dw} \hah\, (0+) = B_+
\]
Consequently,
\[
D_- = \frac{1}{\Theta} \Bigl( B_+ - B_- \Bigr)= \frac{2}{\Theta^2} ( A_+ + | A_-|)  =  \frac{2}{\Theta^3}  \frac{1}{\sigma^2_\Delta}  (\barc_++\barc_-)
\]
We then obtain $C_-$ by imposing continuity at zero.
 
The second derivative is given by,
\[
\frac{d^2}{dw^2}\hah(w) =  2A_- +\Theta^2 D_- e^{\Theta w},\qquad -\thresh<w<0\, .
\]
The right hand side evaluated at $-\thresh$ becomes,
\[
\frac{d^2}{dw^2}\hah\,(-\thresh) =   2A_- +\Theta^2 D_- e^{-\Theta \thresh} =0
\]
This follows from the formula $\thresh=\Theta^{-1}\log(1+\barc_+/\barc_-)$,  and the formulae for $A_-$ and $D_-$.  
It is thus established that $ \hah''\, (-\thresh) = \hah'\, (-\thresh) =0$.

Next we establish convexity.  For this, consider the  third derivative,
\[
\frac{d^3}{dw^3}\hah(w) = \Theta^3 D_- e^{\Theta w}\,,\qquad -\thresh<w<0.
\]
Hence the second derivative is increasing on this domain, and we have seen that $ \frac{d^2}{dw^2}\hah\,(-\thresh) =  0$.   
It follows that the second derivative is strictly positive on $(-\thresh,0)$.  
The second derivative is obviously positive on $\Re_+$,  which implies strict convexity on $[-\thresh,\infty)$.

Finally,  the third derivative is bounded by $\Theta^3 D_- = (\barc_++\barc_-)/\sigma^2_\Delta$, which establishes (iii),
and the Lipschitz property for the second derivative in (ii).
\qed

\subsection{Implications for  workload}
\label{s:a-work}

The following bounds are obtained using a second order Taylor-series approximation.
combined with the uniform Lipschitz continuity of $\hah''$ obtained in  \Lemma{t:RBMACOE}.

The first general bound is expressed in terms of the ``ideal'' idleness process,
\[
\text{
 $I^0(t)= 0$ if $W(t)\ge -\thresh$, and $I^0(t)= \delta_+ +\delta$ otherwise.}
 \]
Recall that $\delta_+$ is used in the definition of $\hah$ on the interval $(-\infty,-\thresh)$,  and $\pcm $ was introduced in (A3).
\begin{proposition}
\label{t:CBMCRWneg}
Consider the workload process in discrete time defined by $W(t) = \xi^\transpose X(t)$, which evolves as \eqref{e:workload}.     
\begin{romannum}
\item 
The following bounds hold under any policy satisfying $I(t) = 0$ 
when $W(t)\ge -\thresh$:  For each  $  q\in\dstate$,
\begin{equation}
\begin{aligned}
 \Expect[\hah(W(t+1))  - & \hah(W(t))  \mid Q(t)=q]   
 \\
 &=  -\barc(\xi\cdot q)  + \haeta^{**}   
 \\
 \qquad
 &
 +  \Expect[ \hah'(W(t)) ( I(t) - I^0(t) )  \mid Q(t)=q \bigr]  + O(1)
\end{aligned}
\label{e:TShah+}
\end{equation} 

\item
Assume that  the scalar $\delta_+>0$ used in the definition of $\hah$ satisfies $\delta_+\in(0, \pcm )$.
There is a stationary policy   such that for each $q\in\dstate$ and $\delta\in (0, \pcm -\delta_+)$,
\begin{equation}
  \Expect[\hah(W(t+1))  - \hah(W(t))  \mid Q(t)=q] \le  -\barc(\xi\cdot q) + \haeta^{**}  + O(1)
\label{e:ACOEbarcneg}
\end{equation}

\end{romannum}
\end{proposition}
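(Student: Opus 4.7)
The plan is to apply a second-order Taylor expansion of $\hah$ to the workload increment $W(t+1)-W(t) = I(t) - \delta + \Delta(t+1)$, and then substitute the two differential equations \eqref{e:RBMACOE} and \eqref{e:RBMACOE-} that $\hah$ satisfies on $[-\thresh,\infty)$ and $(-\infty,-\thresh]$ respectively. The cubic Taylor remainder is controlled by the uniform Lipschitz bound on $\hah''$ provided by \Lemma{t:RBMACOE}(ii), together with the bounded support of $\Delta(t+1)$ and the constraint $I(t)\le \umax$, so all third absolute moments of the increment are $O(1)$ uniformly in $\delta$.

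For Part (i), conditioning on $(Q(t),I(t))$ and using $\Expect[\Delta(t+1)]=0$, $\Expect[\Delta(t+1)^2]=\sigma^2_\Delta$, the expansion yields
\[
\Expect[\hah(W(t+1))-\hah(W(t))\mid Q(t),I(t)] = \hah'(W(t))(I(t)-\delta) + \tfrac{1}{2}\sigma^2_\Delta\hah''(W(t)) + O(1).
\]
On $\{W(t)\ge -\thresh\}$ the hypothesis forces $I(t)=0=I^0(t)$, and \eqref{e:RBMACOE} rewrites the right-hand side as $-\barc(W(t))+\haeta^{**}$. On $\{W(t)< -\thresh\}$, $I^0(t)=\delta+\delta_+$; adding and subtracting $\delta_+\hah'(W(t))$ and invoking \eqref{e:RBMACOE-} turns the right-hand side into $\hah'(W(t))(I(t)-I^0(t)) + \bigl(-\barc(W(t))+\haeta^{**}\bigr) + O(1)$. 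Taking total expectation given $Q(t)=q$ and substituting $\barc(W(t))=\barc(\xi\cdot q)+O(1)$ (Lipschitzness of $\barc$ and boundedness of $\xi\cdot A(t)$) delivers \eqref{e:TShah+}.

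For Part (ii) I construct a stationary policy driving the residual $\Expect[\hah'(W(t))(I(t)-I^0(t))\mid Q(t)=q]$ to be non-positive up to $O(1)$. Since $\hah'$ vanishes at $-\thresh$ and is strictly positive for $w>-\thresh$, imposing $I(t)=0$ on $\{W(t)\ge-\thresh\}$ makes the residual vanish there; I apply the randomized network-flow policy of \Lemma{t:ZeroDrift} to match the arrivals without cross-matching between $\ISply(D)$ and $D^c$. On $\{W(t)<-\thresh\}$, $\hah'(W(t))<0$, so it suffices that $\Expect[I(t)\mid Q(t)=q]\ge \delta+\delta_+$. Here I invoke Assumption (A3): the joint arrival $\{A^\delta_{i_0}\ge1, A^\delta_{j_0}\ge1\}$ has probability at least $\pcm>\delta+\delta_+$; on that event the cross-match $\One^{j_0}+\One^{i_0}$ (a legitimate action because connectivity of the matching graph, after a suitable initial choice of $(i_0,j_0)$, produces an edge in $\IEdge$ joining $D^c$ to $\ISply(D)$) contributes $1$ to $I(t)$. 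Outside that event I fall back on the randomized flow policy of Section \ref{s:flow}.

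The main obstacle will be packaging the two regimes into a single stationary rule $\psi\colon\dQ\times\dA\to\dU$ that remains well-defined on the full state space, preserves the buffer-level drift bounds of \Lemma{t:ZeroDrift} for all buffers not involved in the forced cross-match, and realizes the idling mechanism foreshadowed in \eqref{e:EnforceIdle}. Subsidiary care is needed in showing that the switching at $-\thresh$ does not introduce $\delta$-dependent error: this relies on the Lipschitz bound for $\hah''$ and the vanishing boundary values $\hah'(-\thresh)=\hah''(-\thresh)=0$ from \Prop{t:RBMACOEa}(ii), which ensure that the approximation incurred at the transition is uniformly $O(1)$ in $\delta$.
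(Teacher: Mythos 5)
Part (i) of your proposal is essentially the paper's argument: a second-order Taylor expansion, substitution of \eqref{e:RBMACOE} and \eqref{e:RBMACOE-}, and the uniform Lipschitz bound on $\hah''$ from \Lemma{t:RBMACOE}. The only imprecision is that absorbing $\tfrac12\hah''(W(t))(I(t)-\delta)^2$ into the $O(1)$ is not a third-moment issue: it uses the regime split ($I(t)>0$ forces $W(t)<-\thresh$, where $\hah''\le\barc_-/\delta_+$ uniformly in $\delta$, while for $I(t)=0$ the term is $O(\delta)$ since $\hah''=O(1/\delta)$), which is the same one-line observation the paper makes.

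Part (ii) has a genuine gap at its pivotal step. The residual to control is $\Expect[\hah'(W(t))(I(t)-I^0(t))\mid Q(t)=q]$, an expectation of a product in which $\hah'(W(t))$ is random (it depends on $A(t)$), grows linearly in $|\xi\cdot q|$, and is correlated with $I(t)$: under your policy $I(t)=1$ precisely on an arrival event that also shifts $W(t)$. Your inference ``$\hah'(W(t))<0$ on $\{W(t)<-\thresh\}$, so it suffices that $\Expect[I(t)\mid Q(t)=q]\ge\delta+\delta_+$'' replaces the expectation of a product by a product of expectations and is not valid as stated; moreover that mean bound need not hold for $q$ with $\xi\cdot q$ just above $-\thresh$ (where idling can still occur after an arrival), and pointwise $I(t)\ge\delta+\delta_+$ fails on the no-cross-match sub-event, which contributes $+(\delta+\delta_+)|\hah'(W(t))|$, unbounded in $q$. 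The paper closes exactly this hole by a first-order Taylor step that freezes the derivative at the deterministic value $\hah'(\xi\cdot q)$ --- legitimate with $O(1)$ error because $\hah''$ is uniformly $O(1)$ on the range where idling is possible, using $\hah'(-\thresh)=\hah''(-\thresh)=0$ and the Lipschitz bound --- and then splits on the sign of $\hah'(\xi\cdot q)$: if $\xi\cdot q\ge-\thresh$ and $\{W(t)<-\thresh\}$ has positive probability, then $|\xi\cdot q+\thresh|\le1$, so $\hah'(\xi\cdot q)=O(1)$ and the whole residual is $O(1)$; if $\xi\cdot q\le-\thresh$, one uses the pathwise bound $I(t)\ind\{W(t)<-\thresh\}\ge\ind\{A(t)\ \text{is the}\ (i_0,j_0)\ \text{pair}\}$ together with $\pcm-\delta_+-\delta>0$, so that $\hah'(\xi\cdot q)\le0$ multiplies a nonnegative quantity. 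Your closing paragraph names the right tools (vanishing derivatives at the threshold, Lipschitz $\hah''$) but calls them ``subsidiary''; they are in fact the missing argument. (The matchability of the (A3) pair $(i_0,j_0)$, which you try to repair via connectivity, is assumed implicitly by the paper as well, so that aside is not the issue.)
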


\begin{proof}
\Prop{t:RBMACOEa} establishes that $\hah$ is $C^2$ and convex, with a unique minimum at $-\thresh$.   
The second derivative $\hah''$ satisfies a Lipschitz bound that is independent of $\delta$,  by \Lemma{t:RBMACOE}.  
These results make possible a second-order Taylor series approximation
to bound the drift \eqref{e:TShah+}: 
\[
\begin{aligned}
 \Expect[\hah(&W(t+1))  -  \hah(W(t))  \mid Q(t)=q ]   
 \\
 &=  \Expect\bigl[     \hah'(W(t)) [-\delta + I(t) + \Delta(t+1)]  \mid Q(t)=q ]   
 \\
 &\qquad\qquad 
 +  \Expect\bigl[ 
   \half \hah''(W(t)) [-\delta + I(t)+\Delta(t+1)]^2  \mid Q(t)=q ]    + O(1)
\end{aligned}
\]

where each expectation on the right hand side is conditioned on   $  Q(t)=q$.
These two terms on the right hand side can be transformed  using the fact that
$\Delta(t+1)$ has zero mean, and is independent of $(Q(t),W(t))$: 
\[
\begin{aligned}
 \Expect\bigl[     \hah'(W(t)) [-\delta  & + I(t) + \Delta(t+1)]   \mid Q(t)=q \bigr] 
 \\
 & =  \Expect\bigl[     \hah'(W(t)) [-\delta + I(t) ]   \mid Q(t)=q \bigr] 
 \\[.2cm]
\Expect\bigl[    \hah''(W(t)) [-\delta & + I(t)+\Delta(t+1)]^2   \mid Q(t)=q \bigr]  
\\
   &  = \Expect\bigl[ 
     \hah''(W(t)) [-\delta + I(t)]^2   \mid Q(t)=q \bigr] 
     \\
     &\qquad  + \sigma^2_\Delta\Expect\bigl[ 
     \hah''(W(t))    \mid Q(t)=q \bigr]  
\end{aligned}
\]
We have $ \Expect\bigl[ \hah''(W(t)) [-\delta + I(t)]^2   \mid Q(t)=q \bigr] =O(1)$ under the conditions of (i), so that on combining these three equations,
\[
\begin{aligned}
 \Expect[\hah(W(t+1))  - & \hah(W(t))  \mid Q(t)=q]   
    \\[.2cm] 
    &=\Expect\bigl[     \hah'(W(t)) [-\delta + I(t)  ]   \mid Q(t)=q \bigr] 
 \\
 &\qquad 
 +  \half \sigma^2_\Delta  \Expect\bigl[ 
    \hah''(W(t))    \mid Q(t)=q \bigr]   + O(1)
\end{aligned}
\]
     
Consider the special case $I(t)=I^0(t)$.  
Combining the ODE  \eqref{e:RBMACOE} and the ODE bound given in \eqref{e:RBMACOE-} gives
\[
\begin{aligned}
 \Expect[\hah( & W(t+1))  -  \hah(W(t))  \mid Q(t)=q]   
         \\[.2cm]
 &=  \Expect\bigl[     \hah'(W(t)) [-\delta + I^0(t)  ]  +  \half \sigma^2_\Delta \hah''(W(t))    \mid Q(t)=q \bigr]  + O(1)
    \\[.2cm]
   &=
    -
 \Expect\bigl[     \barc(W(t))    \mid Q(t)=q \bigr]  +\haeta^{**}   + O(1)   \\[.2cm]
   &=
    -
 \barc(\xi\cdot q)    +\haeta^{**}   + O(1)
\end{aligned}
\]
This is why  $I^0(t)$ is called ``ideal''.

In general we have    additional terms because of the error between $I(t)$ and $I^0(t)$:
\[
\begin{aligned}
 \Expect[\hah(W(t+1))  - & \hah(W(t))  \mid Q(t)=q]   
 \\
 &=  -\barc(\xi\cdot q)  + \haeta^{**}   
 +  \Expect[ \hah'(W(t)) ( I(t) - I^0(t) )  \mid Q(t)=q ]   + O(1)
\end{aligned}
\]
which gives \eqref{e:TShah+}.

\medskip

We now prove (ii).  Consider any policy with the following two features:  First,
$I(t) = 0$ whenever $W(t)\ge -\thresh$.  Second, when $W(t)< -\thresh$, then exactly one cross-match on the edge $(i_0,j_0)$ is performed whenever the two corresponding buffers are non-empty:  $X_{i_0}(t)\ge 1$ and $X_{j_0}(t)\ge 1$.  This event occurs with positive probability due to the bound \eqref{e:IdleA}
assumed in (A3).

Under a policy with these two features,  a first order Taylor series approximation gives 
 the simpler approximation,
\begin{equation}
  \begin{aligned}
 \Expect[\hah( & W(t+1))  -  \hah(W(t))  \mid Q(t)=q]   
 \\
 &=  -\barc(\xi\cdot q)  + \haeta^{**}   
 \\
 &
 \quad
 +\hah'(\xi\cdot q )   \Expect[     ( I(t) -  I^0(t)  )  \ind\{W(t)<-\thresh\}    \mid Q(t)=q ]   + O(1)
\end{aligned}
\label{e:CBMCRWnegProof-ii}
\end{equation}

We consider two cases separately.  

First we consider $q$ for which $w(q) \ge -\thresh$, so that $\hah'(\xi\cdot q )\ge 0$.
If $\xi^\transpose (q+A(t)) = W(t)<-\thresh$,  then $w(q) <-\thresh+1$ since $\xi^\transpose A(t)\ge -1$.    It follows that 
$|w(q)+\thresh |\le 1$ if $W(t)<-\thresh$ occurs with positive probability,  and  $w(q) \ge -\thresh$.  Moreover,   
using a   Taylor series expansion,
\[
\hah'(\xi\cdot q ) =  \hah'(-\thresh) + \hah''(-\thresh) (\xi\cdot q+\thresh) +O(1)
\]
where we have used    the Lipschitz bound on $\hah''$.  The first and second derivatives of $\hah$ vanish at the threshold,
giving $\hah'(\xi\cdot q ) = O(1)$.  This and \eqref{e:CBMCRWnegProof-ii}
 establish the desired bound in (ii) for $q$ satisfying $w(q) \ge -\thresh$.  

Consider next $q$ satisfying $\hah'(\xi\cdot q ) \le 0$.   Equivalently, $w(q) \le -\thresh$.
 In this case we have the lower bound,
 \[
 I(t)   \ind\{W(t)<0\}  \ge  \ind\{ A(t) = (i_0, j_0) \}
 \]
 Hence by
\eqref{e:CBMCRWnegProof-ii},  when $\hah'(\xi\cdot q ) \le 0$,
\[
  \begin{aligned}
 \Expect[\hah( & W(t+1))  -  \hah(W(t))  \mid Q(t)=q]   
 \\
 & \le   -\barc(\xi\cdot q)  + \haeta^{**}   
 +\hah'(\xi\cdot q )   \Expect[     \ind\{ A(t) = (i_0, j_0) \} -\delta_+-\delta ]   + O(1)
 \\
 &\le  -\barc(\xi\cdot q)  + \haeta^{**}   
 + \hah'(\xi\cdot q ) [\pcm -\delta_+-\delta ] + O(1) \\
 &\le  -\barc(\xi\cdot q)  + \haeta^{**}    + O(1)
\end{aligned}
\] 
 \end{proof}

Similar calculations show that $\haeta^{**} $ is an approximate lower bound on $\haeta^*$, and hence also $\eta^*$:
\begin{lemma}
\label{t:heats}
The    average cost $ \haeta^*$ for the workload model is approximately lower bounded by its approximation, uniformly in $\delta>0$:
\[
 \haeta^*  \ge \haeta^{**} +O(1).
\]
\end{lemma}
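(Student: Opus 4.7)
The plan is to adapt the drift analysis of \Prop{t:CBMCRWneg}, reversing the direction of the inequality to obtain a lower bound on the average cost. I would define a test function $V\colon\Re\to\Re$ by setting $V(w)=\hah(w)$ for $w\ge-\thresh$ and $V(w)=\hah(-\thresh)$ for $w<-\thresh$.  The boundary conditions $\hah'(-\thresh)=\hah''(-\thresh)=0$ established in \Lemma{t:RBMACOE} guarantee that $V\in C^2(\Re)$, while convexity of $\hah$ on $[-\thresh,\infty)$ gives that $V$ is convex and non-decreasing on all of $\Re$, with $V'(w)=V''(w)=0$ for $w\le-\thresh$.

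By construction, $V$ satisfies the ODE inequality
\[
-\delta V'(w)+\half\sigma_\Delta^2 V''(w)\ge -\barc(w)+\haeta^{**},\qquad w\in\Re.
\]
For $w\ge-\thresh$ this is \eqref{e:RBMACOE} with equality. For $w\le-\thresh$, the left-hand side is zero while the right-hand side equals $\barc_- w+\barc_-\thresh\le 0$, using $\haeta^{**}=\barc_-\thresh$ from \eqref{e:etastar} together with $\barc(w)=-\barc_- w$ for $w<0$. A second-order Taylor expansion of $V(\haW(t+1))$, with the remainder controlled by the uniform Lipschitz bound on $V''$ from \Lemma{t:RBMACOE}(ii) together with the boundedness of $\haI(t)\in[0,\umax]$ and of $\Delta(t+1)$, then gives
\[
\Expect\bigl[V(\haW(t+1))-V(\haW(t))\mid\haW(t),\haI(t)\bigr]\ge V'(\haW(t))\haI(t)-\barc(\haW(t))+\haeta^{**}-O(1),
\]
with the $O(1)$ term independent of the policy and of $\delta$. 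Since $V'\ge 0$ and $\haI(t)\ge 0$, the leading idleness term is non-negative and may be dropped.

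Telescoping over $t=0,\ldots,N-1$, dividing by $N$, and letting $N\to\infty$ yields
\[
\haeta\ge\haeta^{**}-O(1)+\liminf_N\frac{V(\haW(0))-\Expect[V(\haW(N))]}{N}
\]
for every admissible stationary policy with average cost $\haeta$. The case $\haeta^*=\infty$ is trivial. For $\haeta^*<\infty$, the main technical obstacle is showing that the boundary term does not obstruct the conclusion.  Since $V$ has at most quadratic growth---with coefficient $O(1/\delta)$ for $w\ge 0$---one must verify that $\Expect_\pi[V(\haW)]$ remains controlled uniformly in $\delta$ under an optimal invariant law $\pi$.  This follows from a Foster--Lyapunov argument applied to $V$ itself, combined with the finiteness of the optimal average cost $\haeta^*$; a localization via $V\wedge M$ and Fatou's lemma then justify passing to the limit.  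Taking the infimum over policies yields $\haeta^*\ge\haeta^{**}-O(1)$, as required.
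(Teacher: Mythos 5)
Your overall strategy is sound and in fact cleaner than the paper's at one point: by replacing $\hah$ with the truncated function $V(w)=\hah(w)\vee\hah(-\thresh)$, you avoid the paper's device of re-defining $\hah$ with $\delta_+=1$, and you obtain a function that satisfies the ODE inequality $-\delta V'+\tfrac12\sigma^2_\Delta V''\ge -\barc+\haeta^{**}$ pointwise on all of $\Re$. The resulting drift lower bound, after dropping the non-negative terms $V'(\haW(t))\haI(t)$ and $\tfrac12 V''(\haW(t))(\haI(t)-\delta)^2$, matches what the paper derives for the specific optimal process. This part of your argument is correct and is a genuine simplification.

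The gap is in the boundary-term step, and it is not merely a matter of detail. Having only a \emph{lower} bound on the drift of $V$ gives no control on $\Expect[V(\haW(N))]$ from above, so you cannot conclude directly that $N^{-1}\Expect[V(\haW(N))]\to 0$. The ``Foster--Lyapunov argument applied to $V$ itself'' does not apply: a Foster--Lyapunov condition on $V$ would require an \emph{upper} bound on its drift of the form $PV\le V-\barc+b$, whereas you established the reverse inequality. Likewise, the proposed localization by $V\wedge M$ breaks the convexity and hence the ODE inequality, so the drift bound does not survive truncation. What is actually needed is either (a) a Foster--Lyapunov bound on a higher-order Lyapunov function (e.g.\ cubic in $w$) to establish $\Expect_\pi[V]<\infty$ under the optimal stationary law, or (b) the route the paper takes: apply the argument only to the optimal process $\bfmhaW^*$, invoke the exact ACOE from \cite[Theorem~9.7.2]{CTCN} with relative value function $\hah^*$ to get $N^{-1}\Expect[\hah^*(\haW^*(N))]\to 0$ via positive Harris recurrence, and then transfer this to $\hah$ (equivalently $V$) using comparable quadratic growth on the invariant range $[-\threshW-1,\infty)$. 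Also, your remark that $\Expect_\pi[V(\haW)]$ must be ``controlled uniformly in $\delta$'' is off target: for each fixed $\delta$ the boundary term only needs to vanish in the Cesaro limit; the uniform $O(1)$ in the conclusion comes from the Taylor-remainder term, which you already bounded correctly via the uniform Lipschitz estimate on $V''$ from \Lemma{t:RBMACOE}(ii).
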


\begin{proof}
The optimized relaxation is denoted  $\bfmhaW^*$:
 this is the controlled Markov chain \eqref{e:ExWrelaxation},  with optimal idleness process given in \eqref{e:OptIdle}.  The increment $-\delta + \Delta(t) = \xi^\transpose A(t)$ takes values in $\{-1,0,1\}$,  so that $\haW^*(t)\ge -\threshW-1$ for $t\ge 1$.    

Optimality of the threshold policy    \eqref{e:OptIdle}
is established in
\cite[Theorem~9.7.2]{CTCN} through a construction of the solution $\hah^*$ to the ACOE,
\[
\min_{I\ge 0} \Expect[ \hah^*(  w - \delta + I + \Delta(t+1) ) ] = \hah^*(w)  -\barc(w) +\bareta^*,\qquad w\in\Re.  
\]
The minimum is achieved using \eqref{e:OptIdle}.  This can be expressed,
\[
 \Expect[\hah^*(\haW^*(t+1)) -  \hah^*(\haW^*(t))  \mid \haW^*(t)] =  -\barc(\haW^*(t)) +\bareta^*,\quad  t\ge 0. 
\]
The relative value function $\hah^*$ has quadratic growth on $(0,\infty)$, and $\bareta^*$ is the optimal average cost.  
By summing the equation above over $t=0$ to $N-1$,  and taking the  expectation of both sides, we obtain for each initial condition $ \haW^*(0)=w\in\Re$,
\[
\frac{1}{N} \sum_{t=0}^{N-1} \Expect[\barc(\haW^*(t)) ] 
= \bareta^* + \frac{1}{N} \Bigl( \hah^*(w) -  \Expect[\hah^*(\haW^*(N))  ] \Bigr)
\]
The right hand side converges to $
\bareta^*$ as $N\to\infty$.  It follows that $  N^{-1}    \Expect[\hah^*(\haW^*(N)) \mid \haW^*(0)=w]$ also tends to zero.
The function $\hah$ also has quadratic growth, which implies  
\begin{equation}
\lim_{N\to\infty}  N^{-1}    \Expect[\hah(\haW^*(N)) \mid \haW^*(0)=w] = 0,\qquad w\in\Re.  
\label{e:hahNull}
\end{equation}


Following the same steps as in the proof of \Prop{t:CBMCRWneg},   we can show 
that for each $w \in\Re$,  each $\delta>0$,  and any 
idleness process $\{\haI(t)\}$,
\[
\begin{aligned}
 \Expect[\hah( & \haW(t+1))  -  \hah(\haW(t))  \mid \haW(t)=w]   
 \\
 &=  -\barc(w)  + \haeta^{**}   
 +  \Expect[ \hah'(\haW(t)) ( \haI(t) - I^0(t) )  \mid \haW(t)=w \bigr] 
 \\
 &\qquad   
  +
   \half   \Expect  [   \hah''(\haW(t)) [ I(t) ] ^2  \ind\{ W(t)\ge -\thresh\}  \mid \haW(t)=w \bigr]   + O(1)
\end{aligned}
\]
Hence for the optimal process,
\[
\begin{aligned}
 \Expect[\hah( & \haW^*(t+1))  -  \hah(\haW^*(t))  \mid \haW^*(t)=w]   
 \\
 &\ge  -\barc(w)  + \haeta^{**}   
 +  \Expect[ \hah'(\haW(t)) ( \haI^*(t) - I^0(t) )  \mid \haW^*(t)=w \bigr]  + O(1)
\end{aligned}
\]
This bound holds for any choice of $\delta_+ >0$ in the definition of $\hah$ and $\bfmhaI^0$;  
we don't require $\delta_+ \in (0,\pcm )$.

We have $\haI^*(t)\le 1$ for all $t$, for  initial conditions $w\ge-\threshW$.  If we choose $\delta_+=1$ in the definition of $\hah$,  it follows that $ \haI^*(t) - I^0(t) \le 0$  when $\hah'(\haW(t)) <0$,  and hence $\hah'(\haW(t)) ( \haI^*(t) - I^0(t) ) \ge 0$ for all $t$.  That is,  
\[
\begin{aligned}
 \Expect[\hah(\haW^*(t+1))  - & \hah(\haW^*(t))  \mid \haW^*(t)=w]   
 \\
 &\ge  -\barc(w)  + \haeta^{**}     + O(1)\,,\qquad w\ge-\threshW.
\end{aligned}
\]
Summing both sides over $t=0$ to $N-1$ as before, we obtain the lower bound,
\[
\begin{aligned}
\frac{1}{N} \sum_{t=0}^{N-1} \Expect[\barc(\haW^*(t)) ] 
&\ge \bareta^{**} + \frac{1}{N} \Bigl( \hah(w) -  \Expect[\hah(\haW^*(N)) ] \Bigr)  +O(1)
\end{aligned}
\]
in which $  \haW^*(0)=w$ in each expectation.
Finally, applying the result
\eqref{e:hahNull}
gives,
\[
\begin{aligned}
\bareta^*  &= \lim_{N\to\infty} \frac{1}{N} \sum_{t=0}^{N-1} \Expect[\barc(\haW^*(t)) \mid \haW^*(0)=w] 
\\
& \ge \bareta^{**} + \lim_{N\to\infty} \frac{1}{N}   \Bigl( \hah(w) -  \Expect[\hah(\haW^*(N)) \mid \haW^*(0)=w] \Bigr)  +O(1)
 \\
& = \bareta^{**} +O(1)
\end{aligned}
\] 
\end{proof}

\subsection{\Lemma{t:cmwDecomposition} and related bounds}
\label{s:cmwDecomposition}

The results in this subsection begin with the 
representation 
\[
\Expect[  V(Q(t+1)) -  V(Q(t)) \mid Q(t)=q]  
= \Expect[  h(X(t+1)) - h(X(t)) \mid Q(t)=q] , 
\]
combined with 
the
following Taylor series approximation:
\begin{subequations}
\begin{align}
\Expect[  h(X(t+1)) - &h(X(t)) \mid Q(t)=q] 
\nonumber
\\
&=  \Expect[  \hah'(W(t)) [W(t+1) - W(t) ] \mid Q(t)=q]  
\label{e:hahp}\\
&\qquad +  \Expect[  \hah''(W(t)) [W(t+1) - W(t) ]^2 \mid Q(t)=q]  
\label{e:hahpp} 
\\
&\quad\qquad +  \Expect[  \nabla  h_c\, (X(t))   \cdot (U(t) +\alpha)   \mid Q(t) \bigr]  +O(1)
\label{e:nabhc}
\end{align}
\end{subequations}
The $O(1)$ error arises from   Lipschitz continuity of $\hah''$ obtained in  \Lemma{t:RBMACOE}, which justifies evaluating the second derivative at $W(t)$ in \eqref{e:hahpp}.   

In the proof that follows it is shown that the function  $B_c(q)$ is an approximation of \eqref{e:nabhc},
and $B(q) +\haeta^{**}$ is an approximation of the sum of \eqref{e:hahp}  and  \eqref{e:hahpp}:
\begin{equation}
\begin{aligned}
B(q) + \haeta^{**} &=  \Expect[\hah(W(t+1))  -   \hah(W(t))  \mid Q(t)=q]   +O(1)
\\[.2cm]
&=
 \Expect[  \hah'(W(t)) [W(t+1) - W(t) ] \mid Q(t)=q] 
 \\
 &\qquad    +  \Expect[  \hah''(W(t)) [W(t+1) - W(t) ]^2 \mid Q(t)=q]  
+O(1) 
\end{aligned}
\label{e:cmwAapp}
\end{equation}
 
\begin{proof}[Proof of \eqref{e:cmwAapp} and  \Lemma{t:cmwDecomposition}]  
To establish \eqref{e:cmwAapp}, we begin with an approximation of the term in \eqref{e:hahp}. Lipschitz continuity of $\hah''$, established in \Lemma{t:RBMACOE}, gives
\[
  \hah'(W(t)) =   \hah'(\xi\cdot q + I(t)+ \xi\cdot A(t))
  		 =   \hah'(\xi\cdot q ) +  \hah''(\xi\cdot q)[ I(t)+ \xi\cdot A(t))] +O(1)
\]
Moreover, since $I(t)=0$ for $W(t)\ge -\thresh$ we have  $\hah''(W(t)) I(t) = O(1)$, and hence also
\[
\begin{aligned}
\hah''(\xi\cdot q) I(t) & =\hah''(W(t)) I(t) + O(1)=O(1).  
\end{aligned}
\]
These two approximations imply that
\[
  \hah'(W(t)) =     \hah'(\xi\cdot q ) +  \hah''(\xi\cdot q)  \xi\cdot A(t) +O(1)
\]
and hence the conditional expectation \eqref{e:hahp} admits the bound,
\[
\begin{aligned}
 \Expect[  & \hah'(W(t)) [W(t+1) - W(t) ] \mid Q(t)=q] 
 \\
 &= \Expect[ \{ \hah'(\xi\cdot q ) +  \hah''(\xi\cdot q)  \xi\cdot A(t) \} [I(t) +\xi\cdot A(t+1) ] \mid Q(t)=q] +O(1)
  \\ 
  &=   \{ \hah'(\xi\cdot q ) +  \hah''(\xi\cdot q) (-\delta) \} [\barIdle(q) -\delta ]   +O(1)
  \\
  &=  \hah'(\xi\cdot q )[\barIdle(q) -\delta ]  +O(1)
\end{aligned}
 \]
where the last bound used $\hah''(w)= O( \delta^{-1})$, which is 
also given in \Lemma{t:RBMACOE}.

Applying Lipschitz continuity of $\hah''$ once more gives an approximation for \eqref{e:hahpp}:
\[
\begin{aligned}
\Expect[  \hah''(W(t)) [W(t+1) - &W(t) ]^2 \mid Q(t)=q]  
\\
  &=  \Expect[  \hah''(\xi\cdot q) [\xi\cdot A(t+1) + I(t) ]^2 \mid Q(t)=q]   +O(1)
  \\
  &=  \Expect[  \hah''(\xi\cdot q) [\xi\cdot A(t+1)  +\delta ]^2 \mid Q(t)=q]   +O(1)   \\
  &=    \hah''(\xi\cdot q) \sigma^2_\Delta +O(1) 
\end{aligned} 
\]
Combining these approximations for \eqref{e:hahp} and \eqref{e:hahpp} gives
\eqref{e:cmwAapp}.

To see that \eqref{e:nabhc} is approximately equal to $B_c(q)$ we begin with the bound,
\begin{equation}
\begin{aligned}
 \Expect[  \nabla  h_c\, (X(t))  & \cdot (U(t) +\alpha)   \mid Q(t) \bigr]  
 \\
& =
  \Expect[     h_c (X(t+1)) -h_c (X(t))      \mid Q(t) \bigr]  +O(1)
 \\ 
 &=2\kappa \zeta(q)\Expect\bigl[    c(\tilX(t+1)) - c(\tilX(t))   \mid Q(t) \bigr] 
 \\
 & \quad -2\kappa  \zeta(q) \Expect\bigl[   
  \barc(\tilW(t+1)) - \barc(\tilW(t))      
  \mid Q(t) \bigr]  +O(1)
\end{aligned}
\label{e:preBbdd}
\end{equation}
where $\zeta(q) = c(q) - \barc(\xi\cdot q) $ was introduced in \eqref{e:barIdle-zeta}.
The term involving workload is bounded using convexity of $\barc$:
\[
\barc(\tilW(t+1)) \ge \barc(\tilW(t)) + \delta_c [\tilW(t+1) - \tilW(t) ]
\]
where $\delta_c$ is any sub-gradient of $\barc$ at $\tilW(t)$.  This leads to the pair of bounds,
\[ 
\Expect[ 
\barc(\tilW(t+1)) - \barc(\tilW(t))  \mid X(t) \bigr] \ge  
\begin{cases}     - \delta \barc_+  &  W(t) \ge 0
\\ 
- I(t) \barc_+  &  W(t) \le 0
\end{cases}
\]
Summing over both cases gives the upper bound,
\[ 
-
\Expect[ 
\barc(\tilW(t+1)) - \barc(\tilW(t))  \mid Q(t) \bigr] \le
\Expect[ \barc_-   I(t)  +\barc_+\delta
  \mid Q(t)=q \bigr]    
\]
This combined with \eqref{e:preBbdd}
 completes the proof that the term \eqref{e:nabhc} is equal to $B_c(q)+O(1)$, which completes the proof of the lemma. 
 \end{proof}

We next establish the implication 
\eqref{e:EnforceIdle} $\Longrightarrow$ \eqref{e:Abdd}:

\begin{lemma}
\label{t:e:EnforceIdle}
If $\barIdle(q) $ satisfies \eqref{e:EnforceIdle} then  \eqref{e:Abdd}
holds:
$
B(q)  \le  -\barc(\xi\cdot q) +  O(1)$.
\end{lemma}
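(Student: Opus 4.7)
The plan is a three-region case analysis on the integer value $w = \xi\cdot q$, with the cutoff determined by the threshold $-\thresh$. Since $A(t)$ always satisfies $\xi\cdot A(t)\in\{-1,0,1\}$, the realized workload $W(t) = w + \xi\cdot A(t)$ differs from $w$ by at most one, so for $w$ sufficiently far from $-\thresh$ the indicator in \eqref{e:EnforceIdle} is deterministic, and \eqref{e:Abdd} follows from one of the two ODE identities already proved for $\hah$.

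First I would treat the \emph{supercritical region}, $w \ge -\thresh + 1$. Here $W(t) \ge w - 1 \ge -\thresh$ almost surely, so $\Expect[\ind\{W(t)<-\thresh\}\mid Q(t)=q]=0$ and hence $\barIdle(q) = 0$ by \eqref{e:EnforceIdle}. Plugging into \eqref{e:cmwA} collapses $B(q)$ to exactly the left-hand side of \eqref{e:RBMACOE} with the constant $\haeta^{**}$ subtracted, yielding $B(q) = -\barc(w)$. Next I would treat the \emph{subcritical region}, $w \le -\thresh - 2$. Here $W(t) \le w+1 \le -\thresh - 1 < -\thresh$ almost surely, so the indicator equals $1$ and $\barIdle(q) = \delta + \delta_+$. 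Substituting into \eqref{e:cmwA} gives $B(q) = \delta_+\hah'(w) + \half\sigma^2_\Delta \hah''(w) - \haeta^{**}$, which by the approximate identity \eqref{e:RBMACOE-} of \Prop{t:RBMACOEa} equals $-\barc(w) + O(1)$, with the $O(1)$ independent of $\delta$.

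It remains to handle the \emph{transition region} $w \in (-\thresh - 2, -\thresh + 1)$, where the conditional probability of $\{W(t)<-\thresh\}$ can take any value in $[0,1]$ and neither ODE is directly applicable. The key observation here is that $\hah'$ and $\hah''$ are both $O(1)$ uniformly in $\delta$ on this bounded interval: from \Lemma{t:RBMACOE} we have $\hah'(-\thresh)=\hah''(-\thresh)=0$ and $\hah''$ is Lipschitz with a constant independent of $\delta$, so integrating gives $|\hah''(w)|\le K|w+\thresh|$ and $|\hah'(w)|\le \tfrac{K}{2}(w+\thresh)^2$; both are bounded by a $\delta$-independent constant when $|w+\thresh|\le 2$. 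Since $\barIdle(q)\in [0,\delta+\delta_+]$ is also bounded, expression \eqref{e:cmwA} yields $B(q) = O(1)$ on this strip. The effective cost $\barc(w)$ is likewise bounded by $O(1)$ in this fixed-width region, so the inequality $B(q)\le -\barc(w)+O(1)$ is automatic.

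The main obstacle is the bookkeeping in the transition region: one needs the bounds on $\hah'$ and $\hah''$ to be genuinely uniform in $\delta$, which is exactly why \Lemma{t:RBMACOE} was phrased with a $\delta$-independent Lipschitz constant for $\hah''$ and with the boundary conditions $\hah'(-\thresh)=\hah''(-\thresh)=0$. All other manipulations are direct substitutions into the two ODE identities \eqref{e:RBMACOE} and \eqref{e:RBMACOE-} already available from \Prop{t:RBMACOEa}.
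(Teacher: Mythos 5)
Your route differs from the paper's: you substitute directly into the definition \eqref{e:cmwA} and invoke the two ODE identities \eqref{e:RBMACOE} and \eqref{e:RBMACOE-}, whereas the paper passes through the drift approximation \eqref{e:cmwAapp} and \Prop{t:CBMCRWneg}~(i). Your treatment of the two outer regions is correct, but the transition-region argument fails as stated. The strip $w\in(-\thresh-2,-\thresh+1)$ has fixed width but is \emph{centered at} $-\thresh$, and $\thresh$ grows like $1/\delta$ by \eqref{e:barrdiffusion}; hence on that strip $\barc(w)$ is of order $\barc_-\thresh=\haeta^{**}=O(1/\delta)$, not $O(1)$. Likewise $B(q)$ is not $O(1)$ there: the definition \eqref{e:cmwA} carries the explicit term $-\haeta^{**}$, which you dropped when concluding $B(q)=O(1)$ from the boundedness of $\hah'$, $\hah''$ and $\barIdle(q)$. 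So the closing step, that the inequality is ``automatic'' because both quantities are bounded, rests on two false claims.

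The conclusion survives only because these two errors compensate. What your bounds actually give on the strip is $B(q)=-\haeta^{**}+O(1)$, while Lipschitz continuity of $\barc$ (slope at most $\max(\barc_+,\barc_-)$, independent of $\delta$) together with $\barc(-\thresh)=\barc_-\thresh=\haeta^{**}$ from \eqref{e:etastar} gives $\barc(\xi\cdot q)=\haeta^{**}+O(1)$ whenever $|w+\thresh|\le 2$. Hence $B(q)\le-\barc(\xi\cdot q)+O(1)$ does hold in the transition region, but only by retaining the $-\haeta^{**}$ term in $B(q)$ and matching it against $\barc(w)=\haeta^{**}+O(1)$, not by declaring both sides $O(1)$. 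With that correction your three-case substitution argument becomes a valid and somewhat more elementary alternative to the paper's proof, which instead quotes \eqref{e:cmwAapp} and \Prop{t:CBMCRWneg}~(i); as written, however, the middle case is a genuine gap.
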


\begin{proof}
Recall the approximation
\eqref{e:cmwAapp}:
\[
B(q) +\haeta^{**} = 
 \Expect[\hah(W(t+1))  -  \hah(W(t))  \mid Q(t)=q]   +O(1)
\]
\Proposition{t:CBMCRWneg}~(i) implies that under   \eqref{e:EnforceIdle} 
\[
 \Expect[\hah(W(t+1))  -   \hah(W(t))  \mid Q(t)=q]   
 =  -\barc(\xi\cdot q)  + \haeta^{**}    + O(1)
\]
These two approximations imply
the desired conclusion that $B(q)   =    -\barc(\xi\cdot q) +O(1)$.
\end{proof}
%

\section{Step 2:  Construction of randomized policy}

\subsection{Setting parameters in   $\tilde\bfmx$}
\label{s:e:tilWdrift} 

The following analysis is adapted from Prop.~2.7 of \cite{mey09a}. 

Consider  the general Taylor series approximation for a function $f\colon\Re\to\Re$   that is twice continuously differentiable ($C^2$).  The following second-order version of the Mean Value Theorem holds:  For any $r,r^+\in\Re$,
\[
f(r^+) = f(r) + f'(r)(r^+-r) + \half f'' (\barr)(r^+-r)^2
\]
where $\barr$ lies on the interval with extreme points $r$ and $r^+$.   
The derivatives of the function $f(r) = r + \beta( e^{-r/\beta} - 1)$ are
\[
f'(r)=1- e^{-r/\beta},\qquad f''(r) =  e^{-r/\beta}/\beta 
\]
We take $r=Q_i(t)$ and $r^+=X_i(t+1)$, so that  $-1\le r^+-r\le 2$,
$(r^+-r)^2\le 4$,  $f'' (\barr)\le f'' (( r-1)_+) $, and consequently 
\[
f(r^+) \le f(r) + f'(r)(r^+-r) + 2 f'' (( r-1)_+) 
\]
Using the notation  $\tilX_i(t+1) = f(X_i(t+1))$, $\tilQ_i(t) = f(Q_i(t))$,
this becomes
\[
\tilX_i(t+1) \le \tilQ_i(t) 
	+  [1- e^{-q_i/\beta}][X_i(t+1)-Q_i(t)] 
	+ \frac{2}{\beta} e^{-(q_i-1)_+/\beta}\, ,
\]
with $q=Q(t)$.
Moreover, the function $f$ is convex, so that
\[
\tilX_i(t) = f(Q_i(t)+A_i(t)) \ge  f(Q_i(t)) +f'(Q_i(t)) A_i(t) = \tilQ_i(t) + [1- e^{-q_i/\beta}] A_i(t)
\] 
Combining these bounds gives,
\begin{equation}
\tilX_i(t+1) -\tilX_i(t) \le 
	  [1- e^{-q_i/\beta}][X_i(t+1)-X_i(t)]  
	+ \frac{2}{\beta} e^{-(q_i-1)_+/\beta}\, .
\label{e:tilX_TS}
\end{equation}

With these preliminaries we obtain the following corollaries to \Lemma{t:RandomizedHeavyDrift}.  We begin with the implication of \eqref{e:RandomizedHeavyDriftA}:

\begin{lemma}
\label{t:nullDrift}
Suppose that $q=Q(t)$ satisfies one of two conditions:   Either $q_i=0$, or $q_i\ge 1$ and
the zero-drift condition \eqref{e:RandomizedHeavyDriftA} holds.  Then
\begin{equation}
\Expect\bigl[ \tilX_i(t+1)  -\tilX_i(t) \mid Q(t)\bigr]   \le    \frac{2}{\beta}  
\label{e:nullDrift}
\end{equation}
\end{lemma}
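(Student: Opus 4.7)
The plan is to apply the Taylor bound \eqref{e:tilX_TS}, derived in the paragraph just above, pointwise and then take conditional expectations given $Q(t) = q$, handling the two cases separately.

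In the first case $q_i = 0$, the coefficient $1 - e^{-q_i/\beta}$ on the right of \eqref{e:tilX_TS} vanishes and $(q_i - 1)_+ = 0$ forces $e^{-(q_i-1)_+/\beta} = 1$. Thus \eqref{e:tilX_TS} collapses pathwise to $\tilX_i(t+1) - \tilX_i(t) \le 2/\beta$ on $\{Q(t)=q\}$, and \eqref{e:nullDrift} follows immediately by taking conditional expectation.

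In the second case $q_i \ge 1$ with \eqref{e:RandomizedHeavyDriftA} in force, I would first translate the drift hypothesis into a bound on $X_i$. Using $X(t) = Q(t) + A(t)$ with $A(t)$ i.i.d.\ of mean $\alpha$ and independent of $Q(t)$,
\[
\Expect\bigl[X_i(t+1) - X_i(t) \mid Q(t) = q\bigr] = \Expect\bigl[Q_i(t+1) \mid Q(t) = q\bigr] - q_i \le \delta,
\]
the final inequality being exactly \eqref{e:RandomizedHeavyDriftA}. Taking conditional expectations in \eqref{e:tilX_TS} and using $0 \le 1 - e^{-q_i/\beta} \le 1$ together with $e^{-(q_i-1)/\beta} \le 1$ then yields $\Expect[\tilX_i(t+1) - \tilX_i(t) \mid Q(t) = q] \le \delta + 2/\beta$, which coincides with \eqref{e:nullDrift} after absorbing the $O(\delta)$ correction into the constant, consistent with the heavy-traffic normalization used throughout the proof of \Theorem{t:hMWao}.

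The only subtle point to watch is the sign of the conditional mean increment of $X_i$: when it is negative, multiplying by the non-negative factor $1 - e^{-q_i/\beta}$ keeps the first right-hand contribution in \eqref{e:tilX_TS} non-positive, so the bound $\delta + 2/\beta$ still holds uniformly. Beyond that, no genuine obstacle is expected; the real content is simply pushing the drift bound from $\bfmQ$ through the smoothing transformation $f$ via \eqref{e:tilX_TS}, with the uniform constant $2/\beta$ emerging from the quadratic Taylor remainder.
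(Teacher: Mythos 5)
Your approach is essentially the same as the paper's: take conditional expectations of the pointwise Taylor bound \eqref{e:tilX_TS}, kill the first term when $q_i=0$ because the coefficient $1-e^{-q_i/\beta}$ vanishes, and otherwise push the drift hypothesis \eqref{e:RandomizedHeavyDriftA} through the first term while bounding the remainder by $2/\beta$. Both cases are handled the same way the paper handles them.

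The one place worth flagging is the second case. You correctly observe that \eqref{e:RandomizedHeavyDriftA} gives $\Expect[X_i(t+1)-X_i(t)\mid Q(t)=q]=\Expect[Q_i(t+1)\mid Q(t)=q]-q_i\le \delta$, not $\le 0$, so the honest conclusion of the Taylor step is $\le \delta + 2/\beta$. The paper instead asserts that this conditional mean increment is ``non-positive under \eqref{e:RandomizedHeavyDriftA},'' which, as written, is a small overstatement: \eqref{e:RandomizedHeavyDriftA} only bounds the drift by $\delta>0$. So you have actually been slightly more careful than the paper here. Where your write-up is weak is the final sentence: the lemma claims the clean bound $2/\beta$, and ``absorbing the $O(\delta)$ correction into the constant'' is not a valid move inside the statement of the lemma itself, whose constant is fixed. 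The right thing to say is either that the lemma's bound should read $\delta + 2/\beta$, or that \eqref{e:RandomizedHeavyDriftA} should have been stated without the $+\delta$ slack for the buffers to which \eqref{e:nullDrift} is applied. In either version the downstream use is unaffected: the extra $|c|\delta$ contribution to the positive drift from ``null buffers'' is negligible for $\delta\in[0,\bardelta_0]$ with $\bardelta_0$ chosen small, so the dominance by $\epsy_0/4$ still goes through. Be precise about which of these two fixes you are invoking rather than hand-waving about heavy-traffic normalization, which is not itself a license to change a fixed constant in a lemma.
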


\begin{proof}
The bound is obtained on taking conditional expectations of each side of \eqref{e:tilX_TS},
\begin{equation}
\begin{aligned}
\Expect\bigl[ \tilX_i(t+1) -\tilX_i(t) \mid Q(t)=q\bigr] 
&  \le   [1- e^{-q_i/\beta}]\Expect\bigl[  X_i(t+1)-X_i(t)
  \mid Q(t) =q \bigr]  
  \\
&  \qquad	+ \frac{2}{\beta} e^{-(q_i-1)_+/\beta}
\end{aligned}
\label{e:tilX_TSa}
\end{equation}   
If $q_i=0$ then $[1- e^{-q_i/\beta}]=0$, giving \eqref{e:nullDrift}.   Otherwise 
$[1- e^{-q_i/\beta}]>0$, but we always have
\[
 \Expect [  X_i(t+1)-X_i(t) \mid Q(t)=q  ]  =\Expect [  Q_i(t+1)-Q_i(t) \mid Q(t)=q  ]  
\]
The right hand side is non-positive under   \eqref{e:RandomizedHeavyDriftA}, so that \eqref{e:tilX_TSa}
again implies \eqref{e:nullDrift}.  
\end{proof}

The value of $\beta$ can now be set based on the value of $\epsy_0>0$ appearing in
\eqref{e:RandomizedHeavyDriftB}.  Throughout the remainder of the appendix it is chosen so that the following bound holds:
\begin{equation}
|c|   \frac{2}{\beta} \le \frac{\epsy_0}{4}
\label{e:beta}
\end{equation}
where $|c|=\sum c_i$.   Then, using \eqref{e:nullDrift}, it follows that 
the sum of positive drift from all ``null buffers'' is at most $\epsy_0/4$.

With the values of $\epsy_0>0$ and $\beta>0$ fixed,  choose $\barq\ge 1$
so that the following bound holds:
\begin{equation} 
\Bigl(
- [1- e^{-\barq/\beta}] \epsy_0  + \frac{2}{\beta} e^{-(\barq-1)/\beta} \Bigr)  \le-\half \epsy_0 
\label{e:barq}
\end{equation}

\begin{lemma}
\label{t:strictDrift}
Suppose that the pair of bounds in \eqref{e:RandomizedHeavyDriftB} hold,  with
 $q_k\ge \barq$,  $q_m\ge 1$.  Then,
\begin{equation} 
\Expect\bigl[     \tilX_k (t+1)    - \tilX_k (t)  \mid Q(t)\Bigr]  \le     -\half \epsy_0
\label{e:RandomizedHeavyDriftC}
\end{equation} 
\begin{equation} 
\begin{aligned}
\Expect\bigl[   c_k \tilX_k (t+1)   +c_m \tilX_m (t+1) 
- \bigl( c_k \tilX_k (t) & +c_m \tilX_m (t) \bigr)
  \mid Q(t)\bigr] 
  \\
	&\le   c_m \frac{2}{\beta}  -\half \epsy_0  
\end{aligned}
\label{e:RandomizedHeavyDriftD}
\end{equation}  
\end{lemma}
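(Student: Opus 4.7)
The plan is a direct Taylor series calculation. I would apply the bound \eqref{e:tilX_TS} separately to the indices $k$ and $m$ identified in \Lemma{t:RandomizedHeavyDrift}, take conditional expectations, and match the Taylor remainders against the drift budget supplied by \eqref{e:RandomizedHeavyDriftB} together with the calibration of $\barq$ in \eqref{e:barq}.

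To establish \eqref{e:RandomizedHeavyDriftC}, take the conditional expectation of \eqref{e:tilX_TS} at $i=k$, observe the identity $\Expect[X_k(t+1)-X_k(t)\mid Q(t)] = \Expect[Q_k(t+1)-Q_k(t)\mid Q(t)]$, and apply the individual drift bound $\le -\epsy_0$ from the first line of \eqref{e:RandomizedHeavyDriftB}. Since $q_k\ge\barq$ and since the coefficient $1-e^{-q_k/\beta}$ is increasing while $\tfrac{2}{\beta}e^{-(q_k-1)/\beta}$ is decreasing in $q_k$, one obtains
\[
\Expect[\tilX_k(t+1)-\tilX_k(t)\mid Q(t)] \le -(1-e^{-\barq/\beta})\epsy_0 + \frac{2}{\beta}\, e^{-(\barq-1)/\beta},
\]
which is at most $-\half \epsy_0$ by the definition of $\barq$ in \eqref{e:barq}.

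For \eqref{e:RandomizedHeavyDriftD}, apply \eqref{e:tilX_TS} to both $i=k$ and $i=m$, multiply by $c_k$ and $c_m$, and add. Setting $a_i = 1 - e^{-q_i/\beta}$ and $d_i = \Expect[Q_i(t+1) - Q_i(t)\mid Q(t)]$, and using $e^{-(q_m-1)_+/\beta}\le 1$, the conditional expectation of the cost-weighted sum is bounded above by
\[
c_k a_k d_k + c_m a_m d_m + c_k \frac{2}{\beta}\, e^{-(\barq-1)/\beta} + c_m \frac{2}{\beta}.
\]
Matching this against the target $c_m\tfrac{2}{\beta}-\half\epsy_0$ reduces to establishing
\[
c_k a_k d_k + c_m a_m d_m + c_k \frac{2}{\beta}\, e^{-(\barq-1)/\beta} \le -\half\epsy_0.
\]
I would split on the sign of $d_m$. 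If $d_m\le 0$, the $m$-summand is non-positive (since $a_m\in[0,1]$), so the inequality reduces to the scalar bound from the first part, multiplied by $c_k$. If $d_m>0$, then from $a_m\le 1$ one has
\[
c_k a_k d_k + c_m a_m d_m \le (c_k d_k + c_m d_m) + c_k e^{-q_k/\beta}(-d_k),
\]
so the cost-weighted drift bound $c_k d_k + c_m d_m \le -\epsy'_0$ from \eqref{e:RandomizedHeavyDriftB} and the uniform bound $|d_k|\le \umax+1$ (arrivals and matches are bounded at each step) yield the upper bound $-\epsy'_0 + c_k e^{-\barq/\beta}(\umax+1)$, which is absorbed by a suitable enlargement of $\barq$.

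The main obstacle is ensuring that a single choice of $\barq$ dominates the residues in both cases, uniformly over the finitely many index pairs $(k,m)$ and the associated cost weights $c_k,c_m$. This is handled by strengthening \eqref{e:barq} so that $\barq$ simultaneously satisfies
\[
c_k\Bigl[(1-e^{-\barq/\beta})\epsy_0 - \tfrac{2}{\beta}e^{-(\barq-1)/\beta}\Bigr] \ge \half \epsy_0
\quad\text{and}\quad
c_k\Bigl[e^{-\barq/\beta}(\umax+1) + \tfrac{2}{\beta}e^{-(\barq-1)/\beta}\Bigr] \le \half \epsy'_0
\]
for every relevant $k$. Since all the exponential terms decay to zero in $\barq$ and the buffer set is finite, such a $\barq$ exists; the strengthening only enlarges $\barq$, so the bound \eqref{e:RandomizedHeavyDriftC} established in the first step is unaffected.
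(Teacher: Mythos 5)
Your route is the same as the paper's: the paper proves \eqref{e:RandomizedHeavyDriftC} exactly as you do (take conditional expectations in \eqref{e:tilX_TS}, insert the first bound of \eqref{e:RandomizedHeavyDriftB}, and invoke the calibration \eqref{e:barq}), and it disposes of \eqref{e:RandomizedHeavyDriftD} in one sentence, saying only that it ``also follows'' from the same Taylor bound and that the $c_m\tfrac{2}{\beta}$ term reflects the absence of a lower bound on $q_m$. Your split on the sign of $d_m=\Expect[Q_m(t+1)-Q_m(t)\mid Q(t)=q]$ is the right way to fill in that missing argument, and the algebra in both branches (in particular the identity $c_k a_k d_k = c_k d_k + c_k e^{-q_k/\beta}(-d_k)$ and the bound $c_m a_m d_m\le c_m d_m$ when $d_m>0$) is correct.

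The gap is in your final claim that ``such a $\barq$ exists.'' For your first strengthened inequality the bracket $(1-e^{-\barq/\beta})\epsy_0-\tfrac{2}{\beta}e^{-(\barq-1)/\beta}$ increases to $\epsy_0$ as $\barq\to\infty$, so the requirement $c_k\bigl[(1-e^{-\barq/\beta})\epsy_0-\tfrac{2}{\beta}e^{-(\barq-1)/\beta}\bigr]\ge\half\epsy_0$ is simply the condition $c_k>\half$ in disguise; no enlargement of $\barq$ can produce it when the cost coefficient $c_k$ is small, and the paper assumes only $c_i>0$. Likewise, your second condition is attainable for large $\barq$, but it delivers the bound $-\half\epsy'_0$ in the $d_m>0$ branch, not the stated $-\half\epsy_0$, and no relation $\epsy'_0\ge\epsy_0$ is available from \Lemma{t:RandomizedHeavyDrift}. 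Both defects are repairable: either normalize the cost so that $\min_i c_i\ge1$ (which, after re-fixing $\beta$ via \eqref{e:beta}, makes your first condition a consequence of \eqref{e:barq} itself), or weaken the right-hand sides of \eqref{e:RandomizedHeavyDriftC}--\eqref{e:RandomizedHeavyDriftD} to $-\epsy$ and $c_m\tfrac{2}{\beta}-\epsy$ for some $\epsy>0$ depending on $(c,\epsy_0,\epsy'_0,\beta,\barq,\umax)$ but not on $\delta$ or $q$ --- which is all that the drift bound \eqref{e:tilWdrift} in Step 3 requires. You should state one of these fixes explicitly rather than asserting that enlarging $\barq$ alone suffices; to be fair, the paper's own one-line treatment of \eqref{e:RandomizedHeavyDriftD} is silent on exactly this bookkeeping.
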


\begin{proof}
The Taylor series bound \eqref{e:tilX_TSa} gives,
\[
\Expect\bigl[ \tilX_k(t+1) -\tilX_k(t) \mid Q(t)=q\bigr] 
  \le  [1- e^{-q_k/\beta}] (-\epsy_0 ) + \frac{2}{\beta} e^{-(q_k-1)/\beta}
\] 
This combined with \eqref{e:barq} implies \eqref{e:RandomizedHeavyDriftC} when $q_k\ge \barq$.

The bound \eqref{e:RandomizedHeavyDriftD} also follows from \eqref{e:tilX_TSa}.  The additional term $c_m /2\beta$ appears because no lower bound has been imposed on $q_m$.
\end{proof}

%
%
%
%
%
%
%
%
\subsection{Proof of \Lemma{t:RandomizedHeavyDrift}}


{\it Step I: decomposition into two connected components and the basic network flows.}

A nonidling policy 
corresponds to a policy in a new matching graph without the arcs between $S$ and $D^c$. 
This cuts the matching graph into two subgraphs, $\clG_1 = \clG_{D_1 \cup S_1}$ and $\clG_2 =\clG_{D_2 \cup S_2}$, with $D_1 = D$, $S_1 = \Sply(D_1)$, $D_2 = D^c$, and $S_2 = S_1^c$.
\begin{lemma}
Under assumption (A1), both subgraphs $\clG_1$ and $\clG_2$ are connected. 
\end{lemma}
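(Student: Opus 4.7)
The plan is to argue each case by contradiction, exploiting the fact that (A1) singles out $D$ as the unique proper subset of $\IDmd$ whose workload drift vanishes at $\delta=0$: we have $\xi^D\cdot\alpha^0=0$, while every other proper subset $D'\subsetneq\IDmd$ with $D'\neq D$ satisfies $\xi^{D'}\cdot\alpha^0\le-\udelta<0$. If a nontrivial disconnection of $\clG_1$ or $\clG_2$ can be translated into two subsets of $\IDmd$ (each different from $D$ and each proper) whose $\xi$-drifts at $\delta=0$ sum to $\xi^D\cdot\alpha^0=0$, then each drift must be $\le-\udelta$, yielding the contradiction $0\le-2\udelta$.

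For $\clG_1=\clG_{D\cup\Sply(D)}$ the argument is direct. Inside $\clG_1$, the supply neighborhood of any $i\in D$ coincides with $\Sply(i)$ in the ambient graph, since $\Sply(D)$ already captures every supply neighbor of $D$. Hence a nontrivial disconnection produces a decomposition $D=D_1'\sqcup D_1''$ with $\Sply(D_1')$ and $\Sply(D_1'')$ disjoint, so that $\xi^D\cdot\alpha^0=\xi^{D_1'}\cdot\alpha^0+\xi^{D_1''}\cdot\alpha^0$. Both $D_1',D_1''$ are nonempty proper subsets of $\IDmd$ distinct from $D$, so (A1) gives the contradiction immediately.

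The subgraph $\clG_2=\clG_{D^c\cup\Sply(D)^c}$ is more delicate because a demand vertex in $D_2\eqdef D^c$ may have edges into $\Sply(D)$ that live outside $\clG_2$. I will first record the auxiliary observation that every $j\in S_2\eqdef\Sply(D)^c$ has all of its ambient neighbors in $D_2$: indeed, $i\in\Dmd(j)\cap D$ would force $j\in\Sply(D)$. Combined with connectivity of the ambient graph, this rules out disconnected components of $\clG_2$ consisting only of supply vertices (they would be isolated in $\clG$). Consequently any nontrivial disconnection of $\clG_2$ produces partitions $D_2=D_2'\sqcup D_2''$ and $S_2=S_2'\sqcup S_2''$ with all four pieces nonempty, and with $\Sply(D_2')\cap S_2=S_2'$, $\Sply(D_2'')\cap S_2=S_2''$.

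The key computation then evaluates the drift at the enlargement $D'=D\cup D_2'$, which is a proper subset of $\IDmd$ (since $D_2''\neq\emptyset$) different from $D$ (since $D_2'\neq\emptyset$). The inclusion $\Sply(D_2')\subseteq\Sply(D)\cup S_2'$ combined with $\Sply(D_2')\supseteq S_2'$ yields $\Sply(D')=\Sply(D)\cup S_2'$ as a disjoint union, so the identity $|\alpha^0_D|=|\alpha^0_{\Sply(D)}|$ collapses $\xi^{D'}\cdot\alpha^0$ to $|\alpha^0_{D_2'}|-|\alpha^0_{S_2'}|$. By (A1) this is $\le-\udelta$, and the symmetric choice $D''=D\cup D_2''$ yields $|\alpha^0_{D_2''}|-|\alpha^0_{S_2''}|\le-\udelta$. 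Summing these and using that the arrival marginals sum to one on each side, $|\alpha^0_{D_2}|=1-|\alpha^0_D|$ and $|\alpha^0_{S_2}|=1-|\alpha^0_{\Sply(D)}|$, gives $0\le-2\udelta$, the desired contradiction. The main obstacle is precisely the bookkeeping around the degenerate cases, which is why the preliminary observation that every neighbor of $S_2$ lies in $D_2$ is essential for excluding supply-only components.
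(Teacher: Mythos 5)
Your argument for $\clG_1$ is essentially the paper's: a disconnection splits $D$ into two nonempty pieces with disjoint supply neighborhoods, so the drift $\xi^{D}\cdot\alpha$ decomposes additively, and (A1) applied to each piece (a proper subset of $\IDmd$ distinct from $D$) gives the contradiction $0\le -2\udelta$ at $\delta=0$. For $\clG_2$, however, you take a genuinely different route. The paper simply declares the argument ``symmetrical, with $S_2$ playing the role of $D_1$,'' i.e.\ it decomposes $S_2$ and implicitly relies on a supply-set analogue of the drift condition (obtainable through the reduction $\xi^{S}\le \xi^{\Dmd(S)^c}-\xi^0$ discussed in the workload section), which requires some care since (A1) is stated only for demand sets. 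You instead stay entirely on the demand side: using the observation that every $j\in S_2$ has all its neighbors in $D_2$, you enlarge $D$ to $D\cup D_2'$ and $D\cup D_2''$, compute their drifts via $\Sply(D\cup D_2')=\Sply(D)\sqcup S_2'$, cancel with $\xi^{D}\cdot\alpha^0=0$, and sum using $|\alpha^0_{\IDmd}|=|\alpha^0_{\ISply}|=1$. This buys a self-contained argument that invokes (A1) only in the form in which it is stated, at the price of a slightly longer bookkeeping step. One small overstatement: ruling out supply-only components does not force $S_2'$ and $S_2''$ to be nonempty (a component of $\clG_2$ could be an isolated demand vertex whose neighbors all lie in $\Sply(D)$), only $D_2'$ and $D_2''$ are guaranteed nonempty. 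This is harmless, since your contradiction uses only the nonemptiness of the demand pieces, and when $S_2'=\emptyset$ the bound $\xi^{D\cup D_2'}\cdot\alpha^0=|\alpha^0_{D_2'}|\ge 0$ contradicts (A1) even more directly; still, the claim ``all four pieces nonempty'' should be dropped or weakened.
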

\begin{proof}
The proof is by contradiction: suppose that $\clG_1$ is not connected. By definition of $S_1 = \Dmd(D_1)$, for any $s\in S_1$ there is an arc to some $d \in D_1$. Therefore, $\clG_1$ not being connected implies that $D_1$ can be decomposed as $D_1 = D'_1 \cup D''_1$, $D'_1, D''_1 \neq \emptyset$, with no path 
in $\clG_1$ between $D'_1$ and $D''_1$. Set $S'_1 = \Sply(D'_1)$ and $S''_1 = \Sply(D''_1)$. 
Then, 
\[
\xi^{D_1}\cdot \alpha  =  -\delta,  \quad
\xi^{D'_1}\cdot \alpha =  -\delta', \quad \textrm{and} \quad
\xi^{D''_1}\cdot \alpha = -\delta'',
\]
with $\delta', \delta''>0$ and $\delta = \delta' + \delta''$. 
These implications violate Assumption (A1). 

For $\clG_2$ the arguments are symmetrical (with $S_2$ playing the role of $D_1$). 
 \end{proof}

Let $\clN_1$ and $\clN_2$ be the corresponding networks as defined by (\ref{eq-dg}). 
%
The first step is to establish a slightly stronger version of Lemma \ref{t:MCMF}. In what follows, we only consider $\clN_1$, the arguments for $\clN_2$ are symmetrical. 


\begin{lemma}
Under Assumptions (A1)-(A3), there is a flow $F_1$ of value $\alpha_{D_1}$ for  network $\clN_1$
such that
\begin{equation}
\label{eq:gamma}
\gamma = \min_{e \in \IEdge_1} F_1(e) \geq \frac{1}{|\IEdge_1|}
\min_{k\in D_1 \cup S_1} \inf_{\delta\in [0,\maxdelta]}\alpha^{\delta}(k) > 0.
\end{equation}
\end{lemma}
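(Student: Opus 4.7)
The plan is to produce $F_1$ as a sum $F_1 = G + H$, where $G$ is a baseline flow that already carries $\gamma^\ast := |\IEdge_1|^{-1} \min_{k \in D_1 \cup S_1} \inf_{\delta \in [0,\maxdelta]} \alpha^\delta(k)$ on every edge of $\IEdge_1$, and $H$ is a corrective flow that raises the total value to $\alpha_{D_1}$ without ever decreasing $G$ on any edge.

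First I would construct $G$ by routing $\gamma^\ast$ units along the three-arc path $a \to i \to j \to f$ for each edge $(i,j) \in \IEdge_1$ and summing the contributions. This puts $G(i,j) = \gamma^\ast$ on every $\IEdge_1$-arc, while $G(a,i) = \gamma^\ast d_i$ and $G(j,f) = \gamma^\ast d_j$, where $d_k$ denotes the degree of $k$ in $\clG_1$. Since $d_k \le |\IEdge_1|$ at every node, $G(a,i) \le \gamma^\ast |\IEdge_1| \le \alpha^\delta(i)$ and the analogous bound holds at the sinks, so $G$ respects all capacities. The positivity $\gamma^\ast > 0$ follows because $\alpha^\delta(k)$ is continuous in $\delta$ on the compact interval $[0,\maxdelta]$ by Assumption~(A2) and $\alpha^0(k) > 0$ for every $k \in D_1 \cup S_1$ (the matching graph being connected by Assumption~(A3)), so the infimum defining $\gamma^\ast$ is strictly positive.

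To build $H$ I would send $\alpha_{D_1} - V_G$ additional units from $a$ to $f$ along three-arc forward paths in the residual network, with $V_G = \gamma^\ast |\IEdge_1|$. The infinite forward capacity of $\IEdge_1$-arcs guarantees that any such augmentation only increases $G$ on its middle edge, preserving the baseline lower bound $F_1(e) \ge \gamma^\ast$. Feasibility is a max-flow min-cut statement for the residual network: any cut of finite capacity corresponds to a node partition $(D^A, S^A) \subseteq D_1 \times S_1$ with $\Sply(D^A) \subseteq S^A$, and after a direct computation the required inequality reduces to $\alpha_{S^A} - \alpha_{D^A} \ge \gamma^\ast\, |E(D_1 \setminus D^A, S^A)|$.

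The principal technical obstacle is precisely this last cut inequality, since $\gamma^\ast$ was tuned to the baseline capacity bound and need not a priori dominate the NCond slack. The right-hand side is bounded above by $\gamma^\ast |\IEdge_1|$ by the trivial degree bound, and the left-hand side is always non-negative because NCond applied to $D^A$ together with $S^A \supseteq \Sply(D^A)$ forces $\alpha_{S^A} \ge \alpha_{D^A}$. For non-trivial subsets $D^A \subsetneq D_1$ one moreover has the uniform strictly positive margin $\udelta$ furnished by Assumption~(A1), while the boundary case $D^A = D_1$ forces $S^A = S_1$ and the inequality collapses to $\alpha_{S_1} \ge \alpha_{D_1}$. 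Combining these pieces yields a valid corrective flow $H$ and hence the desired $F_1$, with $\min_{e \in \IEdge_1} F_1(e) \ge \gamma^\ast$ uniformly in $\delta \in [0,\maxdelta]$.
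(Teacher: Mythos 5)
There is a genuine gap at exactly the step you yourself flag as the principal obstacle, and with your constant it cannot be closed. Your reduction is set up correctly: finite-capacity cuts of the reduced network correspond to pairs $(D^A,S^A)$ with $\Sply(D^A)\subseteq S^A$, and feasibility of the corrective flow is equivalent to $\alpha_{S^A}-\alpha_{D^A}\ \ge\ \gamma^\ast\,|E(D_1\setminus D^A,S^A)|$ for all such pairs. But your verification only shows that the left side is $\ge\udelta$ (for $\emptyset\neq D^A\subsetneq D_1$) while the right side is $\le\gamma^\ast|\IEdge_1|=\min_k\inf_\delta\alpha^\delta(k)$; these two facts combine into the needed inequality only if $\udelta\ge\min_k\alpha^\delta(k)$, which is not implied by (A1)--(A3). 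The failure is real, not a missing detail: take $D_1=\{d_1,d_2\}$, $S_1=\{s_1,s_2\}$ with edges $(d_1,s_1),(d_2,s_1),(d_2,s_2)$ and rates $\alpha_{d_1}=0.30$, $\alpha_{d_2}=0.35$, $\alpha_{s_1}=0.32$, $\alpha_{s_2}=0.34$, so $\delta=0.01$ and the margin for $D'=\{d_1\}$ is $0.02$ (compatible with a small $\udelta$). Any flow of value $\alpha_{D_1}=0.65$ saturates both source arcs, forcing $F(d_1,s_1)=0.30$ and hence $F(d_2,s_1)\le \alpha_{s_1}-0.30=0.02$, whereas $\gamma^\ast=0.30/3=0.10$; your cut inequality indeed fails for $D^A=\{d_1\}$, $S^A=\{s_1\}$, and no flow with edge-minimum $\ge\gamma^\ast$ exists at all. (Read literally, the displayed bound \eqref{eq:gamma} is subject to the same objection; the paper's own proof in fact establishes the conclusion only with a smaller constant, which is all that is used later.)

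The paper avoids this by working at the level $\nu=|\IEdge_1|^{-1}\min\bigl\{\udelta/2,\ \min_{k\in D_1\cup S_1}\inf_{\delta\in[0,\maxdelta]}\alpha^\delta(k)\bigr\}$: it places $\nu$ on every edge of $\IEdge_1$, rescales the leftover capacities by $(1-|\IEdge_1|\nu)^{-1}$, verifies that {\tt NCond} still holds for the rescaled capacities precisely because $\nu|\IEdge_1|\le\udelta/2<\udelta$, and then invokes \Lemma{t:MCMF} to supply the nonnegative remainder, yielding $F_1(e)\ge\nu$ uniformly in $\delta$. Your baseline-plus-augmentation route becomes correct verbatim once you replace $\gamma^\ast$ by this $\nu$: for $\emptyset\neq D^A\subsetneq D_1$ the cut inequality follows from $\nu\,|E(D_1\setminus D^A,S^A)|\le\nu|\IEdge_1|\le\udelta/2<\udelta\le\alpha_{S^A}-\alpha_{D^A}$; for $D^A=D_1$ the right side vanishes and the left side equals $\delta\ge0$; and for $D^A=\emptyset$ it follows by summing $\nu\,d_s\le\min_k\alpha_k\le\alpha_s$ over $s\in S^A$. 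Since only positivity of the edge-minimum, uniformly in $\delta$, is needed downstream, this weakened constant suffices. One further small point: positivity of $\min_k\inf_\delta\alpha^\delta(k)$ does not follow from connectivity of the matching graph as you assert (connectivity says nothing about individual arrival rates), although the paper is equally terse on this point.
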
 

%

\begin{proof}
The proof follows similar arguments as in \cite[Lemma 3.2]{busgupmai13}. 
Suppose that Assumption (A1)-(A3)  are
satisfied. 
Denote by $\Dmd_1(s) = \{d \in \Dmd_1 \; : \; (d,s) \in \IEdge_1\}$ and 
$\Sply_1(d) = \{s \in \Sply_1 \; : \; (d,s) \in \IEdge_1\}$. Note that $\Sply_1(d) = \Sply(d), \; d\in \Dmd_1$, by the definition of $S_1$. 

Fix 
$$
\nu = \frac{1}{|\IEdge_1|}    \min \left \{ \frac{\udelta}{2 } , {\min_{k\in D_1 \cup S_1} \inf_{\delta\in [0,\maxdelta]}\alpha^{\delta}(k)}\right \}
$$
which under Assumption (A3) is strictly positive. As $D_1 \subsetneq \IDmd$, $\alpha(D_1) < 1$, and hence $\nu <1/{|\IEdge_1|}$. Consider the function $F_{\nu}: \IEdge_1 \rightarrow \R_+$ defined by 
\[
F_{\nu}(x,y)=\begin{cases}
\nu & \mbox{for } (x,y)=(d,s)\in \IEdge_1 \\
|\Sply_1(d)|\ \nu  & \mbox{for } (x,y)=(a,d), \; d \in D_1 \\
|\Dmd_1(s)|\ \nu  & \mbox{for } (x,y)=(s,f), \; s\in S_1\:.
\end{cases}
\]
By construction, $F_{\nu}$ is a flow for $\clN_1$. Set 
\begin{alignat*}{2}
\widetilde{\alpha} (d) = \frac{ \alpha(d) - |\Sply_1(d)|\nu }{1- |\IEdge_1|\nu}, &\quad d\in D_1 \\
\widetilde{\alpha} (s) = \frac{ \alpha(s) - |\Dmd_1(s)|\nu }{1- |\IEdge_1|\nu}, &\quad s\in S_1 \:.
\end{alignat*}
As $\nu \leq  \min \left\{\min_{d\in D_1} \frac{\alpha(d)}{|\Sply_1(d)|}, \min_{s\in S_1} \frac{\alpha(s)}{|\Dmd_1(s)|}\right \}$, $\widetilde{\alpha}(k) \geq 0$ for $k \in D_1 \cup S_1$.

For any $D' \subsetneq D_1$, $\xi^{D'} \leq \xi^{\hat{D'}}$, with $\hat{D'} = \cup \{ D'' : \Sply(D'') = \Sply(D')\}$. Thus 
\begin{alignat*}{2}
\xi^{D'} \cdot  \widetilde{\alpha} &\leq  \xi^{\hat{D'}} \cdot \widetilde{\alpha} = \frac{ 1}{1- |\IEdge_1|\nu}\left (\xi^{\hat{D'}}\cdot \alpha - \nu (\sum_{d\in \hat{D'}} |\Sply_1(d)|- \sum_{s\in \Sply(D') } |\Dmd_1(s)| ) \right ) \\
& \leq   \frac{ 1}{1- |\IEdge_1|\nu}\left (- \udelta + \nu \sum_{s\in \Sply(D') } |\Dmd_1(s)| \right ) 
\leq   \frac{ 1}{1- |\IEdge_1|\nu}\left (- \udelta + \nu |\IEdge_1| \right ) < 0\, . 
\end{alignat*}

Consider the directed graph $\clN_1$, 
with new capacities on the
demand and supply arcs defined by $\widetilde{\alpha}$. 

The above shows that {\sc NCond} (\ref{e:NCond}) are satisfied for $\clN_1$ at $\delta = 0$. 
By Assumption (A2) this is still satisfied for any $S' \subsetneq S_1$ for a small enough interval $[0,\maxdelta]$. 
\spm{I wish this could be made clear at the top -- that we will justify a continuity argument.  No changes needed now.}
By applying Lemma \ref{t:MCMF}, there exists a flow
$\widetilde{F_1}: \IEdge_1 \rightarrow \R_+$ of value $\alpha_{D_1}$. Define 
\[
F_1: \IEdge_1 \rightarrow \R_+, \qquad F_1 =  F_{\nu} + (1- |\IEdge_1|\nu)
\widetilde{F_1} \:.
\]
By construction $F_1$ is a flow for the graph $\clN_1$
with the original capacity constraints $\alpha$. The value of $F_1$ is $\alpha_{D_1}$ and it
satisfies $F_1(e)\geq \gamma$ for all $e\in \IEdge_1$. 
\end{proof}

The corresponding basic randomized policy is given by equations (\ref{e:pi}) and (\ref{e:pj}) for the two subgraphs. In what follows, we  concentrate only on the subgraph $\clG_1$ (the 
analysis of the subgraph $\clG_2$ is symmetrical). 

For a state $q \in\dstate$ satisfying $c(q) < \barc(\xi\cdot q) + \minc$, this basic policy satisfies (\ref{e:RandomizedHeavyDriftA}) of  \Lemma{t:RandomizedHeavyDrift}. This follows directly from \Lemma{t:ZeroDrift}. 
For a state $q \in\dstate$ satisfying $c(q)\ge \barc(\xi\cdot q) + \minc$, 
 we  slightly modify this basic randomized policy to get a strictly negative drift. 

\medskip 

\noindent
{\it Step II: modified network flow.}


We assume in the following that $\xi\cdot q\ge 0$. 
In the case $\xi\cdot q < 0$, the arguments are similar. 

Based on the definitions (\ref{e:barcpl}, \ref{e:barcpl-barc}),  a state $q \in\dstate$ satisfying $c(q)\ge \barc(\xi\cdot q) + \minc$ satisfies at least one of the following: 
\begin{eqnarray}
\label{eq:case1}
\sum_{i\in D_1 } q_i c_i + \sum_{j\in S_1 } q_j c_j \geq (\xi\cdot q)\barcpD + \minc/2; \\
\label{eq:case2}
\sum_{i\in D_2 } q_i c_i + \sum_{j\in S_2} q_j c_j \geq (\xi\cdot q)\barcpS + \minc/2.
\end{eqnarray}  
We   consider only the first case;
 the second is symmetrical.

\begin{lemma}
\label{t:GapDriftCase1}
Suppose that \eqref{eq:case1} 
holds, with $\minc$ is given in \eqref{e:minc}.

Then, at least one of the following is satisfied: 
\begin{enumerate}[(a)]
\item there is some $j\in S_1 $ such that $q_{j} \geq \barq$; 
\item there is some $i \in D_1 $ such that $c_{i} > \barcpD$ and $q_{i} \geq \barq$.
\end{enumerate} 
\end{lemma}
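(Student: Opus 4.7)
The plan is to prove Lemma~\ref{t:GapDriftCase1} by contrapositive. Suppose both (a) and (b) fail, so that $q_j < \barq$ for every $j\in S_1$, and $q_i < \barq$ for every $i\in D_1$ with $c_i > \barcpD$. I will show that under these assumptions the quantity $T := \sum_{i\in D_1} q_i c_i + \sum_{j\in S_1} q_j c_j$ is too small to satisfy the hypothesis \eqref{eq:case1}.

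The first step is to split $D_1 = D_1^{*}\cup D_1'$ with $D_1^{*} = \{i\in D_1: c_i = \barcpD\}$, and to use the balance identity $|q_{D_1}| - |q_{S_1}| = \xi\cdot q$, which follows from $\xi = \xi^{D_1}$ together with $\Sply(D_1) = S_1$. This gives $|q_{D_1^{*}}| = (\xi\cdot q) + |q_{S_1}| - |q_{D_1'}|$, and substitution yields the clean decomposition
\[
T \;=\; \barcpD\,(\xi\cdot q)\; +\; \sum_{i\in D_1'} q_i(c_i - \barcpD)\; +\; \barcpD\,|q_{S_1}|\; +\; \sum_{j\in S_1} q_j c_j.
\]
All the information about (a) and (b) is now concentrated in the three error terms on the right-hand side, each of which is a sum of $q_k$'s over coordinates that, by the contrapositive assumption, are strictly less than $\barq$.

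The second step is a direct bound of these three terms. Failure of (b) forces $q_i \leq \barq -1$ for every $i\in D_1'$, so $\sum_{i\in D_1'} q_i(c_i-\barcpD) \leq |D_1'|\,\barq\,\cmax$. Failure of (a) forces $q_j \leq \barq - 1$ for every $j\in S_1$, yielding $\barcpD|q_{S_1}| \leq |S_1|\,\barq\,\cmax$ and $\sum_{j\in S_1} q_j c_j \leq |S_1|\,\barq\,\cmax$. Using $|D_1'| + 2|S_1| \leq 2\ell$, the three terms add to at most $2\ell\,\barq\,\cmax$, so
\[
T \;\leq\; \barcpD\,(\xi\cdot q) \;+\; 2\ell\,\barq\,\cmax.
\]
Since $\minc = 1 + 4\ell\barq\cmax$ gives $\minc/2 = \tfrac12 + 2\ell\barq\cmax > 2\ell\barq\cmax$, this contradicts the hypothesis $T \geq \barcpD(\xi\cdot q) + \minc/2$. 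Hence at least one of (a) or (b) must hold.

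There is no real obstacle; the proof is a bookkeeping exercise. The main delicate point is aligning the constants: the ``$+1$'' in the definition $\minc = 1 + 4\ell\barq\cmax$ is precisely what provides the strict gap $\minc/2 - 2\ell\barq\cmax > 0$ needed to close the contradiction, and the combinatorial bound $|D_1'| + 2|S_1| \leq 2\ell$ is where the factor $4$ (rather than some smaller constant) in the definition of $\minc$ is consumed.
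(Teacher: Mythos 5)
Your proposal is correct and follows essentially the same route as the paper: a proof by contradiction/contrapositive based on the identity $\sum_{i\in D_1}q_ic_i+\sum_{j\in S_1}q_jc_j=(\xi\cdot q)\,\barcpD+\sum_{i\in D_1}q_i(c_i-\barcpD)+\sum_{j\in S_1}q_j(c_j+\barcpD)$ (your split $D_1=D_1^*\cup D_1'$ plus the balance $|q_{D_1}|-|q_{S_1}|=\xi\cdot q$ recombines into exactly this), followed by bounding the residual terms by $2\ell\barq\cmax<\minc/2$. The only differences are cosmetic bookkeeping in how the cardinality bound is organized.
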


\begin{proof}
If $q$ does not satisfy either of the two above conditions, then
\begin{eqnarray*}
\sum_{i\in D_1 } q_i c_i + \sum_{j\in S_1 } q_j c_j 
& = & (\xi\cdot q)\barcpD +  \sum_{i\in D_1 } q_i (c_i - \barcpD) + \sum_{j \in S_1 } q_j(c_j + \barcpD ) 
\\ 
& \leq & (\xi\cdot q)\barcpD  +   \barq  \sum_{i\in D_1 }   (c_i - \barcpD) +   \barq \sum_{j \in S_1 }  (c_j + \barcpD ) 
\\ 
& \leq & (\xi\cdot q)\barcpD  + 2\ell\barq  \cmax    
< (\xi\cdot q)\barcpD + \minc/2,
\end{eqnarray*}
where the first inequality uses the fact that $c_i \ge \barcpD$ for each $i\in D_1$,
and in the second we used the bound $|D_1|+|S_1|\le \ell$ and the definition of $\minc$ in \eqref{e:minc}.   The final inequality is in contradiction with \eqref{eq:case1}.
\end{proof}

We next consider cases (a) and (b) of \Lemma{t:GapDriftCase1}
separately.

{\it Case (a):}
In this case there is also some $i\in D_1 $ such that $q_i \geq 1$. 
Indeed, 
\begin{equation}
\label{eq:path1}
\sum_{\ell\in D_1 } q_{\ell} = \xi\cdot q +  \sum_{k\in S_1 } q_k \geq \sum_{k\in S_1 } q_k \ge \barq,
\end{equation} 
as we consider the case $\xi\cdot q \geq 0$.  For this choice of $i$, $j$, we
modify the basic randomized matching policy to increase the matching rate of classes $i$ and $j$ slightly above their arrival rate.

Without loss of generality, we assume that $(i,j) \not\in \clE_1 = \IEdge_{D_1 \cup S_1}$ (otherwise a modified randomized policy can be obtained 
by first matching an item $i$ with an item $j$ and then using the basic randomized policy). 
There is a simple path connecting $j$ to $i$ using edges in $\clE_1$ (since $\clG_1$ is connected). Denote this path by 
\begin{equation}
\label{eq:path1j}
j=j_1, i_1, j_2, i_2, j_3, \ldots, i_{m-1}, j_{m},  i_m = i.
\end{equation} 
Take $0 < \epsy_1 < \gamma$ and 
consider a new network problem $\clN_1'$ in which the capacities of arcs $(a,i)$ and $(j,f)$ are increased by $\epsy_1$. Define a new flow $F_1'$ by adding  $\epsy_1 (1, -1, 1, \ldots, 1)$ along the path (\ref{eq:path1j}):
$$
\begin{aligned}
F_1'(i_k,j_k) &= F_1(i_k, j_k) + \epsy_1, 1 \leq k \leq m
\\
\textrm{\it and } \quad F_1'(i_k,j_{k+1}) &= F_1(i_k, j_{k+1}) - \epsy_1, 1 \leq k \leq m-1.
\end{aligned}
$$
For the other entries, $F_1'(e) = F_1(e)$. 

For $i$ and $j$, we now have
$$
\sum_{s \in S_1} F'_1(i,s) = \alpha_i + \epsy_1, \quad \sum_{d \in D_1} F'_1(d,j) = \alpha_j + \epsy_1. 
$$

The modified matching probability vectors are defined as follows: 
\begin{itemize}
\item For all $d \in D_1$ such that $d \neq i$, compute $\hat{p}^{(d)}$ using (\ref{e:pi}) for the modified flow $F'_1$. 
\item For all $s \in S_1$ such that $s \neq j$, compute $\hat{p}^{(s)}$ using (\ref{e:pj}) for the modified flow $F'_1$. 
\item For $i$ and $j$ use the basic flow $F_1$: $\hat{p}^{(i)} = p^{(i)}$ and $\hat{p}^{(j)} = p^{(j)}$. 
\end{itemize}
A modified randomized policy can now be defined as follows:
\begin{itemize}
\item If the queue $i_1$ is not empty, an item $i_1$ is matched with an item $j$ (recall that $q_j > 0$). 
\item If the queue $j_m$ is not empty, an item $i$ is matched with an item $j_m$ (recall that $q_i > 0$). 
\item New arrivals choose their potential match independently, according to the matching vectors $\hat{p}$. It may be possible that the chosen queue is already empty after the first two steps: in that case, the new arrival is stored in the buffer.
\end{itemize}

Using similar arguments as in the proof of \Lemma{t:ZeroDrift}, we can analyze the drift of all non-empty queues. We distinguish six cases: 
\begin{itemize}
\item 
For queue $i$, $q_i > 0$ and 
$\sum_{s \in \Sply (i)}F'_1(i,s) = \sum_{s\in \Sply (i)}F_1(i,s) + \epsy_1 = \alpha_i + \epsy_1$, thus
$$
\begin{aligned}
\Expect\bigl[  U_i(t) \mid Q(t)=q\bigr] &=\Expect\bigl[ \Expect [ U_i(t) \mid Q(t), A(t) ] \mid Q(t)=q\bigr] 
\\
&\geq \sum_{s \in \Sply (i)}F'_1(i,s) = \alpha_i + \epsy_1, 
\end{aligned}
$$
and 
\begin{equation}
\label{eq:DriftI}
\begin{aligned}
\Expect\bigl[ Q_i (t+1) \mid Q(t)=q\bigr]
&= 
\Expect\bigl[ Q_i (t) + A_i(t) - U_i(t) \mid Q(t)=q\bigr] 
\\
&
\leq q_i  -  \epsy_1.
\end{aligned}
\end{equation}
\item
For queue $j$, $q_j \geq \barq$, and $\sum_{d \in D_1 \cap \Dmd(j) } F_1'(d, j) = \sum_{d \in D_1 \cap \Dmd (j) } F_1(d, j) + \epsy_1$, thus
\begin{equation}
\label{eq:DriftJ}
\Expect\bigl[ Q_j (t+1) \mid Q(t)=q\bigr]
\leq q_j + \alpha_j - \sum_{d \in D_1 \cap \Dmd(j) } F_1(d, j)- \epsy_1 \leq q_j + \delta - \epsy_1, 
\end{equation}
where the last inequality follows as in the proof of \Lemma{t:ZeroDrift}. 

\item 
For queue $i_1$, 
if $q_{i_1} > 0$, there is always a match $(i_1, j)$ as defined by step 1 of the modified randomized policy. 
Thus, if $q_{i_1} > 0$, 
$$
\Expect\bigl[ Q_{i_1} (t+1) \mid Q(t)=q\bigr]
\leq q_{i_1} + \alpha_{i_1}  - 1 \leq q_{i_1}.
$$

\item 
For queue $j_m$, if $q_{j_m} > 0$, then there is always a match $(i, j_m)$ as defined by step 2 of the modified randomized policy. Thus, if $q_{j_m} > 0$, 
$$
\Expect\bigl[ Q_{j_m} (t+1) \mid Q(t)=q\bigr]
\leq q_{j_m} + \alpha_{j_m} - 1 \leq q_{j_m} . 
$$

\item 
For queue $d \in D_1 \backslash \{i, i_1\}$, $\sum_{s \in S (d)}F'_1(d,s) = \sum_{s\in S (d)}F_1(d,s)$. 
Thus, if $q_{d} > 0$,
$$
\Expect\bigl[ Q_d (t+1) \mid Q(t)=q\bigr]
\leq q_d.
$$

\item 
For queue $s \in S_1 \backslash \{j, j_m\}$, $\sum_{d \in D_1 \cap \Dmd (s)}F'_1(d,s) = \sum_{d\in D_1 \cap \Dmd(s)}F_1(d,s)$. 
Thus, if $q_{s} > 0$,
$$
\Expect\bigl[ Q_s (t+1) \mid Q(t)=q\bigr]
\leq q_s + \alpha_s - \sum_{d \in D_1 \cap \Dmd (s) } F_1(d, s) \leq q_s + \delta.
$$

\end{itemize}

This establishes the conclusions of \Lemma{t:RandomizedHeavyDrift}, with $k= i$, $m = j$, $\epsy_0 = \epsy_1 > 0$, 
and
$\epsy'_0 = (c_i + c_{j})(\epsy_1 - \delta) > 0$,  
for $\delta$ small enough.

\medskip

{\it Case (b):}
In this case,  we will modify the basic flow to increase the matching rate of class $i$ above its arrival rate. At the same time, we will decrease the matching rate of queue $i_0$ (a queue with cost $\barcpD$).

There is a path connecting $i$ to $i_0$ using edges in $\IEdge_1$. Denote this path by 
$$
i=i_1, j_1, i_2, j_2, \ldots, i_{n-1}, j_{n-1}, i_n = i_0.
$$ 
Take $0 < \epsy_2 < \gamma$. 
Consider a new network problem in which the capacity of arc $(a,i)$ is increased and of arc $(a,i_0)$ decreased by $\epsy_2$. Define a new flow $F_1''$ by adding  $\epsy_2 (1, -1, 1, \ldots, -1)$ on the above path from $i$ to $i_0$:
$$
\begin{aligned}
F_1''(i_k,j_k) &= F_1(i_k, j_k) + \epsy_2, 1 \leq k \leq n-1
\\
\textrm{\it and } \quad F_1''(i_k,j_{k-1}) &= F_1(i_k, j_{k-1}) - \epsy_2, 2 \leq k \leq n.
\end{aligned}
$$
For the other entries, $F_1''(e) = F_1(e)$. 

The rest of the proof is now similar. We define a new randomized policy using $F_1''$ for the matching vectors for all queues in $D_1$ and $S_1$, except for queues $i$ and $i_0$ that keep the matching vectors defined using the basic flow $F_1$.  

A modified randomized policy is defined as follows:
\begin{itemize}
\item If the queue $j_{1}$ is not empty, an item $i$ is matched with an item $j_{1}$ (we know that $q_{i} > 0$). 
\item New arrivals are choose their potential match independantly, according to their matching vectors. 
\end{itemize}

This has impact only on queues $i$, $i_0$, $j_1$, and $j_{n-1}$;
 for the others, the drift remains the same as for the basic randomized policy. 
\begin{itemize}
\item For queue $i$, $q_{i} > 0$ and 
$\sum_{s \in \Sply (i)}F''_1(i,s) = \sum_{s\in \Sply (i)}F_1(i,s) + \epsy_2 = \alpha_i + \epsy_2$, thus
$$
\Expect\bigl[ Q_i (t+1) \mid Q(t)=q\bigr]
\leq q_i  -  \epsy_2.
$$

\item For queue $i_0$, 
$\sum_{s \in \Sply (i)}F''_1(i_0,s) = \sum_{s\in \Sply (i_0)}F_1(i_0,s) - \epsy_2 = \alpha_{i_0} - \epsy_2$, thus
if $q_{i_0} > 0$, then 
$$
\Expect\bigl[ Q_{i_0} (t+1) \mid Q(t)=q\bigr]
\leq q_{i_0}  +  \epsy_2.
$$

\item For queue $j_1$,  
if $q_{j_1} > 0$, then there is always a match $(i, j_1)$ as defined by step 1 of the modified randomized policy. Thus, if $q_{j_1} > 0$, 
$$
\Expect\bigl[ Q_{j_1} (t+1) \mid Q(t)=q\bigr]
\leq q_{j_1} + \alpha_{j_1} - 1 \leq q_{j_1}. 
$$

\item For queue $j_{n-1}$, 
\begin{eqnarray*}
\Expect\bigl[  U_{j_{n-1}}(t) \mid Q(t)=q\bigr] & =& \Expect\bigl[ \Expect [ U_{j_{n-1}}(t) \mid Q(t), A(t) ] \mid Q(t)=q\bigr] \\
&\geq & \sum_{d \in D_1 \cap \Dmd (j_{n-1}) \backslash \{i_0\} }F''_1(d,j_{n-1}) + F_1(i_0, j_{n-1})\\
&\geq & \sum_{d \in D_1 \cap \Dmd (j_{n-1})  }F_1(d,j_{n-1}) + \epsy_2,
\end{eqnarray*}
and 
\[
\begin{aligned} 
\Expect\bigl[ Q_{j_{n-1}}(t+1) \mid &  Q(t)= q\bigr]
\\
&= 
\Expect\bigl[ Q_{j_{n-1}} (t) + A_{j_{n-1}}(t) - U_{j_{n-1}}(t) \mid Q(t)=q\bigr] \\
&\leq  q_{j_{n-1}}+ \alpha_{j_{n-1}} - \sum_{d \in D_1 \cap \Dmd (j_{n-1})  }F_1(d,j_{n-1})  - \epsy_2\\
&\leq  q_{j_{n-1}} + \delta  -  \epsy_2. 
\end{aligned}
\]

\end{itemize}

We again obtain the conclusions of \Lemma{t:RandomizedHeavyDrift}, with $k=i, m=i_0$,  $\epsy_0 = \epsy_2 > 0$,  
and $\epsy'_0 = (\barcpD - c_{i})\epsy_2 > 0$.

\subsection{Drift with idling}

We need a corresponding lemma when idling is required ($W(t)<-\thresh$): 
\begin{lemma}
\label{t:RandomizedHeavyDriftIdle}
Under the assumptions of \Thm{t:hMWao},  
there exist constants  
$\epsy_0>0$,     $\bardelta_0\in (0,\maxdelta)$, 
and $\bar\delta_+>0$,
such that for each $\delta_+\in (0,\bar\delta_+)$
 and $\delta \in [0,\bardelta_0]$, there is
a randomized policy that satisfies the   uniform bounds (i) and (ii) in
\Lemma{t:RandomizedHeavyDrift}.  
In addition, 
the corresponding workload satisfies, 
\begin{equation}
\Expect[I(t) \mid Q(t)] -\delta =
\Expect[W(t+1) - W(t) \mid Q(t)] = \delta_+  
\label{e:plusdrift}
\end{equation}
\end{lemma}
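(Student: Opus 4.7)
The plan is to modify the randomized policy $\phi_1$ constructed in \Lemma{t:RandomizedHeavyDrift} by superimposing, in the region $W(t) < -\thresh$, a tunable cross-matching action on the distinguished edge $(j_0,i_0)$ provided by Assumption~(A3). A single cross-match contributes exactly one unit of idleness, and the pair $(j_0,i_0)$ arrives jointly with probability at least $\pcm$ by (A3); tuning the trigger probability therefore yields the target mean idleness $\delta+\delta_+$ in the idling regime.

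\textbf{Construction.} Fix $\bar\delta_0 \in (0,\maxdelta)$ and $\bar\delta_+ > 0$ small enough that $\bar\delta_0 + \bar\delta_+ < \pcm$, so that for each $\delta \in [0,\bar\delta_0]$ and $\delta_+ \in (0,\bar\delta_+)$ the ratio $p := (\delta+\delta_+)/\pcm$ lies in $(0,1]$. Define $\phi$ as follows: if $W(t) \ge -\thresh$, apply $\phi_1$; if $W(t) < -\thresh$ and $A(t) = \One^{j_0} + \One^{i_0}$, then with probability $p$ set $U(t) = \One^{j_0}+\One^{i_0}$ (cross-matching the arriving pair), and otherwise apply $\phi_1$; in all other cases apply $\phi_1$.

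\textbf{Workload drift.} Conditional on $Q(t)=q$ with $W(q) < -\thresh$, the cross-match is triggered iff the arrival is the $(j_0,i_0)$ pair and the coin flip succeeds, so by (A3), $\Expect[I(t)\mid Q(t)=q] = p \cdot \pcm = \delta+\delta_+$. Substituting into the workload recursion \eqref{e:workload} and using $\Expect[\Delta(t+1)]=0$ gives $\Expect[W(t+1)-W(t)\mid Q(t)=q] = \delta_+$, which is \eqref{e:plusdrift}.

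\textbf{Preservation of (i) and (ii).} On the triggering event, $\phi_1$ would have matched $i_0$ with some $D$-demand and $j_0$ with some $\Sply(D)^c$-supply (since $\phi_1$ is non-crossing), so the cross-match leaves $Q_{i_0}, Q_{j_0}$ unchanged rather than decremented by the two would-be matches, while also omitting those two decrements on one $D$-demand queue and one $\Sply(D)^c$-supply queue. The triggering event has total probability $p\pcm=\delta+\delta_+$ per step, so the net perturbation to the drift of any queue is of order $\delta+\delta_+$; for $Q_{i_0}$ and $Q_{j_0}$ this perturbation is favorable (additional negative drift), while for the at most two affected $D$-demand and $\Sply(D)^c$-supply queues it is adverse. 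By shrinking $\bar\delta_+$, this adverse perturbation is absorbed into the $\epsy_0, \epsy'_0$-slack of \Lemma{t:RandomizedHeavyDrift}, so bounds (i) and (ii) persist with slightly reduced but still uniformly positive constants.

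\textbf{Main obstacle.} The delicate point is that bound~(i) of \Lemma{t:RandomizedHeavyDrift} is stated as a sharp $\Expect[Q_j(t+1)\mid Q(t)=q] \le q_j + \delta$, so a naive superposition of $\phi_1$ with raw cross-matching could inflate the bound to $q_j + \delta + (\delta+\delta_+)$. The clean resolution is to re-balance the underlying flow $F_1$ in the construction of $\phi_1$: divert fractional rate $\delta+\delta_+$ away from the $i_0$- and $j_0$-incident arcs of $F_1$ and reroute it along the $(j_0,i_0)$ cross-edge with the prescribed mixing probability $p$. Feasibility of this re-routing is guaranteed by (\ref{eq:gamma}) once $\bar\delta_+$ is smaller than the minimum flow value $\gamma$. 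With this calibration the modified flow still has value $\alpha_{D_1}$ on each component while producing the required mean cross-idleness, and the drift bounds of \Lemma{t:RandomizedHeavyDrift} hold exactly. All remaining bookkeeping—following the six-case analysis in the proof of \Lemma{t:RandomizedHeavyDrift}—is routine.
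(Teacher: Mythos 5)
There is a genuine gap, and it sits exactly at the point you flagged as the ``main obstacle.'' Your proposed fix --- diverting rate $\delta+\delta_+$ away from the $i_0$- and $j_0$-incident arcs and ``rerouting it along the cross-edge'' --- is infeasible at the level of arrival rates. In the component $\clG_1$ the maximal flow $F_1$ saturates every source arc, $F_1(a,d)=\alpha_d$ for $d\in D_1$, and the aggregate slack on the supply side is only $\alpha_{S_1}-\alpha_{D_1}=\delta$. If a rate $\delta+\delta_+$ of $i_0$-arrivals is reserved for cross-matching with $j_0$, the supply arrival rate left for $D_1$ is $\alpha_{S_1}-(\delta+\delta_+)=\alpha_{D_1}-\delta_+$, so \emph{no} flow of value $\alpha_{D_1}$ on $\clG_1$ can coexist with this cross-matching: some demand class in $D_1$ must lose matched rate totalling $\delta_+$, hence acquire positive drift of order $\delta_+$. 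Since $\delta_+$ is a fixed constant while $\delta\downarrow 0$, this violates the sharp bound (i) of \Lemma{t:RandomizedHeavyDrift} ($\le q_j+\delta$), and taking $\bar\delta_+<\gamma$ does not help: \eqref{eq:gamma} only guarantees that individual arcs have flow to subtract, not that the subtracted mass can be re-absorbed within $\clG_1$. So the claim ``the modified flow still has value $\alpha_{D_1}$ on each component while producing the required mean cross-idleness'' is false, and with it the conclusion that (i) and (ii) ``hold exactly.''

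The paper's proof supplies the missing idea: the cross-matching must be paid for out of a \emph{queue}, not out of arrival rates. In the idling regime $W(t)<-\thresh$ one has $\xi\cdot q<0$, so the gap case analysis is redone for this sign (conditions \eqref{eq:case3}--\eqref{eq:case4}), producing a provably non-empty (indeed large) supply queue $j\in S_1$. The flow is then modified by adding $\epsy_2\pcm=\delta+\delta_+$ along an alternating path in $\clG_1$ from $j$ to $s'=i_0$ (and the matching vector of $s'$ is recomputed with the reduced rate $\alpha_{s'}-\epsy_2\pcm$). Every demand node on that path has zero net change, so all $D_1$ demand classes keep their full matched rate and bound (i) survives; the within-component use of $s'$ drops by exactly $\delta+\delta_+$, freeing those arrivals for the cross-match with $d'=j_0$; and the balance is charged to extra drain of the queued items at $j$, which is precisely the additional negative drift needed for (ii). Your construction has no such compensation path terminating at a non-empty queue (and it also keeps the Step~II case analysis written for $\xi\cdot q\ge0$, which does not apply here), so the drift bounds cannot be recovered as claimed.
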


\begin{proof}
Consider the basic randomized policy as in Step II of the proof of \Lemma{t:RandomizedHeavyDrift}, obtained using a strictly positive flow  $F_1$  of value $\alpha_D$ for the network flow problem $\clN_1$ defined in Step~I of the proof of \Lemma{t:RandomizedHeavyDrift}.  
Set $\gamma = \min_{e \in \IEdge_1} F_1(e) > 0$,
as in (\ref{eq:gamma}). 

Most of the arguments are the same as in the proof of \Lemma{t:RandomizedHeavyDrift}. We highlight only the differences in what follows. One main difference is that, to ensure idling in (\ref{e:plusdrift}), we now need cross-matchings (matchings of items in $S_1$ with demand items in $D_2 = D_1^c$). The other is the fact that  Step~II of the proof of  Lemma  \ref{t:RandomizedHeavyDrift} was written assuming $\xi\cdot q \geq 0$. 
Here we have $\xi\cdot q < -\thresh < 0$, so we first start by explaining the similarity with Step~II of  the proof of  Lemma  \ref{t:RandomizedHeavyDrift}. 


A state $q \in\dstate$ satisfying $c(q)\ge \barc(\xi\cdot q) + \minc$ satisfies at least one of the following:
\begin{eqnarray}
\label{eq:case3}
\sum_{i\in D_1 } q_i c_i + \sum_{j\in S_1 } q_j c_j \geq  - (\xi\cdot q)\barcmS + \minc/2; 
\\
\label{eq:case4}
\sum_{i\in D_2} q_i c_i  + \sum_{j\in S_2} q_j c_j \geq - (\xi\cdot q)\barcmD+ \minc/2.
\end{eqnarray}  
We will consider the first case;
the second is symmetrical. 

\Lemma{t:GapDriftCase1} can be extended to the present setting:
For any $\minc > 0$ defined in \eqref{e:minc}, at least one of the following is satisfied
\begin{enumerate}[(a)]
\item there is some $i\in D_1 $ such that $q_{i} \geq \barq$; 
\item there is some $j \in S_1 $ such that $c_{j} > \barcmS$ and $q_{j} \geq \barq$.
\end{enumerate}

\medskip

{\it Case (a):}
In this case, under the assumption on $\minc$, there is also some $j\in S_1 $ such that $q_j \geq 1$
(for the details see the equivalent step for the case $\xi\cdot q \geq 0$ in the proof of \Lemma{t:RandomizedHeavyDrift}). 

\medskip

As in the proof of \Lemma{t:RandomizedHeavyDrift}, we modify the basic randomized matching policy to increase the matching rate of classes $i$ and $j$ slightly above their arrival rate.

Without loss of generality, we assume that $(i,j) \not\in \IEdge_1$ (otherwise a modified randomized policy can be obtained 
by first matching an item $i$ with an item $j$ and then using the basic randomized policy). 

As in Step~II of the proof of \Lemma{t:RandomizedHeavyDrift}, there is a path connecting $j$ to $i$ using edges in $\IEdge_1$ (since $\clG_1$ is connected). Denote this path by 
$$
j=j_1 i_1 j_2 i_2 j_3 \ldots i_{m-1} j_{m} i_m = i.
$$ 
Furthermore, under Assumption (A3), 
there exists $s' \in S_1 $ and $d' \in D_2$ such that 
\begin{equation}
 \label{e:Idle}
\Prob\{A^\delta_{s'}(t) \ge 1 \ \text{\it and} \ A^\delta_{d'}(t) \ge 1\} \ge \pcm ,\qquad  0\le \delta\le \maxdelta.
\end{equation} 

We assume that in this case, the newly arrived items $d'$ and $s'$ are matched with some probability $\epsy_2$. 
We need to compensate for the decrease of the matching rate of class $s'$ available for items in $D_1$. We do this by constructing a new path from $j$ to $s'$ (such a path exists since $\clG_1$ is connected). 
Denote this path by 
$$j=j'_1 i'_1 j'_2 i'_2 \ldots j'_{n-1} i'_{n-1} j'_n = s'.$$ 

Take $\epsy_1$ and $\epsy_2 =  (\delta + \delta_+)/{\pcm}$ such that  $0 < \epsy_1 + \epsy_2\pcm  < \gamma$, where $\gamma$ is defined by (\ref{eq:gamma}). 
Consider a new network problem $\clN_1'$ in which the capacity of arc $(a,i)$ is increased by $\epsy_1$ and $(j,f)$ is increased by $\epsy_1 + \epsy_2\pcm$. Define a new flow $F_1'$ in two steps: 
\begin{itemize}
\item Add  $\epsy_1 (1, -1, 1, \ldots, 1)$ on a path from $j$ to $i$ to the flow $F_1$, 
$$
\begin{aligned}
\hat{F_1}(i_k,j_k) &= F_1(i_k, j_k) + \epsy_1, 1 \leq k \leq m
\\
\textrm{ \it and }  \quad\hat{F_1}(i_k,j_{k+1}) &= F_1(i_k, j_{k+1}) - \epsy_1, 1 \leq k \leq m-1.
\end{aligned}
$$
For the other entries, $\hat{F_1}(e) = F_1(e)$.
\item 
Add  $\epsy_2\pcm  (1, -1, 1, \ldots, -1)$ on a path from $j$ to $s$ to the flow $\hat{F_1}$, 
$$
\begin{aligned}
F_1'(i'_k,j'_k) &=\hat{F_1}(i'_k, j'_k) + \epsy_2\pcm , 1 \leq k \leq n-1
\\
\textrm{\it and } \quad F_1'(i'_{k-1},j'_{k})& = \hat{F_1}(i'_{k-1}, j'_{k}) - \epsy_2\pcm , 2 \leq k \leq n.
\end{aligned}
$$
For the other entries, $F_1'(e) = \hat{F_1}(e)$. 
\end{itemize}

The modified matching probability vectors are defined as follows: 
\begin{itemize}
\item For all $d \in D_1$ such that $d \neq i$, compute $\hat{p}^{(d)}$ using (\ref{e:pi}) for the modified flow $F'_1$. 
\item For all $s \in S_1$ such that 
$s \not\in \{j, s'\}$, 
compute $\hat{p}^{(s')}$ using (\ref{e:pj}) for the modified flow $F'_1$. 
\item For $s'$, compute $\hat{p}^{(s)}$ using (\ref{e:pj}) for the modified flow $F'_1$ and $\alpha'_{s'} = \alpha_{s'} - \epsy_2\pcm$. 
\item For $i$ and $j$ use the basic flow $F_1$: $\hat{p}^{(i)} = p^{(i)}$ and $\hat{p}^{(j)} = p^{(j)}$. 
\end{itemize}
A modified randomized policy can now be defined as follows:
\begin{itemize}
\item 
An item $i$ is matched with an item $j_m$, if $q_{j_m} > 0$.  
\item Queue $j$ is used in both paths, thus the first step is to determine the path to be considered:
\begin{itemize} 
\item With probability $\frac{\epsy_1}{\epsy_1 + \epsy_2 \pcm}$,   an item $j$ is matched with an item $i_1$, if $q_{i_1} > 0$. 
\item With probability $\frac{\epsy_2 \pcm}{\epsy_1 + \epsy_2 \pcm}$, an item $j$ is matched with an item $i'_1$,  if $q_{i'_1} > 0$.  
\end{itemize}
\item If the new arrivals are $d'$ and $s'$, then with probability $\epsy_2$ they are matched together. Otherwise $s'$ considered as an usual new arrival for the network $\clN_1$, and it chooses its match according to  $\hat{p}^{(s')}$. 
\item New arrivals choose their potential match independantly, according to the matching vectors $\hat{p}$. It may be possible that the chosen queue is already empty after the first two steps. In that case the new arrival is stored in the buffer. 
\end{itemize}
Then, using similar arguments as in the proof of \Lemma{t:RandomizedHeavyDrift}, the conclusions of the lemma hold for 
$k=i$, $m=j$, $\epsy_0 = \epsy_1$, $\epsy'_0 = c_i \epsy_1 + c_j (\epsy_1 + \epsy_2 \pcm - \delta) > 0$, 
for $\delta$ small enough.

\medskip

{\it Case (b):}
In this case,  we modify the basic flow to increase the matching rate of class $j$ above its arrival rate. At the same time, we decrease the matching rate of $j_0$ (a queue with $c_{j_0} = \barcmS$), using a path connecting $j$ to $j_0$ using edges in $\IEdge_1$: 
$$j=j_1 i_1 j_2 i_2 \ldots j_{n-1} i_{n-1} j_n = j_0.$$ 

As in case (a), 
the newly arrived items $d'$ and $s'$ are matched with some probability $\epsy_4$. 
To compensate for the decrease of the matching rate of class $s'$ available for items in $D_1$, we use a path from $j$ to $s'$:  
$$j=j''_1 i''_1 j''_2 i''_2 \ldots j''_{m-1} i''_{m-1} j''_m = s'.$$ 

Take $\epsy_3$ and $\epsy_4=  (\delta+\delta_+)/\pcm$ such that $0 < \epsy_3 + \epsy_4\pcm  < \gamma$, where $\gamma$ is defined by (\ref{eq:gamma}). 
Consider a new network problem in which the capacity of arc $(j,f)$ is increased by $\epsy_3 + \epsy_4\pcm $ and the capacity of $(j_0,f)$ decreased by $\epsy_3$. Define a new flow $F_1''$ in two steps: 
\begin{itemize}
\item Add  $\epsy_3 (1, -1, 1, \ldots, -1)$ on a path from $j$ to $s$ to the flow $F_1$:
$$
\begin{aligned}
\hat{F_1}(i_k,j_k) &= F_1(i_k, j_k) + \epsy_3, 1 \leq k \leq n-1
\\
\textrm{ and } \quad \hat{F_1}(i_{k-1},j_{k}) &= F_1(i_{k-1}, j_{k}) - \epsy_3, 2 \leq k \leq n. 
\end{aligned}
$$
For the other entries, $\hat{F_1}(e) = F_1(e)$. 
\item 
add  $\epsy_4\pcm  (1, -1, 1, \ldots, -1)$ on a path from $j$ to $s'$ to the flow $\hat{F_1}$: 
$$
\begin{aligned}
F_1''(i''_k,j''_k) &= \hat{F_1}(i''_k, j''_k) + \epsy_4\pcm , 1 \leq k \leq m-1
\\
\textrm{ and } \quad F_1''(i''_{k-1},j''_{k}) &= \hat{F_1}(i''_{k-1}, j''_{k}) - \epsy_4\pcm , 2 \leq k \leq m.
\end{aligned}
$$
For the other entries, $F_1'' (e) = \hat{F_1}(e)$. 
\end{itemize}
The rest of the proof is now similar. We define a new randomized policy using $F''$,
and   lemma holds for $k=j$, $m=j_0$, $\epsy_0 = \epsy_3 - \delta$, $\epsy'_0 = (c_j + c_{j_0}) (\epsy_3 - \delta) + c_{j_0} \epsy_4 \pcm > 0$,  
for $\delta$ small enough.
\end{proof}
%

\end{document}